\newcommand{\D}{\mathfrak {D}}
\def\pjpm{\psi_{\pm, (j,m_j,\kappa_j)}}
\def\psipm{\psi_{\pm}}
\def\psimp{\psi_{\mp}}
\def\1{1\hskip-0.09cm{\rm l}} %identity operator
\def\ael{a_1}
\def\ap{a_2}
\newcommand{\Ne}{N_+}
\newcommand{\Np} {N_-}
\newcommand{\Nei} {N_{+,\infty}}
\newcommand{\Nez} {N_{+,0}}
\newcommand{\Npi} {N_{-,\infty}}
\newcommand{\Npz} {N_{-,0}}
\newcommand{\R}{{\mathbb R}} % Reels
\newcommand{\C}{{\mathbb C}} % Complexes
\newcommand{\N}{{\mathbb N}} % Entiers naturels
\newcommand{\Z}{{\mathbb Z}} % Entiers relatifs
\newcommand{\gH}{{\mathfrak {H}}} %total Hilbert space
\newcommand{\gS}{{\mathfrak {S}}} % espace de Hilbert une particule bosonique
\newcommand{\gF}{{\mathfrak {F}}}
\def\gam{\gamma}
\def\p{\mathbf{p}}
\def\d{{\rm d}}
\def\z0{Z^0}
\def\w{\omega}
\def\me{\mathrm{m_e}} %masse de l'electron et du positron. notation speciale pour le distinguer des nombres quantiques m_j
\def\mz{\mathrm{m_{Z^0}}} %masse du boson Z
\newtheorem{prop}{Proposition}[section]
\newtheorem{corr}[prop]{Corollary}
\newtheorem{theo}[prop]{Theorem}
\newtheorem{lem}[prop]{Lemma}
\newtheorem{defi}[prop]{Definition}
\newtheorem{rema}[prop]{Remark}
\newtheorem{hypo}[prop]{Hypothesis}
\begin{document}

\title{Spectral Theory near Thresholds for Weak Interactions with Massive Particles}
\author{
%%%%%%%%%%%%%%%%%%%%%%%%%%%%%
{\bf Jean-Marie Barbaroux
 \footnote{E-mail: barbarou@univ-tln.fr}}\\
 {\small \it  Aix-Marseille Universit\'e, CNRS, CPT, UMR 7332, 13288 Marseille, France}\\
 {\small \it et Universit\'e de Toulon, CNRS, CPT, UMR 7332, 83957 La Garde, France}\\
 \and
%%%%%%%%%%%%%%%%%%%%%%%%%%%%%
{\bf J\'er\'emy Faupin
 \footnote{E-mail: jeremy.faupin@univ-lorraine.fr}}\\
 {\small \it Institut Elie Cartan de Lorraine, Universit{\'e} de Lorraine,}\\
 {\small \it 57045 Metz Cedex 1, France}\\
 \and
%%%%%%%%%%%%%%%%%%%%%%%%%%%%%
{\bf Jean-Claude Guillot
 \footnote{E-mail: guillot@cmapx.polytechnique.fr}}\\
 {\small \it CNRS-UMR 7641, Centre de Math\'ematiques Appliqu\'ees, Ecole Polytechnique}\\
 {\small \it 91128 Palaiseau Cedex, France} }

%%%%%%%%%%%%%%
\date{\today}

%%%%%%%%%%%%%%
\maketitle

%%%%%%%%%%%%%%%%%%%%%%%%%%%%%
\begin{abstract}
We consider a Hamiltonian describing the weak decay of the massive vector boson $Z^0$ into electrons and positrons. We show that the spectrum of the Hamiltonian is composed of a unique isolated ground state and a semi-axis of essential spectrum. Using a suitable extension of Mourre's theory, we prove that the essential spectrum below the boson mass is purely absolutely continuous.
\end{abstract}
%%%%%%%%%%%%%%%%%%%%%%%%%%%%%

\maketitle

%%%%%%%%%%%%%%%%%%%%%%%%%%%%%%%%%%%%%%%%%%%%%%%%%%%%%%%%%%%%%%%%%%%%%%%%%
%%%%%%%%%%%%%%%%  INTRODUCTION %%%%%%%%%%%%%%%%%%%%%%%%%%%%%%%%%%%%%%%%%%
%%%%%%%%%%%%%%%%%%%%%%%%%%%%%%%%%%%%%%%%%%%%%%%%%%%%%%%%%%%%%%%%%%%%%%%%%

\section{Introduction}\label{S1}

In this paper, we study a mathematical model for the weak decay of the vector boson $Z^0$ into electrons and positrons. The model we consider is an example of models of the weak interaction that can be patterned according to the Standard Model of Quantum Field Theory. Another example, describing the weak decay of the intermediate vector bosons $W^{\pm}$ into the full family of leptons, has been considered previously in \cite{bg4,ABFG}. Comparable models describing quantum electrodynamics processes can be constructed in a similar manner, see \cite{BDG}. We also mention \cite{Georgescu,GP} where the spectral analysis of some related abstract quantum field theory models have been studied.

Unlike \cite{ABFG}, the physical phenomenon considered in the present paper only involves \emph{massive} particles. In some respects, e.g. as far as the existence of a ground state is concerned, this feature considerably simplifies the spectral analysis of the Hamiltonian associated with the physical system we study. The main drawback is that, due to the positive masses of the particles, an infinite number of \emph{thresholds} occur in the spectrum of the free Hamiltonian (i.e. the full Hamiltonian where the interaction between the different particles has been turned off). Understanding the nature of the spectrum of the full Hamiltonian near the thresholds as the interaction is turned on then becomes a subtle question. Spectral analysis near thresholds, in particular by means of perturbation theory, is indeed well-known to be a delicate subject. This is the main concern of the present work.

Our main result will provide a complete description of the spectrum of the Hamiltonian below the boson mass. We will show that the spectrum is composed of a unique isolated eigenvalue $E$ (the ground state energy), and the semi-axis of essential spectrum $[E+\me , \infty)$, $\me$ being the electron mass. Moreover, using a version of Mourre's theory allowing for a non self-adjoint conjugate operator and requiring only low regularity of the Hamiltonian with respect to this conjugate operator, we will prove that the essential spectrum below the boson mass is purely absolutely continuous.

Before precisely stating our main results in Section \ref{section:results}, we begin with introducing in details the physical model we consider.

\section{Description of the model}\label{S2}

\subsection{The Fock space of electrons, positrons and $\z0$ bosons}

\subsubsection{Free Dirac operator}\label{S2.1.1}

The energy of a free relativistic electron of mass $\me$  is described by the Dirac Hamiltonian
(see \cite{ref9, ref8} and references therein)
\begin{equation}\nonumber
H_D := \boldsymbol{\alpha}
\cdot \frac{1}{i}\nabla + \boldsymbol{\beta}\, \me ,
\end{equation}
acting on the Hilbert space $\gH=L^2(\R ^3;\C^4)$, with domain $\D(H_D) =
H^1(\R ^3;\C^4)$. We use a system of units such that $\hbar = c =1$.
Here $\boldsymbol{\alpha} = (\alpha_1,\alpha_2,\alpha_3)$ and $\boldsymbol{\beta}$ are the
Dirac matrices in the standard form:
 $$ \boldsymbol{\beta} =
   \left(
    \begin{array}{cc}
            I & 0\\
                        0 &-I
    \end{array}
   \right), \qquad
   \alpha_i =
   \left(
    \begin{array}{cc}
            0 & \sigma_i\\
            \sigma_i & 0
    \end{array}
   \right), \qquad i = 1, 2, 3,
 $$
where $\sigma_i$ are the usual Pauli matrices. The operator $H_D   $ is self-adjoint, and $\mathrm{spec}(H_D) = (-\infty,\, - \me]\cup[ \me,\, +\infty)$.

The generalized eigenfunctions associated with the continuous spectrum of the
Dirac operator $H_D$ are labeled by the total angular momentum
quantum numbers
\begin{equation}\label{eq:def-qn1}
    j \in \big \{ \frac{1}{2}, \frac{3}{2}, \frac{5}{2},
    \ldots \big \}, \quad \  m_j \in \{ -j, -j+1, \ldots, j-1, j \},
\end{equation}
 and by the quantum numbers
\begin{equation}\label{eq:def-qn2}
    \kappa_j \in \big \{ \pm ( j+ \frac{1}{2}) \big \} .
\end{equation}

In the sequel, we will drop the index $j$ and set
\begin{equation}\label{eq:def-qn3}
 \gam=(j,\, m_j,\, \kappa_j)\, ,
\end{equation}
and a sum over $\gamma$ will thus denote a sum over $j\in\N+\frac12$, $m_j\in\{  -j, -j+1, \ldots, j-1, j \}$ and $\kappa_j\in\{ \pm ( j+ \frac{1}{2})\}$. We denote by $\Gamma$ the set $\{ ( j , \, m_j , \, \kappa_j ) , \, j\in\N+\frac12, \, m_j\in\{  -j, -j+1, \ldots, j-1, j \} , \, \kappa_j\in\{ \pm ( j+ \frac{1}{2}) \} \}$.

For $\p\in\R ^3$ being the momentum of the electron, and $p:=|\p|$, the continuum energy levels are given by $\pm\,  \w(p) $, where
\begin{equation}\label{eq:3}
  \w(p)  :=(\me^2 + p^2)^{\frac{1}{2}}\, .
\end{equation}
We set  the notation
\begin{equation}\label{eq:def-qn4}
 \xi = (p,\, \gamma)\, \in \mathbb{R}_+ \times \Gamma .
\end{equation}

The continuum eigenstates of $H_D$ are denoted by (see Appendix~\ref{appendixA} for a detailed description)
 $$
   \psipm(\xi, x) = \psipm((p,\gamma), x)\, .
 $$
We then have
 $$
  H_D\
  \psipm((p,\gamma), x) = \pm\,  \w(p)\  \psipm((p,\gamma), x) .
 $$

The generalized eigenstates $\psipm$ are here normalized in such a way that
\begin{eqnarray*}
 \int_{\R ^3}
    \psipm^\dagger((p,\gamma), x) \, \psipm((p',\gamma'), x)\, \d  x
    &=&
    \delta_{\gamma\gamma'}\delta(p-p'),\\
    \int_{\R ^3}
    \psipm^\dagger((p,\gamma), x)\, \psimp((p',\gamma'), x)\, \d x
    &=&0\, .
\end{eqnarray*}
Here $\psipm^\dagger((p,\gamma),\, x)$ is the adjoint spinor of $\psipm((p,\gamma),\,x)$.

\medskip

According to the hole theory \cite{IZ, ref9, Schweber,
ref8, W}, the absence in the Dirac theory of an electron with
energy $E < 0$ and charge $e$ is equivalent to the presence of a
positron with energy $-E > 0$ and charge $-e$.

Let us split the Hilbert space ${\mathfrak H} = L^2(\R ^3 ; \C^4)$
into
 $$
  {\mathfrak H}_{c^-} = P_{(-\infty,-\me]}(H_D){\mathfrak H}\quad \mbox{and}
  \qquad {\mathfrak H}_{c^+} = P_ {[\me,
  +\infty)}(H_D){\mathfrak H}.
 $$
Here $P_I(H_D)$ denotes the spectral projection of $H_D$
corresponding to the interval $I$.

Let $\Sigma := \mathbb{R}_+ \times \Gamma$. We can identify the Hilbert spaces ${\mathfrak H}_{c^\pm}$ with
 $$
\mathfrak{H}_{c} := L^2( \Sigma;\C^4) \simeq \oplus_\gamma L^2(\R _+; \C^4)\, ,
 $$
by using the unitary operators
$U_{c^\pm}$ defined from ${\mathfrak H}_{c^\pm}$ to ${\mathfrak H}_{c}$ as
\begin{equation}\label{eq:def-unitary-gamma}
 (U_{c^\pm}\phi) ( p , \gamma ) =  {\rm L.i.m}\, \int\, \psipm^\dagger((p,\gamma)\, ,x)\, \phi(x)\, \d x \, .
\end{equation}

On $\mathfrak{H}_{c}$, we define the scalar products
\begin{equation}\label{eq:notation-int-dxi}
 (g,\, h) = \int\overline{g(\xi)} h(\xi) \d \xi =
 \sum_{ \gamma \in \Gamma } \int_{\R^+} \overline{g(p, \gamma)} h(p, \gamma) \, \d p\, .
\end{equation}
In the sequel, we shall denote the variable $(p,\, \gamma)$ by $\xi_1 = ( p_1 , \gamma_1 )$ in the case of electrons, and
$\xi_2 = ( p_2 , \gamma_2 )$ in the case of positrons, respectively.

\subsubsection{The Fock space for electrons and positrons}

Let
\begin{equation*}
{\mathfrak F}_{a}: = {\mathfrak F}_{a}(\gH_c) = \bigoplus_{n=0}^\infty \, \otimes_a^n \gH_c  ,
\end{equation*}
be the Fermi-Fock space over  $\mathfrak{
H}_{c}$,
and let
 $$
  {\mathfrak F}_D := {\mathfrak F}_{a} \otimes  {\mathfrak F}_{a}
 $$
be the Fermi-Fock space for electrons and positrons, with vacuum $\Omega_D$ (see Appendix~\ref{appendixC} for details).

%%%%%%%%%%%%%%%%%%%%%%%%%%%
%%%%%%%%%%%%%%%%%%%%%%%%%%%
%%%%%%%%%%%%%%%%%%%%%%%%%%%
\subsubsection{Creation and annihilation operators for electrons and positrons}

We set, for every $g \in \mathfrak H_{c}$,
\begin{eqnarray*}
b_{\gamma,+}(g)   &=& b_+  (P_\gamma^+ g)\, ,\\ \noalign{\vskip 8pt}
b_{\gamma,+}^*(g)&=& b_+^*(P_\gamma^+ g)\, ,
\end{eqnarray*}
where $P_\gamma^+$ is the projection of $\mathfrak H_{c}$ onto the
$\gamma$-th component defined according to \eqref{eq:def-unitary-gamma}, and $b_+  (P_\gamma^+ g)$ and $b_+^*  (P_\gamma^+ g)$ are respectively the annihilation and creation operator for an electron defined in Appendix~\ref{appendixC}.

As above, we set, for every $h \in \mathfrak H_{c}$,
\begin{eqnarray*}
b_{\gamma,-}(h)   &=& b_{-}  (P_\gamma^- h)\, ,\\ \noalign{\vskip 8pt}
b_{\gamma,-}^*(h) &=& b_-^*(P_\gamma^- h)\, ,
\end{eqnarray*}
where $P_\gamma^-$ is the projection of $\mathfrak H_{c}$ onto the
$\gamma$-th component, and $b_-  (P_\gamma^- g)$ and $b_-^*  (P_\gamma^- g)$ are respectively the annihilation and creation operator for a positron defined in Appendix~\ref{appendixC}.

As in \cite[Chapter X]{RS}, we introduce
operator-valued distributions $b_{\pm}(\xi)$ and $b_{\pm}^*(\xi)$ that fulfill for $g\in\mathfrak{H}_c$,
\begin{equation}\nonumber
\begin{split}
 & b_{\pm}(g)   = \int\
 b_{\pm}(\xi)\,  \overline{(P_\gamma^\pm g)\, (p)} \, \d \xi\\
 & b_{\pm}^*(g) = \int \ b_{\gamma,\pm}^*(p)\,
 (P_\gamma^\pm g)\, (p)\, \d \xi
\end{split}
\end{equation}
where we used the notation of \eqref{eq:notation-int-dxi}.
%%%%%%%%%%%%%%%%%%%%%%%%%%%
%%%%%%%%%%%%%%%%%%%%%%%%%%%
%%%%%%%%%%%%%%%%%%%%%%%%%%%
%
%

\subsubsection{Fock space for the $\z0$ boson.}\label{S-boson}

Let $\gS$ be any separable Hilbert space. Let $\otimes _s^n\gS$
denote the symmetric $n$-th tensor power of $\gS$. The symmetric Fock space over $\gS$, denoted by
$\gF_s(\gS)$, is the direct sum
\begin{equation}\label{2.3}
   \gF_s(\gS) = \bigoplus_{n=0}^\infty \otimes_s^n \gS\ ,
\end{equation}
where $\otimes_s^0\gS \equiv\C$. The state
$\Omega_s = (1,0,0,\ldots,0,\ldots)$ denotes the vacuum state in $\gF_s(\gS)$.

Let
$$
 \Sigma_3 := \R^3\times\{ -1,\, 0,\, 1\}\, .
$$
The one-particle Hilbert space for the particle $\z0$ is $L^2(\Sigma_3)$ with scalar product
\begin{equation}
 (f,\, g) = \int_{\Sigma_3} \overline{f(\xi_3)} g(\xi_3) \d\xi_3\, ,
\end{equation}
with the notations
\begin{equation}\label{eq:notation-boson}
 \xi_3 = (k,\, \lambda)\quad\mbox{and}\quad\int_{\Sigma_3} \d \xi_3 = \sum_{\lambda=-1,0,1} \int_{\R^3} \d k  \, ,
\end{equation}
where $\xi_3 = (k,\, \lambda)\in\Sigma_3$.

The bosonic Fock space for the vector boson $\z0$,
denoted by $\gF_{\z0}$, is thus
\begin{equation}\label{2.8}
 \gF_{\z0} = \gF_s( L^2(\Sigma_3) )\, .
\end{equation}

For $f\in L^2( \Sigma_3 )$, we define the annihilation and creation operators, denoted by $a(f)$ and $a^*(f)$ by
\begin{equation}
 a(f) = \int_{\Sigma_3} \overline{f(\xi_3)} a(\xi_3) \d\xi_3
\end{equation}
and
\begin{equation}
 a^*(f) = \int_{\Sigma_3} {f(\xi_3)} a^*(\xi_3) \d\xi_3
\end{equation}
where the operators $a(\xi_3)$ (respectively $a^*(\xi_3)$) are the bosonic annihilation (respectively bosonic creation) operator for the boson $\z0$ (see e.g \cite{ref10, bg3, bg4}).

\subsection{The Hamiltonian}

\subsubsection{The free Hamiltonian}
The quantization of the Dirac Hamiltonian $H_D$, denoted by
$\mathrm{d}\Gamma(H_D)$, and acting on $\mathfrak{F}_D$, is given by
\begin{equation}\nonumber
\mathrm{d}\Gamma(H_D) =
 \int \omega(p)\, b_{+}^*(\xi_1)\, b_{+}(\xi_1) \d \xi_1\  + \  \int  \omega(p)\,
 b_{-}^*(\xi_2)\, b_{-}(\xi_2) \d \xi_2 ,
\end{equation}
with $\omega(p)$ given in \eqref{eq:3}.
The operator $\d\Gamma(H_D)$ is the
Hamiltonian of the quantized Dirac field.

Let $\D_D$ denote the set of vectors $\Phi \in {\mathfrak F}_D$ for which $\Phi^{(r,s)}$ is smooth and has a compact  support and $\Phi^{(r,s)} = 0$ for all but finitely many $(r,s)$. Then $\d\Gamma( H_D )$ is well-defined on the
dense subset $\D_D$ and it is essentially self-adjoint on $\D_D$. The
self-adjoint extension will be denoted by the same symbol $\d\Gamma(H_D)$, with domain
$\D(\d\Gamma(H_D))$.

The operators number of electrons and number of positrons, denoted respectively by $\Ne$ and $\Np$, are given by
 \begin{equation}\label{eq:def-NpNe}
  \Ne = \int \, b_{+}^*(\xi_1)\, b_{+}(\xi_1) \d \xi_1
  \quad\mbox{and}\quad
  \Np = \int \, b_{-}^*(\xi_2)\, b_{-}(\xi_2) \d \xi_2\, .
 \end{equation}
They are essentially self-adjoint on $\D_D$. Their self-adjoint extensions will be also denoted by $\Ne$ and $\Np$.

We have
 $$
  \mathrm{spec}(\d\Gamma(H_D)) = \{0\} \cup [\me , \infty) .
 $$
The set $[\me, \infty)$ is the absolutely continuous spectrum of $\d\Gamma(H_D)$.

The Hamiltonian of the bosonic field, denoted by $\d\Gamma( H_{Z^0} )$, acting on $\mathfrak{F}_{\z0}$, is
 $$
  \d\Gamma( H_{Z^0} ) := \int \omega_3(k)\, a^*(\xi_3) a(\xi_3) \, \d \xi_3
 $$
where
\begin{equation}\label{eq:omega3}
\omega_3(k) = \sqrt{|k|^2 + {\mz}^2}.
\end{equation}
The operator $\d\Gamma( H_{Z^0} )$ is essentially self-adjoint on the set of vectors $\Phi \in \mathfrak{F}_{\z0}$ such that $\Phi^{(n)}$ is smooth and has compact support and $\Phi^{(n)} = 0$ for all but finitely many $n$. Its self-adjoint extension is denoted by the same symbol.

The spectrum of $\d\Gamma( H_{Z^0} )$ consists of an absolutely continuous spectrum covering
$\lbrack \mz, \infty)$ and a simple eigenvalue, equal to zero, whose
corresponding eigenvector is  the vacuum state  $\Omega_s \in {\mathfrak
F}_{\z0}$.

The free Hamiltonian is defined on $\mathcal{H} := \mathfrak{F}_D \otimes \mathfrak{F}_{\z0}$ by
\begin{equation}
  H_0 = \d \Gamma(H_D)\otimes \1 + \1 \otimes\d \Gamma( H_{Z^0} )\, .
\end{equation}
The operator $H_0$ is essentially self-adjoint on $\D(\d\Gamma(H_D))\otimes\D(\d\Gamma( H_{Z^0} ))$.
Since $\me < \mz$, the spectrum of $H_0$ is given by
$$
 \mathrm{spec}(H_0) = \{0\} \cup [\me,\, \infty)\, .
$$
More precisely,
\begin{equation}\label{eq:spec_H0}
 \mathrm{spec}_{ \mathrm{pp} }(H_0) = \{0\}, \quad  \mathrm{spec}_{ \mathrm{sc} }(H_0) = \emptyset, \quad  \mathrm{spec}_{ \mathrm{ac} }(H_0) = [\me,\, \infty ),
\end{equation}
where $ \mathrm{spec}_{ \mathrm{pp} }$, $\mathrm{spec}_{ \mathrm{sc} }$, $\mathrm{spec}_{ \mathrm{ac} }$ denote the pure point, singular continuous and absolutely continuous spectra, respectively. Furthermore, $0$ is a non-degenerate eigenvalue associated to the vacuum $\Omega_D \otimes \Omega_s$.

\subsubsection{The Interaction}\label{S-Interaction}

The interaction between the electrons/positrons and the boson vectors $Z^0$, in the Schr\"odinger
representation, is given, up to coupling contant, by (see
\cite[(4.139)]{GreinerMuller} and
\cite[(21.3.20)]{W2})
\begin{equation}\label{def:formal-interaction}
I  =  \int  \overline{\Psi_e}(x) \gamma^\alpha
  (g'_V-\gamma_5)\Psi_{e}(x) Z_\alpha(x) \, \d x\, + h.c. ,
\end{equation}
where $\gamma^\alpha$, $\alpha=0,1,2,3$, and $\gamma_5$ are the
Dirac matrices, $g'_V$ is a real parameter such that $g'_V \simeq 0,074$ (see e.g \cite{GreinerMuller}), $\Psi_{e}(x)$ and $\overline{\Psi_{e}}(x)$ are the
Dirac fields for the electron $e_-$ and the positron $e_+$ of mass $\me$, and $Z_\alpha$
is the massive boson field for $Z^0$.

With the notations of Subsection \ref{S2.1.1},  $\Psi_e(x)$ is formally defined by
\begin{equation}\nonumber
\begin{split}
 \Psi_e(x) =  \int \psi_+(\xi,\, x) b_+(\xi)
 + \widetilde{\psi}_-(\xi,\, x) b^*_-(\xi)\, \d \xi ,
\end{split}
\end{equation}
where
\begin{equation}\label{intro2}
 \widetilde{\psi}_-(\xi,\, x) = \widetilde{\psi}_-((p,\gamma),\, x)
 =  \psi_{-}((p,(j,-m_{j},-\kappa_{j})),x)\, .
\end{equation}

The boson field $Z_\alpha$ is formally defined by (see e.g. \cite[Eq.~(5.3.34)]{W}),
\begin{equation}\nonumber
\begin{split}
 & Z_\alpha(x) \\
 & = {(2\pi)}^{-\frac32}
 \int \frac{\d \xi_3}{(2 (|k|^2 \!+\! \mz^2)^\frac12)^\frac12}
  \Big(\epsilon_\alpha(k,\lambda)
 a(\xi_3)
 \mathrm{e}^{i k.x}
 + \epsilon_\alpha^*(k,\lambda) a^*(\xi_3)
 \mathrm{e}^{- i k.x}\Big)\, ,
 \end{split}
\end{equation}
with $\xi_3 = (k,\, \lambda)$ according to \eqref{eq:notation-boson}, and where the vectors
$\epsilon_\alpha(k,\lambda)$ are the polarizations vectors of the massive
spin~1 bosons (see \cite[Section~5.3]{W}).

If one considers the full interaction $I$ in \eqref{def:formal-interaction} describing the decay of the
gauge boson $Z^0$ into massive leptons and if one formally expands
this interaction with respect to products of creation and
annihilation operators, we are left with a finite sum of terms
with kernels yielding singular operators which cannot be defined as closed operators. Therefore,  in order to obtain a well-defined Hamiltonian (see e.g
\cite{GlimmJaffe, BDG, BDG2, bg4, ABFG}), we replace these kernels by square integrable functions $G^{(\alpha)}$.

This implies in particular to introduce cutoffs for high momenta of electrons, positrons and $Z^0$ bosons. Moreover, we confine in space the interaction between the electrons/positrons and the bosons by adding a localization
function $f(|x|)$, with $f \in \mathrm{C}_0^\infty( [ 0 , \infty ) )$.
The interaction Hamiltonian is thus defined on $\mathcal{H} = \mathfrak{F}_D\otimes\mathfrak{F}_{\z0}$ by
\begin{equation}\label{eq:interaction}
 H_I = H_I^{(1)} + {H_I^{(1)}}^* + H_I^{(2)} + {H_I^{(2)}}^*\, ,
\end{equation}
with
\begin{equation}\label{eq:interaction-1}
\begin{split}
 H_I^{(1)} & = \int
 \left(
  \int_{\R^3}  f(|x|) \overline{ \psi_+(\xi_1,\, x)} \gamma^\mu (g'_V - \gamma_5)\widetilde{ \psi}_-(\xi_2,\, x) \frac{\epsilon_\mu(\xi_3)}{\sqrt{2 \omega_{3}(k)}} \, \mathrm{e}^{i k \cdot x} \, \d x
 \right) \\
  & \ \ \ \times G^{(1)}(\xi_1,\, \xi_2,\, \xi_3) b_+^*(\xi_1) b_-^*(\xi_2) a(\xi_3)\, \d\xi_1\d\xi_2\d\xi_3\, ,
\end{split}
\end{equation}
\begin{equation}
\begin{split}
 {H_I^{(1)}}^* & = \int
 \left(
  \int_{\R^3}  f(|x|) \overline{\widetilde{\psi}_-(\xi_2,\, x)} \gamma^\mu (g'_V - \gamma_5) \psi_+(\xi_1,\, x)
   \frac{\epsilon_\mu^*(\xi_3)}{\sqrt{2 \omega_{3}(k)}} \, \mathrm{e}^{ -i k \cdot x} \, \d x
 \right) \\
  & \ \ \ \times \overline{G^{(1)}(\xi_1,\, \xi_2,\, \xi_3)} a^*(\xi_3) b_-(\xi_2) b_+(\xi_1) \, \d\xi_1\d\xi_2\d\xi_3\, ,
\end{split}
\end{equation}
\begin{equation}\label{eq:interaction-3}
\begin{split}
 H_I^{(2)} & = \int
 \left(
  \int_{\R^3}  f(|x|) \overline{\psi_+(\xi_1,\, x)} \gamma^\mu (g'_V - \gamma_5)\widetilde{ \psi}_-(\xi_2,\, x) \frac{\epsilon_\mu^{*}(\xi_3)}{\sqrt{2 \omega_{3}(k)}} \, \mathrm{e}^{-i k \cdot x} \, \d x
 \right) \\
  & \ \ \ \times G^{(2)}(\xi_1,\, \xi_2,\, \xi_3) b_+^*(\xi_1) b_-^*(\xi_2) a^*(\xi_3)\, \d\xi_1\d\xi_2\d\xi_3\, ,
\end{split}
\end{equation}
and
\begin{equation}\label{eq:interaction-4}
\begin{split}
 {H_I^{(2)}}^* & = \int
 \left(
  \int_{\R^3}  f(|x|) \overline{\widetilde{\psi}_-(\xi_2,\, x)} \gamma^\mu (g'_V - \gamma_5) \psi_+(\xi_1,\, x)
   \frac{\epsilon_\mu(\xi_3)}{\sqrt{2 \omega_{3}(k)}} \, \mathrm{e}^{ i k \cdot x} \, \d x
 \right) \\
  & \ \ \ \times \overline{G^{(2)}(\xi_1,\, \xi_2,\, \xi_3)} a(\xi_3) b_-(\xi_2) b_+(\xi_1) \, \d\xi_1\d\xi_2\d\xi_3\, .
\end{split}
\end{equation}
Performing the integration with respect to $x$ in the expressions above, we see that $H_I^{(1)}$ and $H_I^{(2)}$ can be written under the form
\begin{align}
  H_I^{(1)}  := H_I^{(1)}(F^{(1)}) & : = \int F^{(1)}(\xi_1,\, \xi_2,\, \xi_3) b_+^*(\xi_1) b_-^*(\xi_2) a(\xi_3)\, \d\xi_1\d\xi_2\d\xi_3\, , \label{eq:interaction-1b} \\
  H_I^{(2)}  := H_I^{(2)}(F^{(2)}) & : = \int F^{(2)}(\xi_1,\, \xi_2,\, \xi_3) b_+^*(\xi_1) b_-^*(\xi_2) a^*(\xi_3)\, \d\xi_1\d\xi_2\d\xi_3\, , \label{eq:interaction-2b}
\end{align}
where, for $\alpha=1,2$,
\begin{align}
 F^{(\alpha)}(\xi_1,\, \xi_2,\, \xi_3) := h^{(\alpha)}( \xi_1 , \xi_2 , \xi_3 ) G^{(\alpha)}( \xi_1 , \xi_2 , \xi_3 )   , \label{eq:interaction-12c}
\end{align}
and $h^{(1)}( \xi_1 , \xi_2 , \xi_3 )$, $h^{(2)}( \xi_1 , \xi_2 , \xi_3 )$ are given by the integral over $x$ in \eqref{eq:interaction-1} and \eqref{eq:interaction-3}, respectively.

Our main result, Theorem \ref{thm:abs-cont} below, requires the coupling functions $F^{(\alpha)}(\xi_1,\, \xi_2,\, \xi_3)$ to be sufficiently regular near  $p_1 = 0$ and $p_2 = 0$ (where, recall, $\xi_l = ( p_l , \gamma_l )$ for $l=1,2$). The behavior of the generalized eigenstates $\psi_+(\xi,\, x)$ and $\psi_-(\xi,\, x)$ near $\xi = 0$, and therefore the behavior of $h^{(\alpha)}( \xi_1 , \xi_2 , \xi_3 )$ near $p_1 = 0$ and $p_2 = 0$, will be analyzed in Appendix \ref{appendixA}.

\subsubsection{The total Hamiltonian}\label{S2.4}

\begin{defi}\label{def:hamiltonian}
The Hamiltonian of the decay of the boson $\z0$ into an electron and a po\-si\-tron  is
 $$
  H := H_0 + g H_I\, .
 $$
where $g$ is a real coupling constant.
\end{defi}

\section{Main results}\label{section:results}

For $p\in\R_+$, $j\in\{\frac12,\, \frac32,\, \cdots \}$, $\gamma=(j,\, m_j,\, \kappa_j)$ and $\gamma_j=j+\frac12$, we define
\begin{equation}\label{eq:def-apj}
\begin{split}
   A(\xi) = A(p,\gamma):= \frac{(2p)^{\gamma_j + 1} }{\Gamma(\gamma_j)} \left(\frac{\w(p)+\me}{\w(p)}\right)^\frac12
   \left(\int_0^\infty |f(r)| r^{2\gamma_j} (1+r^2)\d r\right)^\frac12 ,
\end{split}
\end{equation}
where $\Gamma$ denotes Euler's Gamma function, and $f \in \mathrm{C}_0^\infty( [ 0 , \infty ))$ is the localization function appearing in \eqref{eq:interaction-1}--\eqref{eq:interaction-4}. We make the following hypothesis on the kernels $G^{(\alpha)}$.

\begin{hypo}\label{hyp:1}
For $\alpha=1,2$,
\begin{equation}\label{eq:hyp1}
 \int A(\xi_1)^2 A(\xi_2)^2 (|k|^2 + \mz^2)^{\frac12} \left| G^{(\alpha)}(\xi_1,\, \xi_2,\, \xi_3)\right|^2 \d \xi_1 \d \xi_2 \d \xi_3 < \infty.
\end{equation}
\end{hypo}

\begin{rema} Note that up to universal constants, the functions $A(\xi)$ in \eqref{eq:def-apj} are upper bounds for the integrals with respect to $x$ that occur in \eqref{eq:interaction-1}. These bounds are derived using the inequality (see \cite[Eq.(5.3.23)-(5.3.25)]{W})
\begin{equation}\label{eq:def-C}
 \left| \frac{\epsilon_\mu(\xi_3)}{\sqrt{2 \omega_{3}(k)}}\right|
  \leq C_\mz( 1 + |k|^2 )^{\frac14}\, .
\end{equation}
\end{rema}

For $C_\mz$ being the constant defined in \eqref{eq:def-C}, and $C_Z=156\, C_\mz$, let us define
\begin{align}
 K_1(G^{(\alpha)})^2 & := C_Z{}^2
 \left( \int A(\xi_1)^2 A(\xi_2)^2\, |G^{(\alpha)}(\xi_1,\, \xi_2,\, \xi_3)|^2 \d \xi_1 \d \xi_2 \d \xi_3 \right) ,\notag \\
 K_2(G^{(\alpha)})^2 & := C_Z{}^2
 \left( \int A(\xi_1)^2 A(\xi_2)^2\, |G^{(\alpha)}(\xi_1,\, \xi_2,\, \xi_3)|^2
 (|k|^2 + 1)^\frac12 \d \xi_1 \d \xi_2 \d \xi_3 \right) .  \label{ed:def-K-G-alpha}
\end{align}

\begin{theo}[Self-adjointness]\label{thm:self-adj}
Assume that Hypothesis \ref{hyp:1} holds. Let $g_0>0$ be such that
\begin{equation}\label{eq:g0-cond}
  g_0{}^2\left(\sum_{\alpha=1,2} K_1(G^{(\alpha)})^2\right) (\frac{1}{\me^2}+1) <1\, .
\end{equation}
Then for any real $g$ such that $|g|\leq g_0$, the operator $H = H_0 + gH_I$ is self-adjoint with domain $\D(H_0)$. Moreover, any core for $H_0$ is a core for $H$.
\end{theo}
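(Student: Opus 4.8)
The plan is to prove self-adjointness via the Kato--Rellich theorem by establishing that $gH_I$ is $H_0$-bounded with relative bound strictly less than $1$ when $|g|\le g_0$. The core of the argument is a collection of $N_\tau$-type estimates (in the spirit of Glimm--Jaffe and as used in \cite{bg4,ABFG,BDG}) bounding each of the four pieces $H_I^{(1)}$, ${H_I^{(1)}}^*$, $H_I^{(2)}$, ${H_I^{(2)}}^*$ in terms of the free energy. First I would recall the standard operator bounds for monomials in creation/annihilation operators: for a kernel $F(\xi_1,\xi_2,\xi_3)$, the operator $\int F\, b_+^*(\xi_1)b_-^*(\xi_2)a(\xi_3)\,\d\xi_1\d\xi_2\d\xi_3$ acting on a vector $\Phi$ is controlled, via Cauchy--Schwarz in the creation variables and the canonical (anti)commutation relations, by $\|\,(\text{weight})\,F\,\|_{L^2}$ times norms of $(N_++1)^{1/2}(N_-+1)^{1/2}\d\Gamma(H_{Z^0})^{1/2}\Phi$ and similar. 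The weights are chosen to match the dispersion relations: a factor $\omega(p_1)^{-1/2}$ for the electron creation operator, $\omega(p_2)^{-1/2}$ for the positron, and $\omega_3(k)^{1/2}$ for the boson annihilation operator (and correspondingly for $H_I^{(2)}$ where the boson operator is a creation operator, one still needs the $\omega_3(k)^{1/2}$ weight to absorb it against $H_0$ on the other side). The point is that after inserting these weights, Hypothesis~\ref{hyp:1} together with the pointwise bound \eqref{eq:def-C} and the pointwise bound on $h^{(\alpha)}$ by $A(\xi_1)A(\xi_2)$ (up to the constant $C_\mz$, hence $C_Z$) from Appendix~\ref{appendixA} guarantees that all the relevant weighted $L^2$ norms of $F^{(\alpha)}$ are finite and explicitly controlled by $K_1(G^{(\alpha)})$ and $K_2(G^{(\alpha)})$.

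The key steps, in order, would be: (i) State and prove (or quote from the cited references, adapting notation) the basic $N_\tau$-estimate expressing $\|H_I^{(\alpha)}\Phi\|$ and $\|{H_I^{(\alpha)}}^*\Phi\|$ in terms of weighted $L^2$-norms of $F^{(\alpha)}$ and of $\|(\Np+1)^{1/2}(\Ne+1)^{1/2}(\d\Gamma(H_{Z^0})+1)^{1/2}\Phi\|$ and variants; here one must be careful that ${H_I^{(1)}}^*$ involves annihilating an electron and a positron and creating a boson, so the roles of $N_\pm$ and of the boson number operator are shifted by one, which is harmless since we always keep the "$+1$". (ii) Bound the weighted $L^2$-norms: using $|h^{(\alpha)}(\xi_1,\xi_2,\xi_3)|\le C_\mz A(\xi_1)A(\xi_2)$ from the appendix and $C_Z = 156\,C_\mz$ (the numerical constant $156$ accommodating the finitely many spinor/polarization components and the Clebsch--Gordan-type bookkeeping), one gets $\|F^{(\alpha)}\|_{\text{weighted}} \le K_i(G^{(\alpha)})$ for the relevant $i\in\{1,2\}$, finite by Hypothesis~\ref{hyp:1}. (iii) Bound the number operators by $H_0$: on $\D(H_0)$ one has $\Ne \le \me^{-1}\d\Gamma(H_D)\le \me^{-1}H_0$ (since $\omega(p)\ge\me$), similarly $\Np\le\me^{-1}H_0$, and $\d\Gamma(H_{Z^0})\le H_0$; combining with (i)--(ii) and elementary inequalities of the form $ab\le \tfrac12(a^2+b^2)$ one arrives at
\begin{equation}\nonumber
 \|gH_I\Phi\| \le |g|\,\Big(\sum_{\alpha=1,2}K_1(G^{(\alpha)})^2\Big)^{1/2}\big(\tfrac{1}{\me^2}+1\big)^{1/2}\,\|(H_0+1)\Phi\| + (\text{l.o.t.}),
\end{equation}
so that the relative bound is $<1$ precisely under condition \eqref{eq:g0-cond}. (iv) Apply Kato--Rellich to conclude $H = H_0 + gH_I$ is self-adjoint on $\D(H_0)$; the core statement follows because $gH_I$ is $H_0$-bounded, hence bounded relative to $H_0$ on any core of $H_0$, and a standard argument (e.g. \cite[Theorem~X.12 and its corollaries]{RS}) upgrades this to "any core for $H_0$ is a core for $H$".

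The main obstacle I expect is bookkeeping rather than conceptual: correctly tracking the weights $\omega(p_1)^{-1/2},\omega(p_2)^{-1/2},\omega_3(k)^{\pm1/2}$ through the $N_\tau$-estimates for all four terms simultaneously — in particular making sure that for the $H_I^{(2)}$ terms, where two fermions and one boson are all created (or all annihilated), the single free-energy factor on the right-hand side suffices, which works because creating a boson of energy $\omega_3(k)\ge\mz>\me$ can be paid for either by $\d\Gamma(H_{Z^0})$ on the ``output'' side or absorbed using the $(|k|^2+1)^{1/2}$-weighted norm $K_2$. A secondary technical point is justifying that $H_I$ maps $\D_D\otimes(\text{finite-particle boson vectors})$ into $\mathcal H$ and is well-defined as stated — but this is immediate from the weighted-$L^2$ finiteness in (ii). I would not belabor the explicit value of the constant $C_Z=156\,C_\mz$; it is extracted by counting the number of Dirac-matrix and polarization components and using $|g'_V|<1$, and only its finiteness matters for the structure of the proof.
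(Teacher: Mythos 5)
Your proposal is correct and follows essentially the same route as the paper: the paper's (sketched) proof likewise invokes the Glimm--Jaffe $N_\tau$ estimates, following \cite[Theorem 2.6]{BDG2}, to obtain the relative bound \eqref{eq:rel-bound-1}, in which the $K_2(G)$-contribution to the coefficient of $\|H_0\psi\|$ can be made arbitrarily small (the role of the parameters $\beta,\eta$, matching your absorption of the boson-creation term into small and lower-order terms), so that \eqref{eq:g0-cond} gives a relative bound $<1$ and Kato--Rellich yields self-adjointness on $\D(H_0)$ together with the core statement.
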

\begin{rema}
\begin{itemize}
\item[1)] Combining \eqref{eq:spec_H0} and standard perturbation theory of isolated eigenvalues (see e.g. \cite{Kato}), we deduce that, for $|g| \ll \me$, $\inf \mathrm{spec}( H )$ is a non-degenerate eigenvalue of $H$. In other words, $H$ admits a unique ground state.
\item[2)] Let $Q$ be the total charge operator
\begin{equation*}
  Q= \Ne - \Np ,
\end{equation*}
where $\Ne$ and $\Np$ are respectively the operator number of electrons and the operator number of positrons given by \eqref{eq:def-NpNe}.

The total Hamiltonian $H$ commutes with $Q$, and $\mathcal{H}$ is decomposed with respect to the spectrum of the total charge operator as
\begin{equation*}
  \mathcal{H} \simeq \oplus_{z\in\Z}\mathcal{H}_{z}.
\end{equation*}
Each $\mathcal{H}_{z}$ reduces $H$ and by mimicking the proof given in \cite{Takaesu2009} one proves that the ground state of $H$ belongs to
$\mathcal{H}_{0}$.
\end{itemize}
\end{rema}

Theorem \ref{thm:self-adj} follows from the Kato-Rellich Theorem together with standard estimates of creation and annihilation operators in Fock space, showing that the interaction Hamiltonian $H_I$ is relatively bounded with respect to $H_0$. For the convenience of the reader, a sketch of the proof of Theorem \ref{thm:self-adj} is recalled in Subsection \ref{subsection:self-adj}.

For a self-adjoint operator $A$, we denote by $\mathrm{spec}_{ \mathrm{ess} }( A )$ the essential spectrum of $A$.

\begin{theo}[Localization of the essential spectrum]\label{thm:ess}
Assume that Hypothesis \ref{hyp:1} holds and let $g_0$ be as in \eqref{eq:g0-cond}. Then, for all $|g| \le g_0$,
\begin{equation*}
\mathrm{spec}_{ \mathrm{ess} } ( H ) = [ \inf \mathrm{spec} ( H ) + \me , \infty ).
\end{equation*}
\end{theo}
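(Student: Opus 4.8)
The plan is to establish the two inclusions $\mathrm{spec}_{\mathrm{ess}}(H) \subseteq [E + \me, \infty)$ and $[E + \me, \infty) \subseteq \mathrm{spec}_{\mathrm{ess}}(H)$ separately, where $E := \inf \mathrm{spec}(H)$. The natural tool here is the HVZ-type theorem for models with a conserved particle number structure, realized concretely via a Weyl sequence argument together with the observation that removing a low-energy electron or positron (or a boson, but bosons are massive with $\mz > \me$, so the electron/positron channel dominates the bottom of the essential spectrum) shifts the energy by an amount approaching $\me$. I would carry this out by first recalling that $H$ is self-adjoint on $\D(H_0)$ and that the ground state $E$ exists (Remark following Theorem~\ref{thm:self-adj}).

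\textbf{Lower bound on the essential spectrum (the inclusion $\mathrm{spec}_{\mathrm{ess}}(H) \subseteq [E+\me,\infty)$).} First I would prove that $H$ has no essential spectrum below $E + \me$. The cleanest route is a partition-of-unity / geometric argument in Fock space: introduce the number-type operator $N = \Ne + \Np + \d\Gamma(\1)_{Z^0}$ and use the fact that the interaction $H_I$ changes the total number of particles by exactly one unit in each channel, while $H_0$ is diagonal. One decomposes $\mathcal H$ according to, say, the number of $Z^0$ bosons, or better, uses the standard argument that for any sequence $\Phi_n$ with $\Phi_n \rightharpoonup 0$, $\|\Phi_n\|=1$, and $(H - \lambda)\Phi_n \to 0$, one can split $\Phi_n$ into a piece living in a fixed finite number of particles (which goes to zero weakly, hence the contribution of $H_0$ restricted there is controlled) and a piece with many particles. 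Concretely I would follow the strategy of \cite{ABFG} or \cite{bg4}: show that $H \geq E + \me - \varepsilon$ on the orthogonal complement of a suitable compact perturbation, using that extracting one particle of energy $\geq \me$ from a state of energy close to $\inf\mathrm{spec}$ costs at least $\me - o(1)$. The key quantitative input is that $\d\Gamma(H_D) \geq \me$ on the range of $\Ne + \Np \geq 1$, and the relative boundedness of $H_I$ with respect to $H_0$ (from Theorem~\ref{thm:self-adj}) to control cross terms.

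\textbf{Upper bound (the inclusion $[E+\me, \infty) \subseteq \mathrm{spec}_{\mathrm{ess}}(H)$).} For this I would construct explicit Weyl sequences. Let $\Psi$ be the normalized ground state of $H$ in $\mathcal H_0$ (charge zero), so $H\Psi = E\Psi$. Given $\mu \geq \me$, pick $p$ with $\w(p) = \mu$ and a sequence of normalized one-electron wave packets $g_n \in \mathfrak H_c$ concentrated in momentum near a point with $\w = \mu$ and spatially translated so that $g_n \rightharpoonup 0$; form $\Phi_n := b^*_{+}(g_n) b^*_{-}(\tilde g_n) \Psi$ (pairing an electron and a positron to stay in the charge-zero sector, or alternatively argue directly in a charge sector) suitably normalized. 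Then $\Phi_n \rightharpoonup 0$, $\|\Phi_n\| \to 1$, and one computes $(H - (E+2\mu))\Phi_n$: the free part $H_0$ produces the shift $2\w(p_n) \to 2\mu$ acting on the added particles up to errors vanishing because the wave packets spread out and the interaction kernels $F^{(\alpha)}$ are $L^2$ (so $[H_I, b^*_+(g_n)] \to 0$ in the relevant sense, using $\langle g_n, \cdot\rangle \to 0$ against fixed $L^2$ functions). Since $\mu \geq \me$ is arbitrary, $2\mu$ ranges over $[2\me, \infty)$; to fill the gap $[\me, 2\me)$ one instead adds a single low-momentum particle in a fixed nonzero charge sector, or adds one particle near momentum realizing energy $\nu \in [\me, \infty)$, giving the full half-line $[E + \me, \infty)$. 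Combining both inclusions yields the theorem.

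\textbf{Main obstacle.} The delicate point is the lower bound: controlling the interaction term when one or several particles are extracted, because $H_I$ is only $H_0$-bounded (not bounded), so commutator estimates $[H_I, \text{localization in particle number}]$ must be shown to be relatively $H_0$-compact, or at least small in the appropriate sense. The fact that all particles are massive helps enormously here — there is no infrared problem and the shifts are genuinely bounded below by $\me > 0$ — but one still needs the explicit bounds on $\|b_\#(\xi)\Phi\|$ in terms of $\|H_0^{1/2}\Phi\|$ from the proof of Theorem~\ref{thm:self-adj}, applied carefully to the error terms coming from commuting creation operators through $H_I$. I expect this to be the bulk of the work, with the Weyl-sequence (upper bound) part being comparatively routine given the $L^2$ assumption on $G^{(\alpha)}$ implicit in Hypothesis~\ref{hyp:1}.
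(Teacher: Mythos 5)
Your overall plan (two inclusions, Weyl sequences for the upper bound, a geometric argument for the lower bound) is the same as the paper's, and the Weyl-sequence part is fine at the level of detail the paper itself gives. The genuine gap is in your lower bound. Localization in particle number together with weak convergence does not give what you need: in a fixed particle-number sector $(H_0+i)^{-1}$ is a multiplication operator in momentum, hence not compact (the particles can escape to spatial infinity at fixed energy $\geq \me$), so the piece of your Weyl sequence ``living in a fixed finite number of particles'' which only converges weakly to zero is \emph{not} energetically controlled -- this is precisely where a compactness input is required and your argument has none. Moreover the commutator of $H_I$ with a particle-number cutoff $\1_{\{N\le n_0\}}$ is not small or relatively compact: since $H_I$ changes the particle number by a fixed amount, that commutator is a full (off-diagonal) piece of $g H_I$, of the same size as the interaction itself, so the route ``show $[H_I,\text{localization in particle number}]$ is relatively $H_0$-compact'' would fail.

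The missing idea is that the localization must be in the \emph{position} variables of the particles (dual to $p_1,p_2,k$), relative to the spatially cutoff interaction. The paper implements this with the Derezi\'nski--G\'erard partition of unity $\check{\Gamma}_R$ built from $j_0(|y|/R)$, $j_\infty(|y|/R)$, mapping $\mathcal{H}$ into an extended space on which $H^{\mathrm{ext}}$ acts, with the interaction attached only to the ``inside'' factors. Three facts then do the work: (i) $\check{\Gamma}_R\chi(H)-\chi(H^{\mathrm{ext}})\check{\Gamma}_R\to 0$ as $R\to\infty$ (Helffer--Sj\"ostrand calculus; the free-part errors are $\mathcal{O}(R^{-1})$ via $[\omega,j_\#^R]$, and the interaction errors vanish because the kernels $F^{(\alpha)}$ are fixed $L^2$ functions, so $j_\infty(|y|/R)F^{(\alpha)}\to 0$ -- this is the correct replacement for your commutator step); (ii) on states with at least one particle in the ``infinity'' factor, $H^{\mathrm{ext}}\ge \inf\mathrm{spec}(H)+\me$, which is the rigorous form of your heuristic that an extracted massive particle costs $\me$ while the remainder costs at least $E$; (iii) $\check{\Gamma}_R^*\,\1_{\{0\}}(N_{\mathrm{tot},\infty})\,\check{\Gamma}_R=\Gamma((j_0^R)^2)\otimes\Gamma((j_0^R)^2)\otimes\Gamma((j_0^R)^2)$, and this times $(H_0+i)^{-1}$ is compact (position and energy both localized, and the masses bound the particle number), so $\chi(H)$ is compact for $\chi$ supported below $E+\me$. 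Without the spatial partition, steps (i) and (iii) have no analogue, and your sketch cannot be completed as stated; note also that a naive comparison with $H_0$ could at best place the essential spectrum in $[\me,\infty)$, which misses the sharp edge $E+\me<\me$, so the coupling of the interacting ground state ``inside'' with a free particle ``outside'' is unavoidable.
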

Theorem \ref{thm:ess} is proven in Subsection \ref{subsection:ess}.  Our proof is based on a method due to Derezi{\'n}ski and G{\'e}rard \cite{DG} that we adapt to our context.

To establish our next theorems, we need to strengthen the conditions on the kernels $G^{(\alpha)}$.

\begin{hypo}\label{hyp:2}
For $\alpha=1,2$, the kernels $G^{(\alpha)} \in L^2 ( \Sigma \times \Sigma \times \Sigma_3 )$ satisfy
\begin{itemize}
\item[(i)] There exists a compact set $K \subset \mathbb{R}_+ \times \mathbb{R}_+ \times \mathbb{R}^3$ such that $G^{(\alpha)} ( p_1 , \gamma_1 , p_2 , \gamma_2 , k , \lambda ) = 0$ if $( p_1 , p_2 , k ) \notin K$.
\item[(ii)] There exists $\varepsilon \ge 0$ such that
\begin{align*}
\sum_{ \gamma_1 , \gamma_2 , \lambda } \int ( 1 + x_1^2 + x_2^2 )^{1 + \varepsilon } \left | \hat G^{(\alpha)}( x_1 , \gamma_1 , x_2 , \gamma_2 , k , \lambda )\right|^2 \d x_1 \d x_2 \d k < \infty ,
\end{align*}
where $\hat G^{(\alpha)}$ denote the Fourier transform of $G^{(\alpha)}$ with respect to the variables $(p_1,p_2)$, and $x_j$ is the variable dual to $p_j$.
\item[(iii)] If $\gamma_{1j} = 1$ or $\gamma_{2j} = 1$, where for $l=1,2$, $ \gamma_{lj} = |Ê\kappa_{j_l} |$ (with $\gamma_{l} = ( j_l , m_{j_l} , \kappa_{j_l} )$), and if $p_1 = 0$ or $p_2 = 0$, then $G^{(\alpha)} ( p_1 , \gamma_1 , p_2 , \gamma_2 , k , \lambda ) = 0$.
\end{itemize}
\end{hypo}
\begin{rema}
\begin{itemize}
\item[1)] The assumption that $G^{(\alpha)}$ is compactly supported in the variables $(p_1,p_2,k)$ is an ``ultraviolet'' constraint that is made for convenience. It could be replaced by the weaker assumption that $G^{(\alpha)}$ decays sufficiently fast at infinity.
\item[2)] Hypothesis \ref{hyp:2} $(ii)$ comes from the fact that the coupling functions $G^{(\alpha)}$ must satisfy some ``minimal'' regularity for our method to be applied. In fact, Hypothesis $(ii)$ could be slightly improved with a refined choice of interpolation spaces in our proof (see Section \ref{sec:LAP} for more details). In Hypothesis \ref{hyp:2} $(iii)$, we need in addition an ``infrared'' regularization. We remark in particular that Hypotheses $(ii)$ and $(iii)$ imply that, for $0 \le \varepsilon < 1/2$,
\begin{align*}
\big | G^{(\alpha)} ( p_1 , \gamma_1 , p_2 , \gamma_2 , k , \lambda ) \big | \lesssim | p_l |^{\frac12 + \varepsilon}, \quad l = 1,2.
\end{align*}
We emphasize, however, that this infrared assumption is required only in the case $\gamma_{lj} = 1$, that is, for $j= 1/2$. For all other $j \in \mathbb{N} + \frac12$, we do \emph{not} need to impose any infrared regularization on the generalized eigenstates $\psi_{\pm}( p , \gamma )$; They are already regular enough.
\item[3)] One verifies that Hypotheses \ref{hyp:2}(i) and \ref{hyp:2}(ii) imply Hypothesis \ref{hyp:1}.
\end{itemize}
\end{rema}
\begin{theo}[Localization of the spectrum]\label{thm:loc-spec}
Assume that Hypothesis \ref{hyp:2} holds. There exists $g_1 > 0$ such that, for all $|g| \le g_1$,
\begin{equation*}
\mathrm{spec} ( H ) = \{ \inf \mathrm{spec} ( H ) \} \cup  [ \inf \mathrm{spec} ( H ) + \me , \infty ).
\end{equation*}
In particular, $H$ has no eigenvalue below its essential spectrum except for the ground state energy, $\inf \mathrm{spec} ( H )$, which is a simple eigenvalue.
\end{theo}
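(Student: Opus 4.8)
Write $E:=\inf\mathrm{spec}(H)$. By Theorem \ref{thm:ess} we already know $\mathrm{spec}_{\mathrm{ess}}(H)=[E+\me,\infty)$, so the theorem is equivalent to the two statements: (a) $E$ is a simple eigenvalue of $H$; (b) $H$ has no eigenvalue in $(E,E+\me)$. Statement (a) is the easy one: $0$ is a simple eigenvalue of $H_0$, isolated from $\mathrm{spec}(H_0)=[\me,\infty)$ by the gap $\me$, and $gH_I$ is $H_0$-bounded with relative bound going to $0$ as $g\to0$ by Theorem \ref{thm:self-adj}; ordinary analytic perturbation theory then produces, for $g_1$ small enough, a unique eigenvalue of $H$ near $0$, which is simple and, by Theorem \ref{thm:ess}, equals $E$. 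Moreover $[H,Q]=0$ and $\Omega_D\otimes\Omega_s\in\mathcal{H}_0$ force the ground state into $\mathcal{H}_0$. So the whole content is (b), which the plan is to attack by a Feshbach--Grushin reduction removing the $\z0$ bosons.

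Let $P$ be the orthogonal projection of $\mathcal{H}=\mathfrak{F}_D\otimes\gF_{\z0}$ onto the zero-boson subspace $\mathfrak{F}_D\otimes\Omega_s$ and $\bar P=\1-P$. On $\mathrm{Ran}\,\bar P$ one has $H_0\ge\mz$ (at least one boson); since $\lambda-E<\me<\mz$ for $\lambda\in(E,E+\me)$ and $gH_I$ has an $O(g)$ relative bound, $\bar P(H-\lambda)\bar P$ is bounded below by $c\,\bar P\d\Gamma(H_D)\bar P+c'$ for some $c,c'>0$, so the Feshbach map at energy $\lambda$ is well defined and $\lambda\in\mathrm{spec}_{\mathrm{pp}}(H)$ iff $\lambda$ is an eigenvalue of the reduced operator on $\mathfrak{F}_D$. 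Because $\d\Gamma(H_D)$ preserves the number of $\z0$ bosons and each term of $H_I$ changes it by exactly one, $PH_IP=0$ and $PH_0\bar P=0$, so the reduced operator equals $\d\Gamma(H_D)-g^2 W(\lambda)$ with $W(\lambda):=PH_I\bar P\,(\bar P(H-\lambda)\bar P)^{-1}\,\bar P H_I P$; the lower bound just mentioned together with the standard Fock-space estimates and Hypothesis \ref{hyp:2}(i)--(ii) show $W(\lambda)$ to be self-adjoint and $\d\Gamma(H_D)$-bounded with relative bound uniform in $\lambda\in(E,E+\me)$, so $g^2W(\lambda)$ is an $O(g^2)$-perturbation and $\d\Gamma(H_D)-g^2W(\lambda)$ is self-adjoint on $\mathcal{D}(\d\Gamma(H_D))$. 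Thus (b) is equivalent to: for $\lambda\in(E,E+\me)$, the operator $\d\Gamma(H_D)-g^2W(\lambda)$ has no eigenvalue in $(E,E+\me)$ except $E$.

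I would then decompose $\mathfrak{F}_D$ along the charge $Q$ and use $\mathrm{spec}(\d\Gamma(H_D))=\{0\}\cup[\me,\infty)$, $[\me,\infty)$ purely absolutely continuous. On $\mathcal{H}_0\cap\mathfrak{F}_D$ the spectrum of $\d\Gamma(H_D)$ is $\{0\}\cup[2\me,\infty)$ (a nonzero charge-zero state carries at least one electron \emph{and} one positron), so $\d\Gamma(H_D)-g^2W(\lambda)$ has a single eigenvalue $e_0(\lambda)=O(g^2)$ near $0$ and otherwise spectrum above $2\me-O(g^2)>E+\me$; hence any eigenvalue of it in $(E,E+\me)$ solves the fixed-point equation $\lambda=e_0(\lambda)$, and since $\lambda\mapsto e_0(\lambda)$ is a contraction (its derivative is $O(g^2)$) the only solution is $\lambda=E$. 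On $\mathcal{H}_z\cap\mathfrak{F}_D$ with $|z|\ge2$ we have $\d\Gamma(H_D)\ge2\me$, so the perturbed operator is $\ge2\me-O(g^2)>E+\me$ and has no eigenvalue below $E+\me$. This leaves only $|z|=1$: there $\d\Gamma(H_D)$ has purely absolutely continuous spectrum $[\me,\infty)$ (one free electron, the many-fermion sectors starting at $3\me$), and — after one further Feshbach reduction onto the one-fermion sector, legitimate because $3\me-\lambda>0$ — the problem becomes whether an $O(g^2)$-perturbation $\omega(p)-g^2W'(\lambda)$ of the free one-particle operator (multiplication by $\omega(p)$) acquires a bound state below its threshold $\me$.

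This last question is where the real difficulty lies, and it is precisely the near-threshold spectral analysis the paper is devoted to. Since $\omega(p)-\me\sim p^2/2\me$ near $p=0$, the threshold is of three-dimensional type, at which a small and sufficiently regular perturbation creates no bound state: the Birman--Schwinger principle reduces the existence of an eigenvalue just below $\me$ to the norm of $W'(\lambda)^{1/2}(\omega(p)-\me+\varepsilon)^{-1}W'(\lambda)^{1/2}$ staying below $1$ as $\varepsilon\downarrow0$, and for $|g|\le g_1$ this holds once $W'(\lambda)$ satisfies a Kato/Rollnik-type bound relative to $\omega(p)-\me$ near $p=0$ — which is exactly what Hypothesis \ref{hyp:2}(ii) (short-range/regularity of the kernels) and Hypothesis \ref{hyp:2}(iii) (the infrared regularization, needed only for $j=1/2$, where $\psi_\pm(p,\gamma)$ is singular at $p=0$) are designed to provide, and what the limiting absorption estimates of Section \ref{sec:LAP} make quantitative. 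Granting this step, the three charge cases combine to give $\mathrm{spec}(H)\cap(E,E+\me)=\emptyset$, which together with (a) proves the theorem. The main obstacle is thus the uniform control, as $\lambda\uparrow E+\me$ and the momentum of the would-be extra electron tends to $0$, of the resolvent of the effective one-particle operator at its threshold; everything else is soft.
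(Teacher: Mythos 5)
Your skeleton is fine where it is soft: simplicity of the ground state by isolated-eigenvalue perturbation theory, the Feshbach reduction onto the zero-boson subspace (using $H_0\ge\mz$ on $\mathrm{Ran}\,\bar P$), and the charge-$0$ and $|z|\ge 2$ sectors. But the whole content of the theorem is concentrated in the step you explicitly grant, and the two justifications you offer for it do not close it. After your reductions the one-particle space in the charge $\pm1$ sector is $\oplus_\gamma L^2(\R_+,\d p)$ --- the spherical Jacobian has been absorbed into the generalized eigenfunctions --- so the threshold $p=0$ of $\omega(p)-\me\sim p^2/2\me$ is a priori of \emph{one}-dimensional type: $\int_0^1(p^2+\varepsilon)^{-1}\d p\sim \varepsilon^{-1/2}$, and an arbitrarily small attractive perturbation whose kernel does not vanish at $p=0$ \emph{does} create a bound state. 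Your statement that the threshold is ``of three-dimensional type'' is therefore not a property of the free operator; it can only come from a quantitative vanishing of the kernel of $W'(\lambda)$ at $p=0$, i.e.\ from the behaviour $h^{(\alpha)}\sim p_1p_2$ of Appendix \ref{appendixA} together with Hypothesis \ref{hyp:2}(iii) in the $j=1/2$ channels, and establishing a bound of the type $|W'(\lambda)(p,p')|\lesssim (pp')^{1+\delta}$ uniformly as $\lambda\uparrow E+\me$, summed over the infinitely many channels $\gamma$ (with $j$-dependent constants) and through the non-explicit resolvent $(\bar P(H-\lambda)\bar P)^{-1}$ and the second Feshbach correction, is exactly the hard analysis that is missing. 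The appeal to ``the limiting absorption estimates of Section \ref{sec:LAP}'' is not legitimate here: Theorem \ref{thm:LAP} concerns the full Hamiltonian on $[E+\me,E+\mz-\delta]$ via the conjugate operator $A$ and says nothing about sub-threshold eigenvalues of your effective one-particle operator, so it cannot serve as input to your Birman--Schwinger bound. As written, part (b) is a program, not a proof.

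For comparison, the paper sidesteps the threshold problem entirely. Step 1 of its proof uses the min--max principle and the relative bound of $H_I$ to show that $H$ has no spectrum in $(E,\delta)$ for any $\delta<(1-\mathrm{C}|g|)\me$; Step 2 observes that, by Theorem \ref{thm:ess}, only discrete eigenvalues could lie in $[\delta,E+\me)$, and excludes them by the virial theorem (Theorem \ref{prop:virial}) combined with the Mourre estimate of Proposition \ref{prop:Mourre}, $[H,iA]\ge \mathrm{c}_\delta\1-\mathrm{C}\,\1_\Delta^\perp(H-E)\langle H\rangle$. The key is the choice $a=\frac{i}{2}(f\partial_p+\partial_p f)$ with $f(p)=\omega(p)/p$, which gives $[H_0,iA]=\Ne+\Np\ge\1-P_{\Omega_a\times\Omega_a}$: the $1/p$ blow-up of the vector field at $p=0$ is precisely what compensates the quadratic threshold degeneracy you would otherwise have to fight by hand, at the price of the $\mathrm{C}^1$ regularity of $H_I$ with respect to $A$ (Lemma \ref{lem:interpo}(i)), which is where Hypothesis \ref{hyp:2}(ii)--(iii) and Hardy's inequality are consumed. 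If you wish to keep your route, you must prove the uniform near-threshold Birman--Schwinger estimate directly from the eigenfunction bounds of Appendix \ref{appendixA}; that estimate, not the surrounding bookkeeping, is the theorem.
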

\begin{theo}[Absolutely continuous spectrum]\label{thm:abs-cont}
Assume that Hypothesis \ref{hyp:2} holds with $\varepsilon > 0$ in Hypothesis \ref{hyp:2}(ii). For all $\delta > 0$, there exists $g_\delta > 0$ such that, for all $| g | \le g_\delta$, the spectrum of $H$ in the interval
\begin{equation*}
[ \inf \mathrm{spec}( H ) + \me ,\, \inf \mathrm{spec}( H ) + \mz - \delta ]
\end{equation*}
is purely absolutely continuous.
\end{theo}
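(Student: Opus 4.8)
The plan is to establish a limiting absorption principle (LAP) on the interval
$I_\delta := [\inf\mathrm{spec}(H)+\me,\ \inf\mathrm{spec}(H)+\mz-\delta]$ via a suitable
version of Mourre's commutator method, and then conclude absolute continuity by the
standard argument that a LAP on an interval implies purely absolutely continuous
spectrum there. The key is the choice of conjugate operator. Following the strategy
used for the $W^\pm$ model (cf.\ \cite{bg4,ABFG}), I would not use a self-adjoint
generator of dilations on Fock space, but rather a \emph{non self-adjoint} conjugate
operator $A$ built from the generator of dilations with respect to the electron and
positron momenta $p_1,p_2$, e.g.\ of the form $A = \mathrm{d}\Gamma(a)$ where $a$ is
(a regularized, i.e.\ cut near $p=0$ and near infinity) version of $\frac{i}{2}(p\cdot\nabla_p+\nabla_p\cdot p)$
acting on the one-electron and one-positron spaces, leaving the boson degrees of
freedom untouched. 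The point of staying away from the boson variable is that the boson
energy $\omega_3(k)=\sqrt{|k|^2+\mz^2}$ never drops below $\mz$, so on the window
$I_\delta$ only states with \emph{zero} bosons contribute to the spectral measure at
the relevant energies, and the relevant ``kinetic'' positivity must come from the
fermionic modes, whose dispersion $\w(p)=\sqrt{\me^2+p^2}$ is what one differentiates
with the dilation generator.

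The steps, in order: (1) Define the regularized conjugate operator $A$ and check it is
maximal symmetric (or the generator of a $C_0$-semigroup of isometries) so that the
non self-adjoint Mourre theory of \cite{ABFG} applies; record that $H_0 \in C^1(A)$ (or
$C^{1,u}$) and compute the commutator $[H_0,iA]$, which on the fermionic sector is
$\mathrm{d}\Gamma(a\cdot\nabla\w)$ up to the cutoffs, a nonnegative operator bounded
below by a positive multiple of $\mathrm{d}\Gamma(\1_{\me\le\w\le M})$ away from
$p=0$. (2) Prove that $H$ is also in the right regularity class with respect to $A$:
this is where Hypothesis \ref{hyp:2}(ii) with $\varepsilon>0$ enters, since
$[H_I,iA]$ involves the coupling functions $F^{(\alpha)}$ multiplied by the dilation
vector field in $p_1,p_2$, and the moment condition on $\hat G^{(\alpha)}$ is exactly
what makes the commutator a bounded (indeed $H_0$-bounded) operator with small norm
when $g$ is small; the infrared condition Hypothesis \ref{hyp:2}(iii) controls the
$p=0$ endpoint of the dilation vector field for $j=1/2$. (3) Establish the Mourre
estimate: on a small interval around each $E\in I_\delta$, project with
$E_\Delta(H)$ and show
$E_\Delta(H)\,[H,iA]\,E_\Delta(H)\ \ge\ c\,E_\Delta(H) + K$
with $c>0$ and $K$ compact — here $K$ is compact because the only way to have energy
$\le \inf\mathrm{spec}(H)+\mz-\delta$ with the fermionic kinetic energy small is to be
close to the ground-state sector, which is finite-dimensional modulo the boson vacuum,
and a Froese–Herbst/virial-type argument rules out embedded eigenvalues; combining
with small-$g$ perturbation of the free commutator estimate and the HVZ-type
localization from Theorem \ref{thm:ess} gives the strict Mourre estimate after
absorbing $K$. (4) Invoke the abstract conclusion of the extended Mourre theory to get
the LAP $\sup_{\mathrm{Re}\,z\in I'_\delta,\ \mathrm{Im}\,z\ne 0}\|\langle A\rangle^{-s}(H-z)^{-1}\langle A\rangle^{-s}\|<\infty$
for $s>1/2$ on any compact $I'_\delta\subset \mathrm{int}\,I_\delta$, hence
$\mathrm{spec}(H)\cap I_\delta$ is purely absolutely continuous; the lower endpoint
$\inf\mathrm{spec}(H)+\me$ is reached by letting the compact subintervals exhaust
$I_\delta$ together with Theorem \ref{thm:loc-spec}, which already excludes point
spectrum there.

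The main obstacle I expect is step (2)–(3) at the threshold $p=0$: the dilation vector
field $p\cdot\nabla_p$ is singular relative to the generalized eigenstates
$\psi_\pm(p,\gamma)$, whose low-momentum behavior (analyzed in Appendix
\ref{appendixA}) degrades precisely when $\gamma_{lj}=|\kappa_{j_l}|=1$. Making the
commutator $[H_I,iA]$ well-defined and $H_0$-bounded there forces the infrared
regularization Hypothesis \ref{hyp:2}(iii) and the strict inequality $\varepsilon>0$;
getting the constants to line up so that the Mourre constant $c$ survives the
$g$-dependent error terms — and simultaneously verifying the $C^{1+0}$-type regularity
needed by the low-regularity non self-adjoint Mourre theorem rather than the classical
$C^2$ hypothesis — is the delicate technical heart of the argument. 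A secondary
difficulty is confirming that $A$ (non self-adjoint, only maximal symmetric) genuinely
falls within the hypotheses of the abstract theorem of \cite{ABFG}, in particular that
$H$ and $A$ satisfy the required invariance of domains under the auxiliary semigroup.
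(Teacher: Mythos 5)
Your overall framework is the right one — non self-adjoint conjugate operator, regularity of $H$ with respect to it, Mourre estimate, limiting absorption principle via the abstract extended Mourre theory — and you correctly identify the boson mass gap as the reason the window is $[\inf\mathrm{spec}(H)+\me,\,\inf\mathrm{spec}(H)+\mz-\delta]$. But there is a genuine gap in your choice of conjugate operator, and it is precisely at the threshold where the whole argument must hold together.

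You propose $a$ as a regularized (cut near $p=0$) version of the radial dilation generator $\tfrac{i}{2}(p\,\partial_p+\partial_p\,p)$. The corresponding commutator with the one-particle dispersion is
\[
[\,\omega(p),\, i\,\tfrac{i}{2}(p\,\partial_p+\partial_p\,p)\,]\ =\ p\,\omega'(p)\ =\ \frac{p^2}{\omega(p)},
\]
which \emph{vanishes} at $p=0$; cutting $a$ off near $p=0$ makes the commutator identically zero there. Consequently $[H_0,iA]$ degenerates on exactly the set of states with one electron (or positron) near rest — i.e.\ on the energy shell at the lower endpoint $\inf\mathrm{spec}(H)+\me$ of your interval. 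This is the threshold the paper is warning about in its introduction, and a Mourre constant $c>0$ cannot survive there with your $a$; the phrase ``away from $p=0$'' in your step (1) cannot be reconciled with the interval you need to cover. The paper's key idea is to use instead
\[
a\ =\ \tfrac{i}{2}\big(f(p)\,\partial_p+\partial_p\,f(p)\big),\qquad f(p)=\frac{\omega(p)}{p},
\]
i.e.\ the generator of translations along the \emph{energy} variable, so that $[\omega(p),ia]=f(p)\,\omega'(p)\equiv 1$ exactly on all of $(0,\infty)$. Then $[H_0,iA]=\Ne+\Np$ identically, and the operator inequality $\Ne+\Np+P_{\Omega_a\otimes\Omega_a}\ge\1$ yields a strict Mourre bound on the whole window with no degeneration at the threshold. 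The singularity of $f(p)=\omega(p)/p$ at $p=0$ (this is the opposite of a regularization — $f$ blows up there) is precisely what forces the infrared condition Hypothesis~\ref{hyp:2}(iii) on $G^{(\alpha)}$ for $j=1/2$ and the $H^1_0$/Hardy-type control on $\tilde G^{(\alpha)}$ used in Lemma~\ref{lem:interpo}.

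Two secondary points. First, the paper's Mourre estimate (Proposition~\ref{prop:Mourre}) is a \emph{global} lower bound of the form $[H,iA]\ge c_\delta\1-\mathrm{C}\,\1_\Delta^\perp(H-E)\langle H\rangle$, obtained by the number-operator inequality above and by controlling the vacuum projection and interaction commutator separately; it is not of the classical localized form ``$E_\Delta\,[H,iA]\,E_\Delta\ge c\,E_\Delta+K$ with $K$ compact,'' and no compactness argument of the kind you sketch in step (3) is used (your heuristic that the low-energy, low-kinetic-energy sector is ``finite dimensional modulo the boson vacuum'' is not correct — it is a genuine continuum). Second, the required regularity is $\mathrm{C}^{1,1}(A_{\mathcal G};A_{\mathcal G^*})$, and the paper obtains it by real interpolation between $\mathrm{C}^{0}$ and $\mathrm{C}^{1}$ classes applied to the map $\tilde G\mapsto[H_I(\tilde h\tilde G),iA]$; this is where the quantitative form of Hypothesis~\ref{hyp:2}(ii) with $\varepsilon>0$ is used, rather than a simple boundedness statement for $[H_I,iA]$.
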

\begin{rema}
\begin{itemize}
\item[1)] In Theorem \ref{thm:LAP} below, we prove a stronger result than Theorem \ref{thm:abs-cont}, which is of independent interest, namely we show that a \emph{limiting absorption principle} holds for $H$ in the interval $[ \inf \mathrm{spec}( H ) + \me ,\, \inf \mathrm{spec}( H ) + \mz - \delta ]$. Another consequence of the limiting absorption principle of Theorem \ref{thm:LAP} is the local decay property \eqref{eq:local-decay}.
\item[2)] If we make the further assumption that the kernels $G^{(\alpha)}$ are sufficiently regular with respect to the $Z^0$ variable $k$, similarly to what is assumed in Hypothesis \ref{hyp:2}(ii) for the variables $p_1$, $p_2$, we can presumably extend the result of Theorem \ref{thm:abs-cont} to the interval $[ \inf \mathrm{spec} ( H ) + \me , M )$ for any $M > \inf \mathrm{spec} ( H ) + \me$.
\end{itemize}
\end{rema}
Theorems \ref{thm:loc-spec} and \ref{thm:abs-cont} are proven in Section \ref{sec:LAP}. Our proofs rely on Mourre's Theory with a non-self adjoint conjugate operator. Such extensions of the usual conjugate operator theory \cite{Mo,ABG} have been considered in \cite{HuSp}, \cite{Sk}, and later extended in \cite{GGM1,GGM2}.

We use in this paper a conjugate operator, $A$, similar to the ones of \cite{HuSp} and \cite{GGM1,GGM2}, and prove regularity of the total Hamiltonian with respect to this conjugate operator. Combined with a Mourre estimate, this regularity property allows us to deduce a virial theorem and a limiting absorption principle, from which we obtain Theorems \ref{thm:loc-spec} and \ref{thm:abs-cont}.

Our main achievement consists in proving that the physical interaction Hamiltonian $H_I$ is regular enough for the Mourre theory to be applied, except for the terms associated to the ``first'' generalized eigenstates ($j=1/2$). For the latter, we need to impose a non-physical infrared condition. To establish the regularity of $H_I$ with respect to $A$, we use in particular real interpolation theory, together with a version of the Mourre theory requiring only low regularity of the Hamiltonian with respect to the conjugate operator.

In Appendix \ref{appendixA}, we give the estimates on the generalized eigenfunctions of the Dirac operator that are used in this paper. In Appendix \ref{appendix:Mourre}, we recall the abstract results from Mourre's theory that we need. Finally, for the convenience of the reader, standard definitions and properties of creation and annihilation operators in Fock space are recalled in Appendix \ref{appendixC}.

\section{Self-adjointness and localization of the essential spectrum}
In this section we prove Theorems \ref{thm:self-adj} and \ref{thm:ess}.

\subsection{Self-adjointness}\label{subsection:self-adj}
We sketch the standard proof of Theorem \ref{thm:self-adj} relying on the Kato-Rellich Theorem.
\begin{proof}[Proof of Theorem \ref{thm:self-adj}]
We use the $N_\tau$ estimates of \cite{GlimmJaffe} and follow the proof of \cite[Theorem~2.6]{BDG2} (see also \cite{BDG}). For
\begin{equation}\label{eq:def-kig}
 K_i(G)^2 := \sum_{\alpha=1,2} K_i(G^{(\alpha)})^2,\quad i=1,2\, ,
\end{equation}
and
\begin{equation}\nonumber
\begin{split}
 C_{1,\beta} & := (\frac{1}{\me^2} +1+2\beta)^\frac12,\quad
 C_{2,\beta\eta} := ( \frac{\eta}{\me^2} (1+2\beta))^\frac12, \\
 B_{1,\beta} & := (1+\frac{1}{2\beta})^\frac12,\quad
 B_{2,\beta\eta} :=(\eta(1+\frac{1}{2\beta}) + \frac{1}{4\eta})^\frac12\, ,
\end{split}
\end{equation}
we obtain, for any $\psi\in\D(H)$,
\begin{equation}\label{eq:rel-bound-1}
\begin{split}
  \| H_I\psi\| \leq
  \left(K_1(G)\, C_{1,\beta}+ K_2(G) C_{2,\beta}\right)\|H_0\psi\|
  + \left(K_1(G) B_{1,\beta} + K_2(G)B_{2,\beta\eta}\right) \|\psi\| \, .
\end{split}
\end{equation}
Therefore, with \eqref{eq:g0-cond} and for $\beta$ and $\eta$ small enough, using the Kato-Rellich Theorem
concludes the proof.
\end{proof}
If we note that $K_2(G)\geq K_1(G)$, and set
$$
 K(G) := K_2(G)\, , \quad C_{\beta\eta}:=C_{1,\beta} + C_{2,\beta\eta}\, ,\quad
 B_{\beta\eta} := B_{1,\beta} + B_{2,\beta\eta}\, ,
$$
we obtain from \eqref{eq:rel-bound-1} the following relative bound:
\begin{corr}\label{cor-rel-bound}
For any $\psi\in\D(H)$,
$$
 \|H_I\psi\| \leq K(G) \left( C_{\beta\eta}\|H_0\psi\| + B_{\beta\eta}\|\psi\|\right)\, .
$$
\end{corr}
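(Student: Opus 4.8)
The plan is to rewrite the relative bound \eqref{eq:rel-bound-1} in the compact form stated in the corollary by two elementary observations. First I would establish the inequality $K_1(G) \le K_2(G)$. This is immediate from the definitions: comparing $K_i(G^{(\alpha)})^2$ in \eqref{ed:def-K-G-alpha}, the integrand defining $K_2(G^{(\alpha)})^2$ differs from that of $K_1(G^{(\alpha)})^2$ only by the extra positive factor $(|k|^2 + 1)^{1/2} \ge 1$; hence $K_1(G^{(\alpha)})^2 \le K_2(G^{(\alpha)})^2$ for each $\alpha$, and summing over $\alpha=1,2$ via \eqref{eq:def-kig} gives $K_1(G)^2 \le K_2(G)^2$, so $K_1(G) \le K_2(G) =: K(G)$.

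Next I would bound each of the four terms in \eqref{eq:rel-bound-1} by replacing $K_1(G)$ with $K(G)$. Since $C_{1,\beta}, C_{2,\beta\eta}, B_{1,\beta}, B_{2,\beta\eta}$ are all nonnegative and $\|H_0\psi\|, \|\psi\| \ge 0$, we get
\[
  \big(K_1(G)\, C_{1,\beta}+ K_2(G)\, C_{2,\beta\eta}\big)\|H_0\psi\|
  \le K(G)\,(C_{1,\beta} + C_{2,\beta\eta})\,\|H_0\psi\| = K(G)\, C_{\beta\eta}\,\|H_0\psi\|,
\]
and similarly the zeroth-order part is bounded by $K(G)\, B_{\beta\eta}\,\|\psi\|$ with $B_{\beta\eta} = B_{1,\beta} + B_{2,\beta\eta}$. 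Adding the two and using \eqref{eq:rel-bound-1} yields
\[
  \|H_I\psi\| \le K(G)\big(C_{\beta\eta}\|H_0\psi\| + B_{\beta\eta}\|\psi\|\big)
\]
for every $\psi \in \D(H)$, which is the claim.

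There is essentially no obstacle here: the statement is a reformulation of an already-established estimate, and the only substantive point is the monotonicity $K_1(G) \le K_2(G)$, which is a one-line consequence of $(|k|^2+1)^{1/2}\ge 1$. The one thing worth a word of care is that the bound \eqref{eq:rel-bound-1} was stated for $\psi \in \D(H)$; since by Theorem \ref{thm:self-adj} we have $\D(H) = \D(H_0)$ whenever $|g| \le g_0$, the quantity $\|H_0\psi\|$ is finite and the inequality is meaningful on the whole stated domain. No new hypotheses beyond those of Theorem \ref{thm:self-adj} are needed.
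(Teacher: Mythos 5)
Your proposal is correct and follows essentially the same route as the paper: the paper likewise observes $K_2(G)\ge K_1(G)$, sets $K(G):=K_2(G)$, $C_{\beta\eta}:=C_{1,\beta}+C_{2,\beta\eta}$, $B_{\beta\eta}:=B_{1,\beta}+B_{2,\beta\eta}$, and deduces the bound directly from \eqref{eq:rel-bound-1}. Your justification of the monotonicity via $(|k|^2+1)^{1/2}\ge 1$ is exactly the (unstated) reason the paper's inequality holds.
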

In the sequel, for the sake of simplicity, we shall use this relative bound instead of
the stronger result \eqref{eq:rel-bound-1}.

\subsection{Localization of the essential spectrum}\label{subsection:ess}
In this subsection, we prove Theorem \ref{thm:ess}. We use the Derezi{\'n}ski-G{\'e}rard partition of unity \cite{DG} in a version that accommodates the Fermi-Dirac statistics and the CAR (such a partition of unity was used previously in \cite{ammari}). Let
\begin{equation*}
U_a : \mathfrak{F}_a ( \mathfrak{H}_c \oplus \mathfrak{H}_c ) \to \mathfrak{F}_a ( \mathfrak{H}_c ) \otimes \mathfrak{F}_a( \mathfrak{h}_c ) = \mathfrak{F}_a \otimes \mathfrak{F}_a ,
\end{equation*}
be defined by
\begin{align*}
& U_a \Omega_a = \Omega_a \otimes \Omega_a \\
& U_a b^*( \varphi_1 \oplus \varphi_2 ) = ( b^*( \varphi_1 ) \otimes \1 + ( -1 )^{ N } \otimes b^*( \varphi_2 ) ) U_a ,
\end{align*}
where $(-1)^{N}$ denotes the bounded operator on $\mathfrak{F}_a$ defined by its restriction to $\otimes_a^r \mathfrak{h}_c$ as $(-1)^{N} u = (-1)^r u$ for any $u \in \otimes_a^r \mathfrak{h}_c$. Clearly, using the anti-commutation relations, $U_a$ extends by linearity to a unitary map on $\mathfrak{F}_a ( \mathfrak{H}_c \oplus \mathfrak{H}_c )$. Let $j_0 \in \mathrm{C}^\infty( [ 0 , \infty ) ; [ 0 , 1 ] )$ be such that $j_0 \equiv 1$ on $[ 0 , 1/2 ]$ and $j_0 \equiv 0$ on $[ 1 , \infty )$, and let $j_\infty$ be defined by the relation $j_0^2 + j_\infty^2 \equiv 1$. Let $y := i \nabla_p$ account for the position variable of the fermions. Given $R > 0$, we introduce the bounded operators $j_0^R := j_0 ( |y| / R )$ and $j_\infty^R := j_\infty( |y| / R )$ on $\mathfrak{F}_a ( \mathfrak{H}_c )$. Let
\begin{align*}
j_a^R : \mathfrak{H}_c &\to \mathfrak{H}_c \oplus \mathfrak{H}_c \\
\varphi & \mapsto ( j_0^R \varphi , j_\infty^R \varphi ).
\end{align*}
Lifting the operator $j_a^R$ to the Fock space $\mathfrak{F}_a ( \mathfrak{H}_c )$ allows one to define a map $\Gamma( j_a^R ) : \mathfrak{F}_a ( \mathfrak{H}_c ) \to \mathfrak{F}_a ( \mathfrak{H}_c \oplus \mathfrak{H}_c  )$. The Derezi{\'n}ski-G{\'e}rard partition of unity is defined by
\begin{equation*}
\check{\Gamma}_a(j_a^R) : \mathfrak{F}_a \to \mathfrak{F}_a \otimes \mathfrak{F}_a , \qquad \check{\Gamma}_a(j_a^R) = U_a \Gamma( j_a^R ).
\end{equation*}
Using the relation $j_0^2 + j_\infty^2 \equiv 1$, one easily verifies that $\check{\Gamma}_a(j_a^R)$ is isometric.

We construct a similar partition of unity, $\check{\Gamma}_s( j_s^R )$, acting on the bosonic Fock space $\gF_{\z0} = \mathfrak{F}_s ( L^2(\Sigma_3) )$. It is defined by
\begin{equation*}
\check{\Gamma}_s(j_s^R) : \gF_{\z0} \to \gF_{\z0} \otimes \gF_{\z0} , \qquad \check{\Gamma}_s(j_s^R) = U_s \Gamma( j_s^R ) ,
\end{equation*}
where
\begin{equation*}
U_s : \mathfrak{F}_s ( L^2(\Sigma_3) \oplus L^2(\Sigma_3) ) \to \gF_{\z0} \otimes \gF_{\z0} ,
\end{equation*}
is the unitary operator defined by
\begin{align*}
& U_s \Omega_s = \Omega_s \otimes \Omega_s \\
& U_s a^*( \varphi_1 \oplus \varphi_2 ) = ( a^*( \varphi_1 ) \otimes \1 + \1 \otimes a^*( \varphi_2 ) ) U_s ,
\end{align*}
and $j_s^R$ is the bounded operator defined by
\begin{align*}
j_s^R : L^2(\Sigma_3) &\to L^2(\Sigma_3) \oplus L^2(\Sigma_3) \\
\varphi & \mapsto ( j_0^R \varphi , j_\infty^R \varphi ).
\end{align*}
Here we have used similar notations as above, namely $j_0^R := j_0 ( |y_3| / R )$ and $j_\infty^R := j_\infty( |y_3| / R )$, where $y_3 := i \nabla_k$ accounts for the position variable of the bosons.

Let $N$ denote the number operator, acting either on $\mathfrak{F}_a$ or on $ \gF_{\z0}$. To shorten notations, we define the operators
\begin{equation*}
N_0 := N \otimes \1 , \qquad N_\infty := \1 \otimes N ,
\end{equation*}
acting on $\mathfrak{F}_a \otimes \mathfrak{F}_a $ and on $ \gF_{\z0} \otimes  \gF_{\z0}$.

We recall the following properties that can be easily proven using the definitions of the operators involved (see \cite{ammari,DG}).
\begin{lem}\label{lm:checkGamma}
With the previous notations, we have the following properties.
\begin{itemize}
\item[(i)] Let $\varphi_1,\dots,\varphi_n \in \mathfrak{H}_c$. Then
\begin{align*}
\check{\Gamma}_a( j_a^R ) \prod_{i=1}^n b^*( \varphi_i ) \Omega_a = \prod_{i=1}^n \big ( b^*( j_0^R \varphi_i ) \otimes \1 + (-1)^N \otimes b^*( j_\infty^R \varphi_i ) \big ) \Omega_a \otimes \Omega_a.
\end{align*}
Let $\varphi_1,\dots,\varphi_n \in  L^2(\Sigma_3)$. Then
\begin{align*}
\check{\Gamma}_s( j_s^R ) \prod_{i=1}^n a^*( \varphi_i ) \Omega_s = \prod_{i=1}^n \big ( a^*( j_0^R \varphi_i ) \otimes \1 + \1 \otimes a^*( j_\infty^R \varphi_i ) \big ) \Omega_s \otimes \Omega_s.
\end{align*}
\item[(ii)] Let $\omega$ be an operator on $\mathfrak{H}_c$ such that the commutators $[ \omega , j^R_\# ]$, defined as quadratic forms on $\mathfrak{D}( \omega )$, extend to bounded operators on $\mathfrak{H}_c$, where $j_\#$ stands for $j_0$ and $j_\infty$. Then
\begin{align*}
& \big \| ( N_0 + N_\infty )^{-\frac12} \big ( ( \d \Gamma( \omega ) \otimes \1 + \1 \otimes \d \Gamma ( \omega ) ) \check{\Gamma}_a( j_a^R ) - \check{\Gamma}_a( j_a^R ) \d\Gamma ( \omega ) \big ) N^{-\frac12} P_{\Omega_a}^\perp  \big \| \\
& \le \big \| \check{ \mathrm{ad} }_\omega ( j_a^R ) \big \| ,
\end{align*}
where $P_{\Omega_a}$ denotes the orthogonal projection onto the vacuum sector in $\mathfrak{F}_a$, and $\check{ \mathrm{ad} }_\omega ( j_a^R ) := ( [ \omega , j^R_0 ] , [ \omega , j^R_\infty ] )$.

The same estimate holds if  $\mathfrak{F}_a$, $\mathfrak{H}_c$, $j_a^R$, $\check{\Gamma}_a$ and $\Omega_a$ are replaced respectively by $\gF_{\z0}$, $L^2( \Sigma_3)$, $j_s^R$, $\check{\Gamma}_s$ and $\Omega_s$.
\end{itemize}
\end{lem}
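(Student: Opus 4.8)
The statement to prove is Lemma \ref{lm:checkGamma}, which collects the basic algebraic and quasi-commutation properties of the Dereziński–Gérard partition of unity in both the fermionic and bosonic settings. Here is how I would organize the proof.

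\medskip

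\textbf{Part (i): explicit action on product vectors.} The plan is to work directly from the definitions. Recall that $\Gamma(j_a^R)$ is defined on the Fock space by $\Gamma(j_a^R)\Omega_a = \Omega_a$ and $\Gamma(j_a^R) b^*(\varphi) = b^*(j_a^R\varphi)\Gamma(j_a^R)$, where $b^*(j_a^R\varphi) = b^*(j_0^R\varphi \oplus j_\infty^R\varphi)$ is the creation operator on $\mathfrak{F}_a(\mathfrak{H}_c\oplus\mathfrak{H}_c)$. Applying $\check\Gamma_a(j_a^R) = U_a\Gamma(j_a^R)$ to $\prod_{i=1}^n b^*(\varphi_i)\Omega_a$, I would push $\Gamma(j_a^R)$ through the creation operators one at a time, picking up $b^*(j_a^R\varphi_i)$ for each, and use $\Gamma(j_a^R)\Omega_a=\Omega_a$. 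Then I would apply $U_a$ and use the defining intertwining relation $U_a b^*(\varphi_1\oplus\varphi_2) = (b^*(\varphi_1)\otimes\1 + (-1)^N\otimes b^*(\varphi_2))U_a$ together with $U_a\Omega_a = \Omega_a\otimes\Omega_a$, again moving $U_a$ through the $n$ factors one by one. This yields exactly the claimed formula. The bosonic case is identical except that the sign $(-1)^N$ is replaced by $\1$ because the $U_s$ intertwining relation has no sign; one uses $U_s\Omega_s=\Omega_s\otimes\Omega_s$ and $U_s a^*(\varphi_1\oplus\varphi_2) = (a^*(\varphi_1)\otimes\1 + \1\otimes a^*(\varphi_2))U_s$ in the same way. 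This part is purely formal bookkeeping with the CAR/CCR and the definition of the second-quantization functor, so I would present it compactly.

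\medskip

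\textbf{Part (ii): the quasi-commutation estimate.} The idea is the standard ``pull-through'' computation for $\d\Gamma$ of an operator against $\Gamma$ of a contraction, adapted to the isometry $\check\Gamma_a(j_a^R)$. Writing $\hat\omega := \d\Gamma(\omega)\otimes\1 + \1\otimes\d\Gamma(\omega)$ on $\mathfrak{F}_a\otimes\mathfrak{F}_a$, note that $\hat\omega = U_a\,\d\Gamma(\tilde\omega)\,U_a^*$ where $\tilde\omega$ is the operator $\omega\oplus\omega$ on $\mathfrak{H}_c\oplus\mathfrak{H}_c$ (this follows from the definition of $U_a$ and the standard fact that $\d\Gamma$ of a block-diagonal operator maps to a sum of $\d\Gamma$'s under $U_a$). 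Hence $\hat\omega\,\check\Gamma_a(j_a^R) - \check\Gamma_a(j_a^R)\,\d\Gamma(\omega) = U_a\big(\d\Gamma(\tilde\omega)\Gamma(j_a^R) - \Gamma(j_a^R)\d\Gamma(\omega)\big)$. Now one computes on an $n$-particle vector $b^*(\varphi_1)\cdots b^*(\varphi_n)\Omega_a$: the difference $\d\Gamma(\tilde\omega)\Gamma(j_a^R) - \Gamma(j_a^R)\d\Gamma(\omega)$ acts as $\sum_{i} b^*(j_a^R\varphi_1)\cdots b^*\big((\tilde\omega j_a^R - j_a^R\omega)\varphi_i\big)\cdots b^*(j_a^R\varphi_n)\Omega_a$, and $(\tilde\omega j_a^R - j_a^R\omega)\varphi = ([\omega,j_0^R]\varphi, [\omega,j_\infty^R]\varphi) = \check{\mathrm{ad}}_\omega(j_a^R)\varphi$. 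So the difference equals $\d\Gamma$-type replacement of one ``leg'' by $\check{\mathrm{ad}}_\omega(j_a^R)$. Since $U_a$ is unitary, $\|\cdot\|$ is preserved; the projection $P_{\Omega_a}^\perp$ and the factor $N^{-1/2}$ on the right restrict to $n\ge 1$ and supply the normalization $n^{-1/2}$, while $(N_0+N_\infty)^{-1/2}$ on the left (noting $N_0 + N_\infty$ equals $n$ on the range) and the fact that $j_a^R$ is a contraction (from $(j_0^R)^*j_0^R + (j_\infty^R)^*j_\infty^R = \1$) turn the sum of $n$ terms into a bound by $\|\check{\mathrm{ad}}_\omega(j_a^R)\|$. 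This is the usual $\|\d\Gamma(\text{contraction-preserving replacement})\| \le n\,\|\cdot\|$ estimate combined with the $n^{-1/2}\cdot n\cdot n^{-1/2} = 1$ normalization; I would cite \cite{DG,ammari} for the detailed combinatorics and give the one-line form of the argument. The bosonic statement follows verbatim with $U_s$ in place of $U_a$ (and no signs), so I would simply remark that the same computation applies.

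\medskip

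\textbf{Main obstacle.} The only genuinely delicate point is the operator-domain bookkeeping in part (ii): $\omega$ (which in applications is $\omega(p)$ or $\omega_3(k)$, an unbounded multiplication operator) is unbounded, the $j_\#^R$ are functions of $|y| = |i\nabla_p|$ so they do not commute with $\omega$, and one must justify that $[\omega, j_\#^R]$ extends from a quadratic form on $\mathfrak{D}(\omega)$ to a bounded operator — which is precisely the hypothesis of the lemma — and then that all the pull-through manipulations are legitimate on the dense domain of finite-particle smooth compactly supported vectors before taking closures. I expect this to be routine given that hypothesis (it is the content of the commutator assumption), but it is where care is needed; the algebraic identities themselves are immediate.
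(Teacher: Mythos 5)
Your proposal is correct. In fact the paper does not spell out a proof of this lemma at all; it merely states that the properties ``can be easily proven using the definitions'' and refers to \cite{ammari,DG}. Your sketch is precisely the argument in those references: part (i) follows by pushing $\Gamma(j_a^R)$ and then $U_a$ through the creation operators using the two intertwining relations and $U_a\Omega_a=\Omega_a\otimes\Omega_a$ (with the sign $(-1)^N$ present in the fermionic case and absent in the bosonic one), and part (ii) is the pull-through identity $\hat\omega\,\check\Gamma_a - \check\Gamma_a\, \d\Gamma(\omega) = U_a\bigl(\d\Gamma(\tilde\omega)\Gamma(j_a^R)-\Gamma(j_a^R)\d\Gamma(\omega)\bigr)$, which on the $n$-particle sector is a sum of $n$ single-leg replacements $j_a^R\mapsto\check{\mathrm{ad}}_\omega(j_a^R)$; since $j_a^R$ is an isometry (because $(j_0^R)^2+(j_\infty^R)^2=\1$), each term has norm at most $\|\check{\mathrm{ad}}_\omega(j_a^R)\|$, and the sandwiching by $(N_0+N_\infty)^{-1/2}$ and $N^{-1/2}$ (both equal to $n^{-1/2}$ on the relevant sector, since $\Gamma(j_a^R)$, $U_a$, $\d\Gamma(\cdot)$ all preserve total particle number) absorbs the factor $n$. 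Your closing remark about domain bookkeeping is the right caveat and is indeed handled by the hypothesis that $[\omega,j_\#^R]$ extends boundedly, together with density of finite-particle vectors. No gaps.
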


Recall that the total Hilbert space can be written as $\mathcal{H} = \mathfrak{F}_a \otimes \mathfrak{F}_a \otimes \mathfrak{F}_{Z^0}$. As in \cite{ammari,DG}, it is convenient to introduce an ``extended'' Hamiltonian, $H^{\mathrm{ext}}$, acting on the ``extended'' Hilbert space
\begin{equation*}
\mathcal{H}^{\mathrm{ext}} := \bigotimes_{i=1}^4 \mathfrak{F}_a \otimes  \bigotimes_{j=1}^2 \mathfrak{F}_{Z^0} .
\end{equation*}
In our setting, the extended Hamiltonian is given by the expression
\begin{equation*}
H^{\mathrm{ext}} := H_0^{\mathrm{ext}} + g H_I^{ \mathrm{ext} } ,
\end{equation*}
where
\begin{align*}
H_0^{\mathrm{ext}} :=& \d\Gamma( H_D ) \otimes \1_{ \otimes^2 \mathfrak{F}_a } \otimes \1_{ \otimes^2 \mathfrak{F}_{Z^0} }  + \1_{ \otimes^2 \mathfrak{F}_a } \otimes \d\Gamma( H_D ) \otimes \1_{ \otimes^2 \mathfrak{F}_{Z^0} }  \\
&+ \1_{ \otimes^4 \mathfrak{F}_a } \otimes \d \Gamma ( H_{Z^0} ) \otimes \1_{ \mathfrak{F}_{Z^0} } + \1_{ \otimes^4 \mathfrak{F}_a } \otimes  \1_{ \mathfrak{F}_{Z^0} } \otimes \d \Gamma ( H_{Z^0} ),
\end{align*}
and $H_I^{ \mathrm{ext} }$ is given by \eqref{eq:interaction}--\eqref{eq:interaction-4}, except that the creation and annihilation operators for the electrons, $b_+^\# = b^\# \otimes \1 \otimes \1$, acting on $\mathcal{H} = \mathfrak{F}_a \otimes \mathfrak{F}_a \otimes \mathfrak{F}_{Z^0}$, are replaced by
\begin{equation*}
b_+^{\#,0} := b^\# \otimes \1_{ \otimes^3 \mathfrak{F}_a } \otimes \1_{ \otimes^2 \mathfrak{F}_{Z^0} }
\end{equation*}
(acting on $\mathcal{H}^{ \mathrm{ext} }$), likewise, the creation and annihilation operators for the positrons, $b^\#_- = (-1)^{\Ne} \otimes b^\# \otimes \1$, are replaced by
\begin{equation*}
b^{\#, 0 }_- := (-1)^{\Nez} \otimes (-1)^{\Nei} \otimes b^\# \otimes \1_{ \mathfrak{F}_a } \otimes \1_{ \otimes^2 \mathfrak{F}_{Z^0} } ,
\end{equation*}
and the creation and annihilation operators for the bosons, $a^\#$, are replaced by
\begin{equation*}
a^{\#, 0 } := \1_{ \otimes^4 \mathfrak{F}_a } \otimes a^\# \otimes \1_{ \mathfrak{F}_{Z^0} } .
\end{equation*}
Here we have set
\begin{align*}
\Nez := \left(N \otimes \1_{ \mathfrak{F}_a }\right) \otimes
\1_{ \otimes^2 \mathfrak{F}_a } \otimes \1_{ \otimes^2 \mathfrak{F}_{Z^0} } , \qquad \Nei := \left(\1_{ \mathfrak{F}_a } \otimes N \right) \otimes \1_{ \otimes^2 \mathfrak{F}_a } \otimes \1_{ \otimes^2 \mathfrak{F}_{Z^0} } ,
\end{align*}
on $\mathcal{H}^{ \mathrm{ext} }$. We define similarly the number operators
\begin{align*}
\Npz := \1_{ \otimes^2 \mathfrak{F}_a } \otimes \left(N \otimes \1_{ \mathfrak{F}_a }\right) \otimes \1_{ \otimes^2 \mathfrak{F}_{Z^0} }, \qquad
\Npi := \1_{ \otimes^2 \mathfrak{F}_a } \otimes \left(\1_{\mathfrak{F}_a }
\otimes N \right) \otimes \1_{ \otimes^2 \mathfrak{F}_{Z^0} } ,
\end{align*}
and
\begin{align*}
N_{\z0,0} := \1_{ \otimes^4 \mathfrak{F}_a } \otimes \left( N \otimes \1_{ \mathfrak{F}_{Z^0} }\right), \qquad
N_{ \z0 , \infty} := \1_{ \otimes^4 \mathfrak{F}_a } \otimes \left(\1_{ \mathfrak{F}_{Z^0} } \otimes N \right) ,
\end{align*}
and the creation and annihilation operators
\begin{equation*}
b_+^{\#,\infty} := \1_{ \mathfrak{F}_a } \otimes b^\# \otimes \1_{ \otimes^2 \mathfrak{F}_a } \otimes \1_{ \otimes^2 \mathfrak{F}_{Z^0} } ,
\end{equation*}
\begin{equation*}
b^{\#, \infty }_- := (-1)^{\Nez} \otimes (-1)^{\Nei} \otimes \1_{ \mathfrak{F}_a } \otimes b^\# \otimes \1_{ \otimes^2 \mathfrak{F}_{Z^0} } ,
\end{equation*}
and
\begin{equation*}
a^{\#, \infty } := \1_{ \otimes^4 \mathfrak{F}_a } \otimes \1_{ \mathfrak{F}_{Z^0} } \otimes a^\#  .
\end{equation*}

Now, we introduce an isometric map, $\check{\Gamma}_R : \mathcal{H} \to \mathcal{H}^{ \mathrm{ext} }$, by setting
\begin{equation*}
\check{\Gamma}_R := \check{\Gamma}_a( j_a^R ) \otimes \check{\Gamma}_a( j_a^R ) \otimes \check{\Gamma}_s( j_s^R ).
\end{equation*}
Theorem \ref{thm:ess} will be a consequence of the following lemma.
\begin{lem}\label{lm:Hext}
Assume that Hypothesis 1 holds and let $g_0$ be as in \eqref{eq:g0-cond}. Let $\chi \in \mathrm{C}_0^\infty( \mathbb{R} )$. Then, for all $|g| \le g_0$,
\begin{equation*}
\big \| \check{\Gamma}_R \chi( H ) - \chi( H^{\mathrm{ext}} ) \check{\Gamma}_R \big \| \to 0 , \quad \text{as } R \to \infty.
\end{equation*}
\end{lem}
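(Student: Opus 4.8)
The plan is to reduce the statement to an estimate on the difference $\check{\Gamma}_R H - H^{\mathrm{ext}} \check{\Gamma}_R$ applied to vectors of the form $(H+i)^{-1}\psi$, using a standard Helffer--Sj\"ostrand (or resolvent-commutator) argument: writing $\chi = \tilde\chi$ for an almost-analytic extension, one has
\begin{equation*}
\check{\Gamma}_R \chi(H) - \chi(H^{\mathrm{ext}})\check{\Gamma}_R = \frac{1}{\pi}\int_{\mathbb{C}} \bar\partial \tilde\chi(z)\, (H^{\mathrm{ext}}-z)^{-1}\big( H^{\mathrm{ext}}\check{\Gamma}_R - \check{\Gamma}_R H \big)(H-z)^{-1}\, \mathrm{d}z\wedge\mathrm{d}\bar z,
\end{equation*}
so it suffices to show that $(H^{\mathrm{ext}}\check{\Gamma}_R - \check{\Gamma}_R H)(H-z)^{-1}$, or better $(N_0+N_\infty+N+1)^{-1/2}$-weighted versions of it, tend to $0$ as $R\to\infty$, with the $z$-dependence controlled by a fixed power of $|\mathrm{Im}\,z|^{-1}$ that is integrable against $|\bar\partial\tilde\chi|$. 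Here I would use that $(H+i)^{-1}$ maps $\mathcal{H}$ into $\D(H_0)$, which by the self-adjointness theorem and the $N_\tau$-estimates is contained in the form domain of the number operators, so the number-operator weights are harmless.

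Next I would split $H^{\mathrm{ext}}\check{\Gamma}_R - \check{\Gamma}_R H$ into a free part and an interaction part. For the free part, $H_0^{\mathrm{ext}}\check{\Gamma}_R - \check{\Gamma}_R H_0$, I would apply Lemma \ref{lm:checkGamma}(ii) with $\omega = H_D$ (for each of the two fermionic factors) and with $\omega = H_{Z^0}$ (for the bosonic factor): the relevant quantities are $\check{\mathrm{ad}}_{H_D}(j_a^R) = ([H_D, j_0^R],[H_D, j_\infty^R])$ and $\check{\mathrm{ad}}_{H_{Z^0}}(j_s^R)$. Since $j_\#^R = j_\#(|y|/R)$ with $y = i\nabla_p$, and $H_D$, $H_{Z^0}$ are multiplication operators by $\omega(p)$, $\omega_3(k)$ in momentum space, the commutators are of the form $R^{-1}(\text{bounded involving }\nabla\omega)$ on the support of $G^{(\alpha)}$-type functions, hence $O(R^{-1})\to 0$; the abstract estimate in Lemma \ref{lm:checkGamma}(ii) then gives a bound $O(R^{-1})$ on the free part modulo number-operator powers, which are absorbed as above. (One must be a little careful because the naive commutator bound only holds after localizing in energy, which is exactly what the cutoff $\chi(H)$ provides via $H_0 \le \mathrm{const}\cdot(H + \mathrm{const})$.)

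For the interaction part, $H_I^{\mathrm{ext}}\check{\Gamma}_R - \check{\Gamma}_R H_I$, the idea is that $\check{\Gamma}_R$ essentially intertwines the creation/annihilation operators with their ``$0$-component'' counterparts $b_\pm^{\#,0}$, $a^{\#,0}$ up to the substitution $\varphi \mapsto j_0^R\varphi$, plus error terms that either contain a $j_\infty^R$ factor or arise from the ordering issues in $U_a$, $U_s$ (the $(-1)^N$ factors). Using part (i) of Lemma \ref{lm:checkGamma} on monomials and extending by density, one rewrites $H_I^{\mathrm{ext}}\check{\Gamma}_R - \check{\Gamma}_R H_I$ as a sum of terms each of which is an interaction-type operator whose coupling function is, roughly, $F^{(\alpha)}$ composed with $(1 - j_0^R(y_i))$ in one of the fermion/boson position variables, or $F^{(\alpha)}$ itself paired against $j_\infty^R$. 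Since $F^{(\alpha)} \in L^2$ and $j_0^R(y)\to 1$ strongly (equivalently $j_\infty^R(y)\to 0$ strongly) as $R\to\infty$, dominated convergence gives that the $L^2$-norm of the modified kernel tends to $0$; the $N_\tau$-estimates of \cite{GlimmJaffe} then convert this into an operator norm bound (relative to $H_0$, hence absorbed by the resolvent $(H-z)^{-1}$) that vanishes as $R\to\infty$.

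I expect the main obstacle to be bookkeeping rather than conceptual: keeping track of the $(-1)^N$ sign operators in the definition of $U_a$ and of the doubling of \emph{both} fermionic Fock spaces (electrons and positrons) together with the bosonic one, so that the intertwining identities and the error terms are written correctly; and ensuring that all the number-operator weights introduced when invoking Lemma \ref{lm:checkGamma}(ii) and the $N_\tau$-estimates are uniformly controlled on $\mathrm{Ran}(H-z)^{-1}\chi$-type sets with at most a polynomial loss in $|\mathrm{Im}\,z|^{-1}$, so that the Helffer--Sj\"ostrand integral converges. Once the error terms are correctly identified, each one is handled by the same two ingredients — strong convergence $j_0^R \to 1$ and the relative bounds from $N_\tau$-estimates — so no genuinely new estimate is needed beyond those already available in the excerpt.
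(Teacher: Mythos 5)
Your proposal matches the paper's proof essentially step for step: Helffer--Sj\"ostrand representation, resolvent identity, splitting the operator $H^{\mathrm{ext}}\check{\Gamma}_R - \check{\Gamma}_R H$ into free and interaction parts, Lemma~\ref{lm:checkGamma}(ii) with $\check{\mathrm{ad}}_\omega(j^R_\#) = O(R^{-1})$ for the free part, and Lemma~\ref{lm:checkGamma}(i) plus $N_\tau$-estimates plus strong convergence $j_\infty^R \to 0$ (so $\| j_l(\cdot) F^{(\alpha)}\|_{L^2} \to 0$) for the interaction part. The only quibble is presentational: the number-operator weights from Lemma~\ref{lm:checkGamma}(ii) are absorbed in the paper by the explicit bounds $\| N_\#^{1/2}(H-z)^{-1}\| \le C|\mathrm{Im}\,z|^{-1}$ (using $N_\# \lesssim H_0$, massive dispersion), rather than by ``energy localization from $\chi(H)$'' as you phrase it, but this is the same mechanism said differently.
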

\begin{proof}
Using the Helffer-Sj{\"o}strand functional calculus, we represent $\chi(H)$ as the integral
\begin{equation*}
\chi( H ) = \frac{ 1 }{ \pi } \int \frac{ \partial \tilde \chi }{ \partial \bar z }  ( z ) ( H - z )^{-1} \d \, \mathrm{Re} z \, \d \, \mathrm{Im} z ,
\end{equation*}
where $\tilde \chi \in \mathrm{C}_0^\infty( \mathbb{C} )$ denotes an almost analytic extension of $\chi$ satisfying $\tilde \chi |_{ \mathbb{R} } = \chi$ and $| \partial_{ \bar z } \tilde \chi( z ) | \le C_n | \mathrm{Im} \, z |^n$ for any $n \in \mathbb{N}$. The same representation holds for $\chi( H^{ \mathrm{ext} } )$, from which we deduce that
\begin{align*}
& \check{\Gamma}_R \chi( H ) - \chi( H^{\mathrm{ext}} ) \check{\Gamma}_R \\
&= \frac{ 1 }{ \pi } \int \frac{ \partial \tilde \chi }{ \partial \bar z }  ( z ) ( H^{ \mathrm{ext} } - z )^{-1} ( H^{ \mathrm{ext} } \check{\Gamma}_R - \check{\Gamma}_R H ) ( H - z )^{-1} \d \, \mathrm{Re} z \, \d \, \mathrm{Im} z.
\end{align*}
By Lemma \ref{lm:checkGamma}$(ii)$, together with
\begin{equation*}
\big \| N_\#^{\frac12} ( H - z )^{-1} \big \| \le C | \mathrm{Im} \, z |^{-1} , \qquad \big \| ( H^{ \mathrm{ext} } - z )^{-1} ( N_{\#,0} + N_{\#,\infty} )^{\frac12} \big \| \le C | \mathrm{Im} \, z |^{-1} ,
\end{equation*}
where $N_{\#}$ stands for $\Ne$, $\Np$ or $N_{ \z0 }$  (and likewise for $N_{ \# , 0 }$ and $N_{ \# , \infty }$), we obtain
\begin{align}
& \big \| ( H^{ \mathrm{ext} } - z )^{-1} ( H_0^{ \mathrm{ext} } \check{\Gamma}_R - \check{\Gamma}_R H_0 ) ( H - z )^{-1} \big \| \notag \\
& \le C \big ( \big \| \check{ \mathrm{ad} }_\omega ( j_a^R ) \big \| + \big \| \check{ \mathrm{ad} }_{\omega_3} ( j_s^R ) \big \| \big ) | \mathrm{Im} \, z |^{-2} .\label{eq:a1}
\end{align}
Here, $\omega$ is given by \eqref{eq:3} and $\omega_3$ is given by \eqref{eq:omega3}. Using e.g. pseudo-differential calculus, one easily verifies that $\big \| \check{ \mathrm{ad} }_\omega ( j_a^R ) \big \| = \mathcal{O}( R^{-1} )$ and $\big \| \check{ \mathrm{ad} }_{\omega_3} ( j_s^R ) \big \| = \mathcal{O}( R^{-1} )$, as $R \to \infty$. Hence, \eqref{eq:a1} combined with the properties of the almost analytic extension $\tilde \chi$ show that
\begin{align*}
\Big \| \int \frac{ \partial \tilde \chi }{ \partial \bar z }  ( z ) ( H^{ \mathrm{ext} } - z )^{-1} ( H_0^{ \mathrm{ext} } \check{\Gamma}_R - \check{\Gamma}_R H_0 ) ( H - z )^{-1} \d \, \mathrm{Re} z \, \d \, \mathrm{Im} z \Big \| = \mathcal{O}( R^{-1} ).
\end{align*}

It remains to estimate
\begin{align*}
\int \frac{ \partial \tilde \chi }{ \partial \bar z }  ( z ) ( H^{ \mathrm{ext} } - z )^{-1} ( H_I^{ \mathrm{ext} } \check{\Gamma}_R - \check{\Gamma}_R H_I ) ( H - z )^{-1} \d \, \mathrm{Re} z \, \d \, \mathrm{Im} z .
\end{align*}
The different interaction terms appearing in the definition \eqref{eq:interaction} of $H_I$ are treated in the same way. Consider for instance the interaction Hamiltonian $H_I^{(1)}$ given by \eqref{eq:interaction-1}, written under the form given in \eqref{eq:interaction-1b},
\begin{equation*}
 H_I^{(1)} = \int F^{(1)}(\xi_1,\, \xi_2,\, \xi_3) b_+^*(\xi_1) b_-^*(\xi_2) a(\xi_3)\, \d\xi_1\d\xi_2\d\xi_3\, ,
\end{equation*}
with $F^{(1)} \in L^2( \d\xi_1\d\xi_2\d\xi_3 )$. We let $H_I^{(1), \mathrm{ext} }$ be defined by the same expression, except that the creation and annihilation operators $b_+^*$,  $b_-^*$, $a$ are replaced by $b_+^{* , 0}$,  $b_-^{* , 0}$, $a^{0}$ defined above.  Using Lemma \ref{lm:checkGamma}$(i)$, a straightforward computation gives
\begin{align*}
& H_I^{ (1) , \mathrm{ext} } \check{\Gamma}_R - \check{\Gamma}_R H_I^{(1)} \\
& = \int  j_1( | i \nabla_{p_1} | , | i \nabla_{p_2} | , | i \nabla_k | ) F^{(1)}(\xi_1,\, \xi_2,\, \xi_3) b_+^{*, 0 }(\xi_1) b_-^{*, 0 }(\xi_2) a^{0}(\xi_3) \check{\Gamma}_R \, \d\xi_1\d\xi_2\d\xi_3 \, , \\
& + \sum_{l > 1} \int  j_l( | i \nabla_{p_1} | , | i \nabla_{p_2} | , | i \nabla_k | ) F^{(1)}(\xi_1,\, \xi_2,\, \xi_3) b_+^{*, \sharp }(\xi_1) b_-^{*, \sharp }(\xi_2) a^{ \sharp }(\xi_3) \check{\Gamma}_R \, \d\xi_1\d\xi_2\d\xi_3 \, ,
\end{align*}
where we have set $j_1( |y_1| , |y_2| ,| y_3| ) = 1 - j_0( |y_1| / R ) j_0( |y_2| / R ) j_0( |y_3| / R )$ and, for $l \neq 1$, $j_l( |y_1| , |y_2| , |y_3| )$ is of the form $j_l( |y_1| , |y_2| , |y_3| ) = j_{\#1}( |y_1| / R ) j_{\#2}( |y_2| / R )j_{\#3}( |y_3| / R )$ with $j_{\# i} = j_0$ or $j_{\# i} =j_\infty$, and at least one of the $j_{\#i}$'s is equal to $j_\infty$. Moreover, $b_+^{*,\sharp}$ stands for $b_+^{*,0}$ or $b_+^{*,\infty}$, and likewise for $b_-^{*,\sharp}$ and $a^{\sharp}$.

It follows from the $N_\tau$ estimates (see \cite{GlimmJaffe}) that
\begin{align*}
& \big \| ( H^{ \mathrm{ext} } - z )^{-1} ( H_I^{ (1) , \mathrm{ext} } \check{\Gamma}_R - \check{\Gamma}_R H_{I,(1)} ) ( H - z )^{-1} \big \| \\
& \le C | \mathrm{Im} \, z |^{-2} \sum_{l} \big \|  j_l( | i \nabla_{p_1} | , | i \nabla_{p_2} | , | i \nabla_k |) F^{(1)} \big \|.
\end{align*}
Therefore, using the fact that
\begin{equation*}
\big \|  j_l( | i \nabla_{p_1} | , | i \nabla_{p_2} | , | i \nabla_k |) F^{(1)} \big \| \to 0 ,
\end{equation*}
as $R \to \infty$ and the properties of $\tilde \chi$, we deduce that
\begin{align*}
\Big \| \int \frac{ \partial \tilde \chi }{ \partial \bar z }  ( z ) ( H^{ \mathrm{ext} } - z )^{-1} ( H_I^{ (1) , \mathrm{ext} } \check{\Gamma}_R - \check{\Gamma}_R H^{(1)}_I ) ( H - z )^{-1} \d \, \mathrm{Re} z \, \d \, \mathrm{Im} z \Big \| \to 0 ,
\end{align*}
as $R \to \infty$. Since the other interaction terms in \eqref{eq:interaction} are treated in the same way, this concludes the proof.
\end{proof}
We are now ready to prove Theorem \ref{thm:ess}.
\begin{proof}[Proof of Theorem \ref{thm:ess}] We prove that
\begin{equation}\label{eq:firstincl}
\mathrm{spec}_{ \mathrm{ess} } ( H ) \subset [ \inf \mathrm{spec} ( H ) + \me , \infty ).
\end{equation}
Let $\chi \in \mathrm{C}_0^\infty( ( - \infty , \inf \mathrm{spec} ( H ) + \me ))$. Since $\check{\Gamma}_R$ is isometric, we can write
\begin{align}\label{eq:a2}
\chi ( H ) = \check{\Gamma}_R^* \check{\Gamma}_R \chi ( H ) = \check{\Gamma}_R^* \chi( H^{ \mathrm{ext} } ) \check{\Gamma}_R + o_R(1) ,
\end{align}
where $o_R(1)$ stands for a bounded operator vanishing as $R \to \infty$. The last equality above follows from Lemma \ref{lm:Hext}. Observing that $N_{\mathrm{tot} , \infty }  := \Nei + \Npi + N_{ \z0 , \infty }$ commutes with $H^{ \mathrm{ext} }$ and that
\begin{equation*}
H^{ \mathrm{ext} } \1_{ [ 1 , \infty ) }( N_{\mathrm{tot} , \infty } ) \ge ( \inf \mathrm{spec} ( H ) + \me ) \1_{ [ 1 , \infty ) }( N_{\mathrm{tot} , \infty } ),
\end{equation*}
we deduce that
\begin{equation*}
\chi \big ( H^{ \mathrm{ext} } \big ) = \1_{ \{ 0 \} }( N_{\mathrm{tot} , \infty } ) \chi \big ( H^{ \mathrm{ext} } \big ).
\end{equation*}
Hence \eqref{eq:a2} yields
\begin{align}
\chi ( H ) & = \check{\Gamma}_R^* \1_{ \{ 0 \} }( N_{\mathrm{tot} , \infty } ) \chi \big ( H^{ \mathrm{ext} } \big ) \check{\Gamma}_R + o_R(1) \notag \\
& = \check{\Gamma}_R^* \1_{ \{ 0 \} }( N_{\mathrm{tot} , \infty } ) \check{\Gamma}_R \chi( H ) + o_R(1) , \label{eq:a3}
\end{align}
where we used again Lemma \ref{lm:Hext} in the last equality. Inspecting the definition of the operator $\check{\Gamma}_R$, it is easy to see that
\begin{equation*}
\check{\Gamma}_R^* \1_{ \{ 0 \} }( N_{\mathrm{tot} , \infty } ) \check{\Gamma}_R = \Gamma \big ( ( j_0^R )^2 \big ) \otimes \Gamma \big ( ( j_0^R )^2 \big ) \otimes \Gamma \big ( ( j_0^R )^2 \big ) .
\end{equation*}
Since
\begin{equation*}
\Gamma \big ( ( j_0^R )^2 \big ) \otimes \Gamma \big ( ( j_0^R )^2 \big ) \otimes \Gamma \big ( ( j_0^R )^2 \big ) ( H_0 + i )^{-1}
\end{equation*}
is compact, and since $(H_0 + i ) \chi( H )$ is bounded, we conclude that
\begin{equation*}
\check{\Gamma}_R^* \1_{ \{ 0 \} }( N_{\mathrm{tot} , \infty } ) \check{\Gamma}_R \chi( H )
\end{equation*}
is compact. Therefore, by \eqref{eq:a3}, the operator $\chi( H )$ is also compact, which proves \eqref{eq:firstincl}.

To prove the converse inclusion, it suffices to construct, for any $\lambda \in ( \inf \mathrm{spec} ( H ) + \me , \infty )$, a Weyl sequence associated to $\lambda$. This can be done in the same way as in \cite[Theorem 4.1]{DG} or \cite[Theorem 4.3]{ammari}. We do not give the details.
\end{proof}

\section{Proofs of Theorems \ref{thm:loc-spec} and \ref{thm:abs-cont}}\label{sec:LAP}

In this section, we prove Theorems \ref{thm:loc-spec} and \ref{thm:abs-cont} by applying a suitable version of Mourre's theory. We begin with defining the conjugate operator $A$ that we consider in Subsection \ref{subsection:conjugate}; We show that the semi-group generated by $A$ preserves the form domain of the total Hamiltonian $H$. In Subsection \ref{subsection:regularity}, we establish regularity of $H$ with respect to $A$ and in Subsection \ref{subsection:Mourre}, we prove a Mourre estimate. Putting all together, we finally deduce in Subsection \ref{subsection:proofs} that the statements of Theorems \ref{thm:loc-spec} and \ref{thm:abs-cont} hold.

\subsection{The conjugate operator and its associated semigroup}\label{subsection:conjugate}

Let $a$ be the operator on $L^2( \mathbb{R}_+ )$ defined by the expression
\begin{equation}\label{eq:defa}
a = \frac{ i }{ 2 } \big ( f(p) \partial_p + \partial_p f( p ) \big ) = i f( p ) \partial_p + \frac{ i }{ 2 } f'(p),
\end{equation}
where $f(p) := p^{-1} \omega( p ) = p^{-1} \sqrt{ p^2 + \me^2 }$ and $f'$ stands for the derivative of $f$. The operator $a$ with domain $\mathrm{C}_0^\infty( ( 0 , \infty ) )$ is symmetric; its closure is denoted by the same symbol.

We construct the $\mathrm{C}_0$-semigroup, $w_t$, associated with $-a$. Let
\begin{equation*}
g(p) := \int_0^p \frac{1}{f(r)} \d r = \sqrt{ p^2 + \me^2 } - \me.
\end{equation*}
Note that the function $g$ is bijective on $[ 0 , \infty )$ with inverse $g^{-1}(p) = \sqrt{ ( p + \me)^2  - \me^2 }$. Let $\phi_t ( p ) = g^{-1} ( t + g (p ) )$ be the flow associated to the vector field $f(p) \partial_p$, i.e. $\partial_t \phi_t(p) = f(p) \partial_p \phi_t(p)$, $\phi_0(p) = p$. Setting
\begin{equation*}
(w_t u )(p) := \big | \partial_p \phi_t(p) \big |^{\frac12} u ( \phi_t(p) ) ,
\end{equation*}
one easily verifies that $w_t$ is the (contraction) $\mathrm{C}_0$-semigroup associated with $-a$, namely $w_{t+s} = w_t w_s$ for $t,s \ge 0$, and $( \partial_t w_t u ) |_{t = 0 } (p) = - i (a u)(p)$. We observe that $a$ is maximal symmetric with deficiency index $n_{-}=\dim \mathrm{Ker} (a^* + i) = 0$.

On $\mathfrak{H}_{c} = \oplus_\gamma L^2(\R _+)$, the operator $\oplus_\gamma a$ is still denoted by the symbol $a$. Our conjugate operator, $A$, acting on the full Hilbert space $\mathcal{H} = \mathfrak{F}_a \otimes \mathfrak{F}_a \otimes \mathfrak{F}_{Z^0}$, is then given by
\begin{equation}\label{eq:defA}
A := \d \Gamma( a ) \otimes \1 \otimes \1 + \1 \otimes \d \Gamma( a ) \otimes \1.
\end{equation}
From the properties of $a$, we deduce that $A$ is maximal symmetric and that $-A$ generates the $\mathrm{C}_0$-semigroup
\begin{equation*}
W_t := \Gamma( w_t ) \otimes \Gamma( w_t ) \otimes \1.
\end{equation*}

The adjoint semigroup, $W_t^*$, with generator $A^*$, is given as follows: Let $\psi_t $ be defined on $[ 0 , \infty )$ by $\psi_t(p) := 0$ if $p \le \sqrt{ (t + \me )^2 - \me^2 }$ and $\psi_t(p) := g^{-1}( - t + g(p) )$ otherwise. One can verify that the adjoint semigroup of $w_t$ is the $\mathrm{C}_0$-semigroup of isometries given by
\begin{equation*}
(w_t^* u )(p) = \big | \partial_p \psi_t(p) \big |^{\frac12} u ( \psi_t(p) ) .
\end{equation*}
We deduce that
\begin{equation*}
W_t^* = \Gamma( w_t^* ) \otimes \Gamma( w_t^* ) \otimes \1 ,
\end{equation*}
and that $W_t^*$ is a $\mathrm{C}_0$-semigroup of isometries on $\mathcal{H}$.

The form domain of $H$ is denoted by
\begin{equation*}
\mathcal{G} := \D( |H|^{\frac12} ) = \D( H_0^{\frac12} ).
\end{equation*}
\begin{prop}\label{prop:semigroup}
For all $t \ge 0$, we have that
\begin{equation*}
W_t \, \mathcal{G} \subset \mathcal{G} , \quad W_t^* \, \mathcal{G} \subset \mathcal{G} ,
\end{equation*}
and
\begin{equation*}
\big \| H_0^{\frac12} W_t ( H_0^{\frac12} + \1 )^{-1} \big \| \le 1, \quad \big \| H_0^{\frac12} W_t^* ( H_0^{\frac12} + \1 )^{-1} \big \| \le 1.
\end{equation*}
In particular, Hypothesis \ref{hyp:M1} of Appendix \ref{appendix:Mourre} is satisfied.
\end{prop}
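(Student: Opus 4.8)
The plan is to reduce both assertions to explicit one-particle computations on $L^2(\mathbb{R}_+)$ and then lift them to Fock space by second quantization. Since $W_t$ and $W_t^*$ act as the identity on the bosonic factor and $\d\Gamma(\omega_3)$ commutes with them, it suffices to understand how $\Gamma(w_t)$ and $\Gamma(w_t^*)$ interact with one copy of $\d\Gamma(\omega)$ on $\mathfrak{F}_a(\mathfrak{H}_c)$, where $\omega(p) = \sqrt{p^2+\me^2}$.

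The first step is to prove the one-particle identities
\[
w_t^*\,\omega\,w_t = \1_{(\me+t,\infty)}(\omega)\,(\omega-t)\le\omega, \qquad w_t\,\omega\,w_t^* = \omega + t,
\]
as multiplication operators on $L^2(\mathbb{R}_+)$ (hence componentwise on $\mathfrak{H}_c = \oplus_\gamma L^2(\mathbb{R}_+)$). Plugging in the explicit formulas for $w_t$ and $w_t^*$ and changing variables along the flows $\phi_t$ and $\psi_t$, the Jacobian factors $|\partial_p\phi_t|^{1/2}$ and $|\partial_p\psi_t|^{1/2}$ cancel exactly against those generated by $\psi_t\circ\phi_t = \mathrm{id}$ and $\phi_t\circ\psi_t = \mathrm{id}$, leaving multiplication by $\omega\circ\psi_t$, respectively $\omega\circ\phi_t$; the crucial point is the identity $\omega(p) = g(p)+\me$, which gives $\omega(\phi_t(p)) = \omega(p)+t$ and $\omega(\psi_t(p)) = \omega(p) - t$ on the set $\{\omega>\me+t\}$ on which $\psi_t$ is nontrivial. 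The same change of variables also confirms $\|w_t\|\le 1$ and $w_t w_t^* = \1$ (so $w_t^*$ is an isometry and $w_t$ a co-isometry).

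The second step is second quantization. Using the elementary fact that $0\le X\le\1$ and $0\le Y$ imply $X\otimes Y\le\1\otimes Y$, applied term by term to $\d\Gamma(\omega) = \sum_i\omega_i$ on each $n$-particle sector together with $0\le w_t^*w_t\le\1$ and $0\le w_t^*\omega w_t\le\omega$, one gets the form inequality $\Gamma(w_t)^*\,\d\Gamma(\omega)\,\Gamma(w_t)\le\d\Gamma(\omega)$ on $\D(\d\Gamma(\omega)^{1/2})$. Summing the electron, positron and (unchanged) bosonic pieces of $H_0$ yields $\langle W_t\Psi, H_0 W_t\Psi\rangle\le\langle\Psi, H_0\Psi\rangle$ for every $\Psi\in\mathcal{G}$, hence $W_t\,\mathcal{G}\subset\mathcal{G}$ and $\|H_0^{1/2}W_t(H_0^{1/2}+\1)^{-1}\|\le 1$. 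For $W_t^*$, the co-isometry relation $w_tw_t^*=\1$ and $w_t\omega w_t^* = \omega+t$ give, by the same computation, $\Gamma(w_t^*)^*\,\d\Gamma(\omega)\,\Gamma(w_t^*) = \d\Gamma(\omega)+tN$, so that $\langle W_t^*\Psi, H_0 W_t^*\Psi\rangle = \langle\Psi, H_0\Psi\rangle + t\langle\Psi,(\Ne+\Np)\Psi\rangle$; combined with $H_0\ge\d\Gamma(H_D)\ge\me(\Ne+\Np)$ this gives $W_t^*\,\mathcal{G}\subset\mathcal{G}$ and the corresponding bound for $W_t^*$. The two displayed properties are precisely what Hypothesis \ref{hyp:M1} requires.

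The step I expect to be most delicate is making the one-particle identities rigorous near $p=0$: the vector field $f\partial_p$ is singular there and $w_t^*$ has a genuine truncation on $\{\omega\le\me+t\}$, so I would carry out the change of variables and the cancellation of Jacobians first on the core $\mathrm{C}_0^\infty((0,\infty))$, where the relevant localizations are preserved, and only then extend by density and closure. A second, more routine, point is to upgrade the second-quantization inequalities from finite-particle vectors to all of $\D(H_0^{1/2})$, which follows from the boundedness of $\Gamma(w_t)$ and $\Gamma(w_t^*)$ and a standard monotone-approximation argument.
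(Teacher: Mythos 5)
Your overall strategy is the same as the paper's: a one-particle estimate on conjugations of $\omega$ by $w_t$, $w_t^*$, lifted to Fock space through $\d\Gamma(w_t^*w_t,\,w_t^*\omega w_t)$ exactly as in the Ammari/Derezi\'nski--G\'erard framework the paper invokes. Your one-particle step is in fact sharper and more explicit than the paper's: the paper only proves $\|\omega\, w_t\,\omega^{-1}\|\le 1$ from $\phi_t(p)\ge p$ and then obtains $w_t^*\omega w_t\le\omega$ via the isometry of $w_t^*$ and an interpolation argument, whereas you compute the conjugations exactly, $w_t^*\omega w_t=\1_{(\me+t,\infty)}(\omega)(\omega-t)$ and $w_t\,\omega\, w_t^*=\omega+t$; these identities are correct (in the variable $E=g(p)$ the semigroup is unitarily equivalent to the truncated translation semigroup on $L^2(0,\infty)$, which makes them transparent), and your treatment of the $W_t$ half is complete and gives the stated bound, even without the $+\1$.

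The gap is in your final inference for $W_t^*$. Your own exact identity $\Gamma(w_t^*)^*\,\d\Gamma(\omega)\,\Gamma(w_t^*)=\d\Gamma(\omega)+tN$ gives
\[
\big\| H_0^{\frac12} W_t^* \Psi \big\|^2 \;=\; \big\| H_0^{\frac12}\Psi \big\|^2 \;+\; t\,\big\langle \Psi , (\Ne+\Np)\Psi \big\rangle ,
\]
and combined with $H_0\ge\me(\Ne+\Np)$ this yields only $\big\|H_0^{\frac12}W_t^*(H_0^{\frac12}+\1)^{-1}\big\|\le(1+t/\me)^{\frac12}$, not ``the corresponding bound'' with constant $1$ claimed in the statement. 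In fact your identity shows that the constant-$1$ bound for $W_t^*$ fails for every $t>0$: testing on $n$ electrons with momenta concentrated near $0$, the ratio of squared norms is approximately $(n\me+tn)/(n\me+2\sqrt{n\me}+1)\to 1+t/\me>1$ as $n\to\infty$. So you cannot assert the second displayed inequality for $W_t^*$ as written; what you actually prove --- invariance of $\mathcal{G}$ under $W_t^*$ with a bound uniform for $t\in(0,1)$ --- is exactly what Hypothesis \ref{hyp:M1} and the later $\mathrm{C}^{1,1}$-estimates (which only involve $t\in(0,1)$) require, so the ``in particular'' conclusion survives. Note that the paper's proof only details the $W_t$ half and dismisses $W_t^*$ as ``similar''; a literal transposition would need $w_t\,\omega\,w_t^*\le\omega$, which your computation disproves, so at this point you should state the weaker, $t$-dependent bound explicitly rather than claiming the inequality of the proposition.
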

\begin{proof}
We prove the statement for $W_t$, the proof for $W_t^*$ is similar. First, we show that $w_t \, \D( \omega ) \subset \D( \omega )$ and that
\begin{equation}\label{eq:a5}
\big \| \omega^{-\frac12} \, w_t^* \, \omega \, w_t  \, \omega^{-\frac12} \big \| \le 1,
\end{equation}
where, recall, $\omega$ is the multiplication operator by $\omega( p ) = \sqrt{ p^2 + \me^2 }$ on $L^2( \mathbb{R}_+ )$. For any $u \in \mathrm{C}_0^\infty( ( 0 , \infty ) )$, we have that
\begin{align*}
\| \omega \, w_t \, u \|^2 = \int \omega( p )^2 \big | \partial_p \phi_t( p ) \big |Ê\big | u ( \phi_t ( p ) ) \big |^2 \d p .
\end{align*}
Using the definition of $\phi_t$, one sees that $\phi_t(p) \ge p$ for all $t \ge 0$, and hence
\begin{align*}
\| \omega \, w_t \, u \|^2 \le \int \omega( \phi_t( p ) )^2 \big | \partial_p \phi_t( p ) \big |Ê\big | u ( \phi_t ( p ) ) \big |^2 \d p = \| \omega \, u \|^2.
\end{align*}
Since $\mathrm{C}_0^\infty( ( 0 , \infty ) )$ is a core for $\omega$, this implies that $w_t \, \D( \omega ) \subset \D( \omega )$ and that
\begin{equation*}
\big \| \omega \, w_t \, \omega^{-1} \big \| \le 1.
\end{equation*}
Using the fact that $w_t^*$ is isometric and an interpolation argument, we obtain \eqref{eq:a5}.

Now, let $\varphi \in \mathfrak{F}_{a , \mathrm{fin} }( \D( \omega ) ) \otimes \mathfrak{F}_{a , \mathrm{fin} }( \D( \omega ) ) \otimes \mathfrak{F}_{Z^0}$, where $\mathfrak{F}_{a , \mathrm{fin} }( \D( \omega ) )$ denotes the set of vectors $( \varphi_0 , \varphi_1 , \dots )$ in  $\oplus_{n=0}^\infty \, \otimes_a^n \D( \omega )$ (algebraic tensor product) such that $\varphi_n = 0$ for all but finitely many $n$'s. We compute
\begin{align*}
\big \| H_0^{\frac12} W_t \varphi \big \|^2 &= \big \langle \varphi , W_t^* H_0 W_t \varphi \big \rangle \\
& = \big \langle \varphi , \big ( \d \Gamma( w_t^* w_t , w_t^* \omega w_t ) \otimes \Gamma( w_t^* w_t ) \otimes \1 \\
& \qquad \qquad +  \Gamma( w_t^* w_t )  \otimes \d \Gamma( w_t^* w_t , w_t^* \omega w_t ) \otimes \1 \\
& \qquad \qquad +  \Gamma( w_t^* w_t ) \otimes \Gamma( w_t^* w_t ) \otimes \d \Gamma( \omega_3 ) \big ) \varphi \big \rangle ,
\end{align*}
where, for $c_1$, $c_2$ operators on $\mathfrak{H}_c$, the operator $\d \Gamma( c_1 , c_2 )$ on $\mathfrak{F}_a$ is defined by (see \cite{ammari,DG})
  \begin{align*}
& \d \Gamma( c_1, c_2 ) \Omega_a = 0 , \\
& \d \Gamma( c_1, c_2 )\vert_{  {\otimes_a^n} \mathfrak{H}_c } = \sum_{j=1}^{n} \underbrace{c_1 \otimes \cdots \otimes c_1 }_{j-1} \otimes c_2 \otimes \underbrace{c_1 \otimes \cdots \otimes c_1 }_{n - j}.
\end{align*}
Combining \eqref{eq:a5}, the bound $\| w_t^* w_t \| \le 1$, and \cite[Lemma 2.3]{ammari} (see also \cite[Lemma 2.8]{DG}), we obtain
\begin{align*}
\big \| H_0^{\frac12} W_t \varphi \big \|^2 &\le  \big \| ( \d \Gamma( \omega )^{\frac12}  \otimes \1 \otimes \1 \big ) \varphi \big \|^2 + \big \| (  \1 \otimes \d \Gamma( \omega )^{\frac12}  \otimes  \1 \big ) \varphi \big \|^2 \\
&\quad + \big \| (  \1 \otimes \1 \otimes \d \Gamma( \omega_3 )^{\frac12}  \big ) \varphi \big \|^2 \\
& = \big \| H_0^{\frac12} \varphi \big \|^2.
\end{align*}
This concludes the proof.
\end{proof}

\subsection{Regularity of the Hamiltonian with respect to the conjugate operator}\label{subsection:regularity}

Recall that the conjugate operator $A$ is defined by the expressions \eqref{eq:defa} and \eqref{eq:defA}. In this subsection, we prove the following proposition.
\begin{prop}\label{thm:interpo}
Assume that Hypothesis \ref{hyp:2} holds. Let $|g| \ll \me$. Then we have that
\begin{equation*}
H \in \mathrm{C}^{1,1}( A_{ \mathcal{G} } ; A_{\mathcal{G}^*} ) ,
\end{equation*}
in the sense of Hypothesis \ref{hyp:M3} of Appendix \ref{appendix:Mourre}.
\end{prop}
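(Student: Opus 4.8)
The plan is to verify the abstract criterion from \cite{GGM1,GGM2} recalled in Appendix~\ref{appendix:Mourre} (Hypothesis~\ref{hyp:M3}) for membership in $\mathrm{C}^{1,1}(A_{\mathcal{G}};A_{\mathcal{G}^*})$. By Proposition~\ref{prop:semigroup} we already know that $W_t$ and $W_t^*$ preserve the form domain $\mathcal{G}=\D(H_0^{1/2})$ with uniform bounds, so Hypothesis~\ref{hyp:M1} holds and the commutator form $[H,iA]$ makes sense on $\mathcal{G}\times\mathcal{G}$; what remains is the $\mathrm{C}^{1,1}$ (Besov-type) regularity condition, which in this setting amounts to a Dini-type integrability estimate such as $\int_0^1 \| (W_t - \1) H (W_t^* - \1) \|_{\mathcal{B}(\mathcal{G},\mathcal{G}^*)}\, t^{-2}\,\d t < \infty$, or the equivalent second-difference formulation. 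I would split $H = H_0 + gH_I$ and treat the two pieces separately.

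For the free part $H_0$, I would compute the commutator explicitly. Since $a$ was designed so that $i[\omega,a]$ is bounded — indeed the choice $f(p)=p^{-1}\omega(p)$ makes $f\,\omega' = 1$, so $[\d\Gamma(\omega),i\d\Gamma(a)] = \d\Gamma(i[\omega,a])$ with $i[\omega,a]$ a bounded (smooth, decaying) multiplication operator — the form $[H_0,iA]$ extends to a bounded operator from $\mathcal{G}$ to $\mathcal{G}^*$ (in fact it is controlled by $H_0$). One then has to check the second-order Dini condition; this follows because the flow $\phi_t$ is smooth in $t$ and the relevant symbols are smooth with all derivatives bounded, so $W_t$ is actually $\mathrm{C}^1$ in $t$ in the appropriate operator topology, which is far stronger than $\mathrm{C}^{1,1}$. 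So $H_0 \in \mathrm{C}^\infty(A_{\mathcal{G}};A_{\mathcal{G}^*})$ trivially, and the issue is entirely in $H_I$.

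For the interaction, the key observation is that conjugating a creation operator $b_+^*(\xi_1)$ by $W_t = \Gamma(w_t)\otimes\Gamma(w_t)\otimes\1$ replaces the form factor $F^{(\alpha)}$ by $w_t^{\otimes}$ acting on its fermionic variables, i.e. $W_t H_I^{(\alpha)}(F^{(\alpha)}) W_t^* = H_I^{(\alpha)}(w_t\otimes w_t\otimes\1\, F^{(\alpha)})$ modulo the isometry discrepancy $w_t w_t^* \ne \1$, which is itself controllable by the $N_\tau$-estimates of \cite{GlimmJaffe}. Using the standard $N_\tau$-bound $\|H_I^{(\alpha)}(G)\|_{\mathcal{B}(\mathcal{G},\mathcal{G}^*)} \lesssim \| (\text{weight})\, G\|_{L^2}$, the whole problem reduces to showing that the map $t \mapsto w_t \otimes w_t$ acting on $F^{(\alpha)}$ (weighted appropriately by the $A(\xi_j)$'s of \eqref{eq:def-apj}) is $\mathrm{C}^{1,1}$ in $t$ as an $L^2$-valued function. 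Now $w_t$ is, up to a Jacobian factor, the pullback by the flow $\phi_t$ whose generator is the vector field $f(p)\partial_p$; differentiating once in $t$ brings down exactly the operator $a$ (i.e. a factor $\sim \omega(p)\partial_p/p$), and the second difference is governed by $a^2$ applied to $F^{(\alpha)}$. This is where Hypothesis~\ref{hyp:2}(ii) enters: the $(1+x_1^2+x_2^2)^{1+\varepsilon}$ weight on $\hat G^{(\alpha)}$ is precisely what is needed so that $\partial_p$-type operators applied twice stay in $L^2$ (this gives the fractional Sobolev regularity of order $>1$ in the $p$-variables, absorbing the one extra derivative beyond $\mathrm{C}^1$), while Hypothesis~\ref{hyp:2}(iii) handles the singularity of the vector field $f(p)\partial_p$ at $p=0$ coming from the $p^{-1}$ factor — for $j=1/2$ the eigenfunctions only vanish to low order at $p=0$, so without the extra infrared vanishing of $G^{(\alpha)}$ the commutator would not be square-integrable near $p_1=0$ or $p_2=0$. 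The cleanest way to organize the square-integrability-of-the-second-difference estimate is real interpolation: identify an interpolation couple (an $L^2$ with polynomial weight at one endpoint, a weighted Sobolev space at the other) such that $H_I$ maps boundedly into the space at the right interpolation index, and invoke the abstract $\mathrm{C}^{1,1}$ criterion via interpolation as in \cite{ABG,GGM1}.

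The main obstacle, and the technical heart of the argument, is the low-momentum (infrared) region $p_j \to 0$: the conjugate operator $a \sim i\,\omega(p) p^{-1}\partial_p$ has a coefficient blowing up like $p^{-1}$ at the origin, so iterated commutators pick up factors of $p^{-1}$ and $p^{-2}$, and one must show these are compensated by the vanishing of the form factors dictated by the behavior of the Dirac generalized eigenfunctions $\psi_\pm(p,\gamma)$ near $p=0$ (analyzed in Appendix~\ref{appendixA}) together with Hypothesis~\ref{hyp:2}(iii). For all $j\ge 3/2$ the eigenfunctions vanish fast enough ($\sim p^{\gamma_j}$ with $\gamma_j\ge 2$) that no extra assumption is needed, but for $j=1/2$ one genuinely needs the imposed vanishing of $G^{(\alpha)}$ at $p_j=0$; getting the bookkeeping of these weights right across all angular-momentum channels $\gamma$ simultaneously, uniformly, is the delicate part. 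The rest — the $N_\tau$-estimates, the smoothness of the flow, the interpolation step, and combining with the $|g|\ll\me$ smallness to stay within the perturbative regime — is routine once this weighted estimate is in place.
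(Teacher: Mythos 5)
Your proposal is correct and follows essentially the same route as the paper: split $H=H_0+gH_I$, observe that $[H_0,iA]=\Ne\otimes\1\otimes\1+\1\otimes\Np\otimes\1$ commutes with $A$ so $H_0$ is as regular as one likes, and for $H_I$ pass the $W_t$-conjugation onto the kernels $F^{(\alpha)}$, reduce via $N_\tau$-estimates to weighted $L^2$-bounds on (first and second) applications of $a$ to $F^{(\alpha)}$, and close the argument by real interpolation between a $\mathrm{C}^0$-endpoint and a $\mathrm{C}^1$-endpoint (the paper does this through Lemma~\ref{lem:interpo}, interpolating between kernel spaces $S_{(i)}$ and $S_{(ii)}$ to land in $\mathrm{C}^{\theta,2}\subset\mathrm{C}^{0,1}$ for the commutator $[H_I,iA]$). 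Your identification of the infrared difficulty — the $p^{-1}$ blowup of $f(p)\partial_p$ at $p=0$, compensated by the vanishing $p^{\gamma_j}$ of the Dirac eigenfunctions for $j\ge3/2$ and by Hypothesis~\ref{hyp:2}(iii) for $j=1/2$ — matches the paper's treatment via the factorization $F^{(\alpha)}=p_1p_2\,s^{(\alpha)}\tilde G^{(\alpha)}$ and Hardy's inequality.
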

To prove Proposition \ref{thm:interpo}, we use real interpolation. We have that
\begin{equation*}
[ H_0 , i A ] = \Ne \otimes \1 \otimes \1 + \1 \otimes \Np \otimes \1 ,
\end{equation*}
in the sense of quadratic forms on $\D( H_0 ) \cap \D( A )$. Since $\D( H_0 ) \cap \D( A )$ is a core for $H_0$ and since $\Ne \otimes \1 \otimes \1 + \1 \otimes \Np \otimes \1$ is relatively $H_0$-bounded, Proposition \ref{prop:semigroup} together with Proposition \ref{prop:B3} imply that $H_0$ belongs to $\mathrm{C}^{1}( A_{ \mathcal{G} } ; A_{\mathcal{G}^*} )$. Next, since $[ H_0 , i A ]$ commutes with $A$, we easily deduce that $H_0 \in \mathrm{C}^{2}( A_{ \mathcal{G} } ; A_{\mathcal{G}^*} )$, and hence in particular $H_0 \in \mathrm{C}^{1,1}( A_{ \mathcal{G} } ; A_{\mathcal{G}^*} )$.
 Here we recall that, for all $0 \le \theta \le 1$ and $1 \le q < \infty$,
\begin{align}\label{def:Cthetaq}
\mathrm{C}^{\theta , q}( A_{ \mathcal{G} } ; A_{\mathcal{G}^*} ) := \Big \{ & T \in \mathcal{B}( \mathcal{G} ; \mathcal{G}^* ) , W_t^* T W_t - T \in \mathcal{B}( \mathcal{G} ; \mathcal{G}^* ) \text{ for all } t \in (0,1) ,  \notag \\
& \int_0^1 t^{-\theta q - 1 } \big \| W_t^* T W_t - T \big \|^q_{ \mathcal{B}( \mathcal{G} ; \mathcal{G}^* ) } \d t < \infty \Big \} .
\end{align}

In order to prove that $H \in \mathrm{C}^{1,1}( A_{ \mathcal{G} } ; A_{\mathcal{G}^*} )$, it remains to show that the interaction Hamiltonian $H_I \in \mathrm{C}^{1,1}( A_{ \mathcal{G} } ; A_{\mathcal{G}^*} )$. Using in particular Proposition \ref{prop:B3}, we see that it suffices, in fact,  to verify that the commutator $[ H_I , i A ]$ belongs to $\mathcal{B} ( \mathcal{G} ; \mathcal{G}^* )$ and that $[ H_I , i A ] \in \mathrm{C}^{0,1}( A_{ \mathcal{G} } ; A_{\mathcal{G}^*} )$. This is the purpose of the remainder of this section.

We use the notation \eqref{eq:interaction-12c}. Using Hypothesis \ref{hyp:2} and the estimates of Appendix \ref{appendixA} (see \eqref{eq:deriv-estimate-1}--\eqref{eq:deriv-estimate-2} and \eqref{eq:deriv2nd-estimate-1}--\eqref{eq:deriv2nd-estimate-4}), we can rewrite
\begin{align}
F^{(\alpha)}(\xi_1,\, \xi_2,\, \xi_3) := \tilde h^{(\alpha)}( \xi_1 , \xi_2 , \xi_3 ) \tilde G^{(\alpha)}( \xi_1 , \xi_2 , \xi_3 )   , \label{eq:interaction-12new}
\end{align}
where $\tilde h^{(\alpha)}( \xi_1 , \xi_2 , \xi_3 )$ is of the form
\begin{align}
\tilde h^{(\alpha)}( \xi_1 , \xi_2 , \xi_3 ) = p_1 p_2 s^{(\alpha)}( \xi_1 , \xi_2 , \xi_3 ) ,
\end{align}
with $s^{( \alpha )}$ satisfying, for all $n,m \in \{ 0 , 1 , 2 \}$,
\begin{align}\label{eq:estimates_s}
\big | \partial^n_{p_1} \partial_{p_2}^m s^{(\alpha)} ( \xi_1 , \xi_2 , \xi_3 ) \big | \lesssim p_1^{-n} p_2^{-m} ,
\end{align}
in a neighborhood of $0$.

 Moreover the kernels $\tilde G^{(\alpha)}$ satisfy
\begin{itemize}
\item[($a_0$)] There exists a compact set $K \subset \mathbb{R}_+ \times \mathbb{R}_+ \times \mathbb{R}^3$ such that $\tilde G^{(\alpha)} ( p_1 , \gamma_1 , p_2 , \gamma_2 , k , \lambda ) = 0$ if $( p_1 , p_2 , k ) \notin K$.
\item[($b_0$)] There exists $\varepsilon > 0$ such that
\begin{align*}
\sum_{ \gamma_1 , \gamma_2 , \lambda } \int ( 1 + x_1^2 + x_2^2 )^{1 + \varepsilon } \left | \hat{ \tilde G}^{(\alpha)}( x_1 , \gamma_1 , x_2 , \gamma_2 , k , \lambda )\right|^2 \d x_1 \d x_2 \d k < \infty ,
\end{align*}
where, recall, $\hat{ \tilde G}^{(\alpha)}$ denote the Fourier transform of $\tilde G^{(\alpha)}$ with respect to the variables $(p_1,p_2)$, and $x_l$, $l=1,2$, is the variable dual to $p_l$.
\item[($c_0$)] If $p_1 = 0$ or $p_2 = 0$, then $\tilde G^{(\alpha)} ( p_1 , \gamma_1 , p_2 , \gamma_2 , k , \lambda ) = 0$.
\end{itemize}

Our strategy consists in working with interaction operators of the form \eqref{eq:interaction} with $H_I^{(1)}$, $H_I^{(2)}$ given by \eqref{eq:interaction-1b}--\eqref{eq:interaction-2b} and $F^{(1)}$, $F^{(2)}$ satisfying \eqref{eq:interaction-12new}--\eqref{eq:estimates_s}. We then use an interpolation argument for the kernels $\tilde G^{(\alpha)}$.
\begin{lem}\label{lem:interpo}
Consider the operator $H_I$ of the form \eqref{eq:interaction} with $H_I^{(1)}$, $H_I^{(2)}$ given by \eqref{eq:interaction-1b}--\eqref{eq:interaction-2b} and $F^{(1)}$, $F^{(2)}$ satisfying \eqref{eq:interaction-12new}--\eqref{eq:estimates_s}.
\begin{itemize}
\item[(i)] Suppose that $\tilde G^{(\alpha)} \in L^2( \Sigma \times \Sigma \times \Sigma_3 )$ satisfy the following conditions
\begin{itemize}
\item[(i)(a)] There exists a compact set $K \subset \mathbb{R}_+ \times \mathbb{R}_+ \times \mathbb{R}^3$ such that $\tilde G^{(\alpha)} ( p_1 , \gamma_1 , p_2 , \gamma_2 , k , \lambda ) = 0$ if $( p_1 , p_2 , k ) \notin K$.
\item[(i)(b)]
\begin{align*}
\sum_{ \gamma_1 , \gamma_2 , \lambda } \int ( 1 + x_1^2 + x_2^2 ) \left | \hat{ \tilde G}^{(\alpha)}( x_1 , \gamma_1 , x_2 , \gamma_2 , k , \lambda )\right|^2 \d x_1 \d x_2 \d k < \infty .
\end{align*}
\item[(i)(c)] If $p_1 = 0$ or $p_2 = 0$, then $\tilde G^{(\alpha)} ( p_1 , \gamma_1 , p_2 , \gamma_2 , k , \lambda ) = 0$.
\end{itemize}
Then $H'_I = [ H_I , i A ] \in \mathrm{C}^{0}( A_{ \mathcal{G} } ; A_{\mathcal{G}^*} ) \equiv \mathcal{B} ( \mathcal{G} ; \mathcal{G}^* )$.
\item[(ii)] Suppose that $\tilde G^{(\alpha)} \in L^2( \Sigma \times \Sigma \times \Sigma_3 )$ satisfy the following conditions
\begin{itemize}
\item[(ii)(a)] There exists a compact set $K \subset \mathbb{R}_+ \times \mathbb{R}_+ \times \mathbb{R}^3$ such that $\tilde G^{(\alpha)} ( p_1 , \gamma_1 , p_2 , \gamma_2 , k , \lambda ) = 0$ if $( p_1 , p_2 , k ) \notin K$.
\item[(ii)(b)]
\begin{align*}
\sum_{ \gamma_1 , \gamma_2 , \lambda } \int ( 1 + x_1^2 + x_2^2 )^3 \left | \hat{ \tilde G}^{(\alpha)}( x_1 , \gamma_1 , x_2 , \gamma_2 , k , \lambda )\right|^2 \d x_1 \d x_2 \d k < \infty .
\end{align*}
\item[(ii)(c)] If $p_1 = 0$ or $p_2 = 0$, then $D^\beta \tilde G^{(\alpha)} ( p_1 , \gamma_1 , p_2 , \gamma_2 , k , \lambda ) = 0$ for all multi-index $\beta = ( \beta_1 , \beta_2 )$, $| \beta | \le 2$, with $D^\beta = \partial^{\beta_1+\beta_2 } / \partial_{x_1^{\beta_1}} \partial_{x_2^{\beta_2}}$.
\end{itemize}
Then $H'_I = [ H_I , i A ] \in \mathrm{C}^{1}( A_{ \mathcal{G} } ; A_{\mathcal{G}^*} )$.
\end{itemize}
\end{lem}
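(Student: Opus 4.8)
The goal is to prove that the first-commutator operator $H'_I = [H_I, iA]$ belongs to $\mathrm{C}^0(A_{\mathcal{G}}; A_{\mathcal{G}^*})$ (part (i)) and to $\mathrm{C}^1(A_{\mathcal{G}}; A_{\mathcal{G}^*})$ (part (ii)), under the corresponding regularity hypotheses on the smoothed kernels $\tilde G^{(\alpha)}$. The strategy is to compute the commutator explicitly at the level of kernels, reduce everything to showing that certain transformed kernels again define operators of the form \eqref{eq:interaction-1b}--\eqref{eq:interaction-2b} with $L^2$ symbols (so that the $N_\tau$-type bounds of Corollary \ref{cor-rel-bound} apply and yield boundedness $\mathcal{G} \to \mathcal{G}^*$), and finally iterate once more for part (ii). The semigroup $W_t$ from Proposition \ref{prop:semigroup} acts on the fermionic factors as $\Gamma(w_t) \otimes \Gamma(w_t)$ and trivially on the bosonic factor, so the conjugation $W_t^* H_I W_t$ only modifies the electron and positron variables $p_1, p_2$ through the one-particle flow $\phi_t$.

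\textbf{Step 1: the commutator as a kernel operation.} Working on the one-particle level, $-a$ generates $w_t$ with $(w_t u)(p) = |\partial_p \phi_t(p)|^{1/2} u(\phi_t(p))$. A second-quantization computation (as in \cite{ammari,DG}) shows that conjugating a monomial $b_+^*(\xi_1) b_-^*(\xi_2) a(\xi_3)$ by $W_t$ amounts, on the kernel side, to replacing $F^{(\alpha)}(\xi_1,\xi_2,\xi_3)$ by $(w_t^{\otimes} F^{(\alpha)})(\xi_1,\xi_2,\xi_3)$, where $w_t$ acts in the $p_1$ and $p_2$ variables only. Differentiating at $t=0$ gives that, in the form sense on $\mathcal{G}$, $H'_I$ is again of the form \eqref{eq:interaction} with kernels $(a_{p_1} + a_{p_2}) F^{(\alpha)}$, where $a_p = i f(p)\partial_p + \tfrac{i}{2} f'(p)$ acts in the indicated variable. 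Since $F^{(\alpha)} = p_1 p_2\, s^{(\alpha)}\, \tilde G^{(\alpha)}$ with $s^{(\alpha)}$ obeying the scaling bounds \eqref{eq:estimates_s} and $f(p) = p^{-1}\omega(p) \sim \me p^{-1}$ near $0$, the singular factor $p^{-1}$ coming from $f$ is exactly compensated by the $p_1 p_2$ prefactor and by the vanishing of $\tilde G^{(\alpha)}$ at $p_1 = 0$ or $p_2 = 0$ in hypothesis (i)(c); one checks that the resulting kernel is again bounded near $0$ times $\tilde G^{(\alpha)}$ (plus a derivative of $\tilde G^{(\alpha)}$). The derivative $\partial_p \tilde G^{(\alpha)}$ is controlled in $L^2$ by the weighted condition (i)(b): indeed $\|p\, \partial_p (\text{stuff}\cdot \tilde G)\|$ and $\|\tilde G\|$-type norms are dominated by $\sum \int (1+x_1^2+x_2^2)|\hat{\tilde G}|^2$ via Plancherel, since multiplication by $p_j$ in momentum space is differentiation in $x_j$, and the compact support hypothesis (i)(a) makes multiplication by bounded smooth symbols harmless. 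Hence the kernel of $H'_I$ is in $L^2$, and Corollary \ref{cor-rel-bound} (the $N_\tau$ estimate) gives $H'_I \in \mathcal{B}(\mathcal{G};\mathcal{G}^*)$, proving (i).

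\textbf{Step 2: the second commutator for part (ii).} To get $H'_I \in \mathrm{C}^1(A_{\mathcal{G}};A_{\mathcal{G}^*})$ one repeats Step 1: $[H'_I, iA]$ is of the form \eqref{eq:interaction} with kernels $(a_{p_1}+a_{p_2})^2 F^{(\alpha)}$. Expanding, this produces terms with up to two factors of $f$ (hence up to $p_1^{-2} p_2^{-2}$ singularities), up to two $p$-derivatives landing on $s^{(\alpha)}$ — controlled by the $n,m \le 2$ bounds in \eqref{eq:estimates_s} — and up to two $p$-derivatives on $\tilde G^{(\alpha)}$. The prefactor $p_1 p_2$ in $\tilde h^{(\alpha)}$ only absorbs one power of each singularity, so one genuinely needs hypothesis (ii)(c): the vanishing of $D^\beta \tilde G^{(\alpha)}$ up to order $2$ at $p_1=0$ or $p_2=0$ ensures that after differentiating twice the remaining function still vanishes appropriately and the net kernel stays bounded near the origin times a second-order-weighted multiple of $\tilde G^{(\alpha)}$. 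The $L^2$ control of those weighted derivatives then follows from the stronger weight $(1+x_1^2+x_2^2)^3$ in (ii)(b) (the exponent $3$ because one needs two $x$-derivatives from two $p$-multiplications, plus room to spare from the compactly supported smooth symbols). Applying Corollary \ref{cor-rel-bound} once more shows $[H'_I, iA] \in \mathcal{B}(\mathcal{G};\mathcal{G}^*)$; combined with part (i) and the semigroup-invariance of $\mathcal{G}$ from Proposition \ref{prop:semigroup}, this yields $H'_I \in \mathrm{C}^1(A_{\mathcal{G}};A_{\mathcal{G}^*})$.

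\textbf{Main obstacle.} The delicate point is Step 1's bookkeeping near $p_1 = 0$ and $p_2 = 0$: one must carefully track how many powers of the singular factor $f(p) \sim \me/p$ are generated against how many powers of $p$ are supplied by the prefactor $p_1 p_2$ and by the infrared vanishing of $\tilde G^{(\alpha)}$ (or its derivatives). This is precisely where Hypothesis \ref{hyp:2}(iii), translated through Appendix \ref{appendixA} into conditions ($c_0$) / (i)(c) / (ii)(c), is indispensable — and it is the reason the ``first'' generalized eigenstates ($j=1/2$, i.e. $\gamma_{lj}=1$) require the extra non-physical infrared regularization. A secondary technical point is justifying the passage from the formal kernel-level commutator to the rigorous identity $W_t^* H_I W_t - H_I = \int_0^t W_s^*(\cdots)W_s\,\mathrm{d}s$ as an equality in $\mathcal{B}(\mathcal{G};\mathcal{G}^*)$; this is handled by the standard argument using that $\mathcal{G}$ is $W_t$-invariant with uniformly bounded $H_0^{1/2}$-norm (Proposition \ref{prop:semigroup}) together with the relative bound of Corollary \ref{cor-rel-bound}, exactly as in Appendix \ref{appendix:Mourre}.
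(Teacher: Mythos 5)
Your route is the same as the paper's: express $[H_I,iA]=H_I(-ia_1F)+H_I(-ia_2F)$ and $[[H_I,iA],iA]=-H_I(a_1^2F)-H_I(a_2^2F)-2H_I(a_1a_2F)$ at the kernel level, show that the transformed kernels lie in $L^2(\d\xi_1\d\xi_2\d\xi_3)$, and conclude with the $N_\tau$ estimates (Corollary \ref{cor-rel-bound}) and Proposition \ref{prop:semigroup}. However, there is a genuine gap at the central estimate. In part (i) you claim the singular factor $f(p_1)\sim \me\, p_1^{-1}$ is ``exactly compensated'' by the prefactor $p_1p_2$ and the vanishing (i)(c), so that the kernel is ``bounded near $0$ times $\tilde G^{(\alpha)}$''. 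This is not true pointwise: the term $i p_2 f(p_1) s^{(\alpha)} \tilde G^{(\alpha)}$ coming from \eqref{eq:aaa1} is only bounded by $p_1^{-1}p_2 |\tilde G^{(\alpha)}|$, and the mere vanishing of $\tilde G^{(\alpha)}$ at $p_1=0$ does not dominate $p_1^{-1}|\tilde G^{(\alpha)}|$ by $C|\tilde G^{(\alpha)}|$ (consider $\tilde G^{(\alpha)}\sim p_1^{1/2+\epsilon}$ near $0$). The missing quantitative mechanism is Hardy's inequality at the origin: conditions (i)(b)--(i)(c) give $p_1\mapsto \tilde G^{(\alpha)}(p_1,\gamma_1,\xi_2,\xi_3)\in H_0^1(\R_+)$, whence $\| p_1^{-1}p_2\tilde G^{(\alpha)}\|_{L^2} \lesssim \| p_2\,\partial_{p_1}\tilde G^{(\alpha)}\|_{L^2}$, finite by (i)(b) and the compact support (i)(a). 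This is exactly how the paper closes the estimate; without it, your argument that $a_lF^{(\alpha)}\in L^2$ is not established.

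The same gap is amplified in part (ii), where in addition your singularity bookkeeping is off. The worst terms in $a_1^2F$ are not of type $p_1^{-2}p_2^{-2}$: they involve $f(p_1)f'(p_1)$ or $f'(p_1)^2$ against the single prefactor $p_1$ (recall $f'(p)\sim -\me\,p^{-2}$), hence behave like $p_1^{-3}p_2|\tilde G^{(\alpha)}|$, together with terms like $p_1^{-2}p_2|\partial_{p_1}\tilde G^{(\alpha)}|$; the worst mixed term in $a_1a_2F$ is of order $p_1^{-1}p_2^{-1}|\tilde G^{(\alpha)}|$. This is precisely why the weight exponent $3$ in (ii)(b) and the vanishing of derivatives up to order $2$ in (ii)(c) are required: they give $p_1\mapsto \tilde G^{(\alpha)}\in H_0^3(\R_+)$, so that the third-order Hardy inequality $\| p_1^{-3}p_2\tilde G^{(\alpha)}\|_{L^2}\lesssim \| p_2\,\partial^3_{p_1}\tilde G^{(\alpha)}\|_{L^2}$ (and its second-order analogue applied to $\partial_{p_1}\tilde G^{(\alpha)}$) can be used. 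Your justification of the exponent (``two $x$-derivatives from two $p$-multiplications, plus room to spare'') does not produce the needed third derivative and would suggest an $H^2$-type weight suffices, which it does not for these terms; as written, the implication (ii)(b)--(ii)(c) $\Rightarrow$ $a_1^2F,\,a_2^2F,\,a_1a_2F\in L^2$ is not proved. Once the Hardy step is inserted at first order for (i) and at the appropriate orders for (ii), the remainder of your argument (the form identity \eqref{eq:comm-HI}, \eqref{eq:second-comm-HI}, the conclusion via Corollary \ref{cor-rel-bound}, and the passage to $\mathrm{C}^{0}$, $\mathrm{C}^{1}$ via Proposition \ref{prop:semigroup}) coincides with the paper's proof.
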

\begin{proof}
$(i)$ Recall that the conjugate operator $A$ is defined by Eq. \eqref{eq:defA}, with
\begin{equation*}
a = i f( p ) \partial_p + \frac{ i }{ 2 } f'(p) ,
\end{equation*}
and $f(p) = p^{-1} \sqrt{ p^2 + \me^2 }$. We use the notation $a_l = i f( p_l ) \partial_{p_l } + \frac{ i }{ 2 } f'(p_l)$, for $l=1,2$.  We then have that
\begin{equation}\label{eq:comm-HI}
 [H_I,\, iA] = H_I(-i a_{1} F) + H_I(-ia_{2} F)\, ,
\end{equation}
in the sense of quadratic forms on $\mathfrak{D}( H_0 ) \cap \mathfrak{D}( A )$.

Recalling the notations $\xi_l = ( p_l , \gamma_l )$, we compute
\begin{align}
( a_1 F^{(\alpha)} ) ( \xi_1 , \xi_2 , \xi_3 ) =& \Big [ \frac{i}{2} p_1 p_2 f'(p_1) s^{(\alpha)}( \xi_1 , \xi_2 , \xi_3 ) + i p_2 f( p_1 ) s^{(\alpha)}( \xi_1 , \xi_2 , \xi_3 )  \notag \\
& + i p_1 p_2 f(p_1) ( \partial_{p_1} s^{(\alpha)} ) ( \xi_1 , \xi_2 , \xi_3 ) \Big ] \tilde G^{(\alpha)}( \xi_1 , \xi_2 , \xi_3 ) \notag \\
& + i p_1 p_2 f( p_1 ) s^{(\alpha)}( \xi_1 , \xi_2 , \xi_3 ) ( \partial_{p_1} \tilde G^{(\alpha)} )( \xi_1 , \xi_2 , \xi_3 ) .  \label{eq:aaa1}
\end{align}
Using \eqref{eq:estimates_s} and the definition of $f$, we see that the term in brackets satisfy
\begin{align*}
& \Big | \frac{i}{2} p_1 p_2 f'(p_1) s^{(\alpha)}( \xi_1 , \xi_2 , \xi_3 ) + i p_2 f( p_1 ) s^{(\alpha)}( \xi_1 , \xi_2 , \xi_3 ) + i p_1 p_2 f(p_1) ( \partial_{p_1} s^{(\alpha)} ) ( \xi_1 , \xi_2 , \xi_3 ) \Big | \\
& \lesssim p_1^{-1} p_2 ,
\end{align*}
in any compact set. Now, since $p_1 \mapsto \tilde G^{(\alpha)}( p_1 , \gamma_1 , \xi_2 , \xi_3 ) \in H_0^{1} ( \mathbb{R}_+ )$ by the onditions $(i)(b)$ and $(i)(c)$, and since $\tilde G^{(\alpha)}$ is compactly supported in the variables $(p_1,p_2,k)$ by the condition $(i)(a)$, we deduce that
\begin{align*}
p_1^{-1} p_2 \tilde G^{(\alpha)}( \xi_1 , \xi_2 , \xi_3 ) \in L^2( \d \xi_1 \d \xi_2 \d \xi_3 ).
\end{align*}
Here we used that
\begin{align*}
\big \| p_1^{-1} p_2 \tilde G^{(\alpha)}( \xi_1 , \xi_2 , \xi_3 ) \big \|_{ L^2( \d \xi_1 \d \xi_2 \d \xi_3 ) } \lesssim \big \| p_2 \partial_{p_1} \tilde G^{(\alpha)}( \xi_1 , \xi_2 , \xi_3 ) \big \|_{ L^2( \d \xi_1 \d \xi_2 \d \xi_3 ) } ,
\end{align*}
by Hardy's inequality at the origin in $H_0^{1} (\mathbb{R}_+ )$. Likewise, we have that
\begin{equation*}
\big | i p_1 p_2 f( p_1 ) s^{(\alpha)}( \xi_1 , \xi_2 , \xi_3 ) \big | \lesssim 1 ,
\end{equation*}
in any compact set, and hence, using again that $p_1 \mapsto \tilde G^{(\alpha)}( p_1 , \gamma_1 , \xi_2 , \xi_3 ) \in H_0^{1} ( \mathbb{R}_+ )$ and that $\tilde G^{(\alpha)}$ is compactly supported in the variables $(p_1,p_2,k)$, it follows that
\begin{align*}
i p_1 p_2 f( p_1 ) s^{(\alpha)}( \xi_1 , \xi_2 , \xi_3 ) ( \partial_{p_1} \tilde G^{(\alpha)} )( \xi_1 , \xi_2 , \xi_3 ) \in L^2( \d \xi_1 \d \xi_2 \d \xi_3 ).
\end{align*}
The previous estimates show that
\begin{align*}
( a_1 F^{(\alpha)} ) ( \xi_1 , \xi_2 , \xi_3 ) \in L^2( \d \xi_1 \d \xi_2 \d \xi_3 ) ,
\end{align*}
and proceeding in the same way, one verifies that $( a_2 F^{(\alpha)} ) ( \xi_1 , \xi_2 , \xi_3 ) \in L^2( \d \xi_1 \d \xi_2 \d \xi_3 )$. Using the expression \eqref{eq:comm-HI} of the commutator $[ H_I , i A ]$ and the $N_\tau$ estimates of \cite{GlimmJaffe}, we immediately deduce that $[ H_I , i A ] \in \mathcal{B}( \mathcal{G} ; \mathcal{G}^* ) = \mathrm{C}^{0}( A_{ \mathcal{G} } ; A_{\mathcal{G}^*} )$.

$(ii)$ It suffices to proceed similarly. More precisely, we compute the second commutator
\begin{equation}\label{eq:second-comm-HI}
\big [ \big [ H_I , i A \big ] , i A \big ] = - H_I(a_{1}^2 F) - H_I( a_{2}^2 F) - 2 H_I ( a_1a_2 F ) .
\end{equation}
Computing $a_1^2 F$, $a_2^2 F$ and $a_1 a_2 F$ yields to several terms that are estimated separately. Each term, however, can be treated in the same way, using Hardy's inequality together with the assumptions $(ii)(a)$, $(ii)(b)$, $(ii)(c)$. We give an example. Consider the first term inside the brackets of \eqref{eq:aaa1} and apply to it the operator $i f( p_1 ) \partial_{p_1}$. This gives in particular a term of the form
\begin{align*}
- \frac{1}{2} p_2 f( p_1 ) f'(p_1) s^{(\alpha)}( \xi_1 , \xi_2 , \xi_3 ) \tilde G^{(\alpha)}( \xi_1 , \xi_2 , \xi_3 ) ,
\end{align*}
that will appear in the expression of $a_1^2 F$. From \eqref{eq:estimates_s} and the definition of $f$,  it follows that
\begin{align*}
\big | p_2 f( p_1 ) f'(p_1) s^{(\alpha)}( \xi_1 , \xi_2 , \xi_3 ) \tilde G^{(\alpha)}( \xi_1 , \xi_2 , \xi_3 )  \big | \lesssim p_1^{-3} p_2 \big | \tilde G^{(\alpha)}( \xi_1 , \xi_2 , \xi_3 )  \big |,
\end{align*}
in any compact set. Since $p_1 \mapsto \tilde G^{(\alpha)}( p_1 , \gamma_1 , \xi_2 , \xi_3 ) \in H_0^{3} ( \mathbb{R}_+ )$ by the conditions $(ii)(b)$ and $(ii)(c)$, and since $\tilde G^{(\alpha)}$ is compactly supported in the variables $(p_1,p_2,k)$ by the condition $(ii)(a)$, we obtain as above that
\begin{align*}
p_2 f( p_1 ) f'(p_1) s^{(\alpha)}( \xi_1 , \xi_2 , \xi_3 ) \tilde G^{(\alpha)}( \xi_1 , \xi_2 , \xi_3 ) \in L^2( \d \xi_1 \d \xi_2 \d \xi_3 ).
\end{align*}
Here we used that
\begin{align*}
\big \| p_1^{-3} p_2 \tilde G^{(\alpha)}( \xi_1 , \xi_2 , \xi_3 ) \big \|_{ L^2( \d \xi_1 \d \xi_2 \d \xi_3 ) } \lesssim \big \| p_2 \partial_{p_1}^3 \tilde G^{(\alpha)}( \xi_1 , \xi_2 , \xi_3 ) \big \|_{ L^2( \d \xi_1 \d \xi_2 \d \xi_3 ) } ,
\end{align*}
by Hardy's inequality at the origin in $H_0^{3} (\mathbb{R}_+ )$.
Treating all the other terms in a similar manner, we deduce that
\begin{equation*}
a_{1}^2 F + a_{2}^2 F + 2 a_1a_2 F \in L^2( \d \xi_1 \d \xi_2 \d \xi_3 ) ,
\end{equation*}
and therefore that $[ [ H_I , i A ] , i A ] \in \mathcal{B}( \mathcal{G} ; \mathcal{G}^* )$. Together with Proposition \ref{prop:semigroup}, this shows $(ii)$.
\end{proof}

\begin{proof}[Proof of Proposition \ref{thm:interpo}]
By the observation after the statement of Proposition \ref{thm:interpo}, we already now that $H_0 \in \mathrm{C}^{1,1}( A_{ \mathcal{G} } ; A_{\mathcal{G}^*} ) $. Hence, to conclude the proof of Proposition \ref{thm:interpo}, it suffices to verify that $H_I \in \mathrm{C}^{1,1}( A_{ \mathcal{G} } ; A_{\mathcal{G}^*} )$. Recall that $H_I$ is the sum of $4$ terms, see \eqref{eq:interaction}. We consider for instance the first one, $H_I^{(1)}$. The other terms can be treated in the same way.

Let $K_0 \subset \mathbb{R}_+ \times \mathbb{R}_+ \times \mathbb{R}^3$ be a compact set. Let $S_{(i)}$ denote the set of all $\tilde G^{(1)} \in L^2( \Sigma \times \Sigma \times \Sigma_3 )$ satisfying the conditions $(i)(a)$ (with $K=K_0$), $(i)(b)$ and $(i)(c)$, equipped with the norm
\begin{align*}
\big \| \tilde G^{(1)} \big \|_{S_{(i)}} := \sum_{ \gamma_1 , \gamma_2 , \lambda } \int ( 1 + x_1^2 + x_2^2 ) \left | \hat{ \tilde G}^{(1)}( x_1 , \gamma_1 , x_2 , \gamma_2 , k , \lambda )\right|^2 \d x_1 \d x_2 \d k  .
\end{align*}
Likewise, we denote by $S_{(ii)}$ the set of all $\tilde G^{(1)} \in L^2( \Sigma \times \Sigma \times \Sigma_3 )$ satisfying the conditions $(ii)(a)$ (with $K=K_0$), $(ii)(b)$ and $(ii)(c)$, equipped with the norm
\begin{align*}
\big \| \tilde G^{(1)} \big \|_{S_{(ii)}} := \sum_{ \gamma_1 , \gamma_2 , \lambda } \int ( 1 + x_1^2 + x_2^2 )^3 \left | \hat{ \tilde G}^{(1)}( x_1 , \gamma_1 , x_2 , \gamma_2 , k , \lambda )\right|^2 \d x_1 \d x_2 \d k  .
\end{align*}
By Lemma \ref{lem:interpo} and its proof, the map
\begin{align}
S_{(i)} \ni \tilde G^{(1)} \mapsto H^{(1)'}_I( \tilde h^{(1)} \tilde G^{(1)} )  \in \mathrm{C}^{0}( A_{ \mathcal{G} } ; A_{\mathcal{G}^*} )
\end{align}
is linear and continuous, and, likewise, the map
\begin{align}
S_{(ii)} \ni \tilde G^{(1)} \mapsto H^{(1)'}_I( \tilde h^{(1)} \tilde G^{(1)} ) \in \mathrm{C}^{1}( A_{ \mathcal{G} } ; A_{\mathcal{G}^*} )
\end{align}
is linear and continuous. Here we have used the notation
\begin{equation*}
H^{(1)'}_I( \tilde h^{(1)} \tilde G^{(1)} ) := [ H^{(1)}_I( \tilde h^{(1)} \tilde G^{(1)} ) , i A ].
\end{equation*}
By real interpolation, we deduce that
\begin{align}
\big ( S_{(i)} , S_{(ii)} \big )_{\theta,2}  \ni \tilde G^{(1)} \mapsto H^{(1)'}_I( \tilde h^{(1)} \tilde G^{(1)} ) \in \big ( \mathrm{C}^{0}( A_{ \mathcal{G} } ; A_{\mathcal{G}^*} ) , \mathrm{C}^{1}( A_{ \mathcal{G} } ; A_{\mathcal{G}^*} ) \big )_{ \theta , 2 } ,
\end{align}
for all $0 \le \theta \le 1$.

Now, by \cite[Section 5]{ABG}, we have that
\begin{align}
\big ( \mathrm{C}^{0}( A_{ \mathcal{G} } ; A_{\mathcal{G}^*} ) , \mathrm{C}^{1}( A_{ \mathcal{G} } ; A_{\mathcal{G}^*} ) \big )_{ \theta , 2 } = \mathrm{C}^{\theta , 2}( A_{ \mathcal{G} } ; A_{\mathcal{G}^*} ) ,
\end{align}
for all $0 < \theta < 1$, and using the definition \eqref{def:Cthetaq}, one easily verifies that
\begin{equation*}
\mathrm{C}^{\theta , 2}( A_{ \mathcal{G} } ; A_{\mathcal{G}^*} ) \subset \mathrm{C}^{ 0 , 1 }( A_{ \mathcal{G} } ; A_{\mathcal{G}^*} ).
\end{equation*}
On the other hand, from the definition of the interpolated space $\big ( S_{(i)} , S_{(ii)} \big )_{\theta,2}$ and mimicking the method allowing one to compute the interpolation of Sobolev spaces (see e.g. \cite{Tr}), it is not difficult to verify that, for $0 < \varepsilon < 2 \theta < 1$, the set of all kernels $\tilde G^{(1)} \in L^2( \Sigma \times \Sigma \times \Sigma_3 )$ satisfying the conditions $(a_0)$, $(b_0)$ and $(c_0)$ stated above is included in $\big ( S_{(i)} , S_{(ii)} \big )_{\theta,2}$. This shows, in particular, that $H_I^{(1)'} \in \mathrm{C}^{0 , 1}( A_{ \mathcal{G} } ; A_{\mathcal{G}^*} )$, and hence that $H_I^{(1)} \in \mathrm{C}^{1 , 1}( A_{ \mathcal{G} } ; A_{\mathcal{G}^*} )$. Since the other terms, ${H_I^{(1)}}^*$, $H_I^{(2)}$ and ${H_I^{(2)}}^*$, can be treated in the same way, this concludes the proof.
\end{proof}

\subsection{The Mourre estimate}\label{subsection:Mourre}

Given $\tilde{F} = (\tilde{F}^{(1)}, \tilde{F}^{(2)})\in ( \mathfrak{H}_c\otimes\mathfrak{H}_c \otimes L^2(\Sigma_3) )^2$, and for $H_I^{(i)}(\tilde{F}^{(i)})$ given by \eqref{eq:interaction-1b}-\eqref{eq:interaction-2b}, we define
\begin{equation}\nonumber
 H_I(\tilde F) = H_I^{(1)}(\tilde{F}^{(1)}) + (H_I^{(1)}(\tilde{F}^{(1)}))^*
 + H_I^{(2)}(\tilde{F}^{(2)}) + (H_I^{(2)}(\tilde{F}^{(2)}))^*\, .
\end{equation}

\begin{prop}\label{prop:Mourre}
Assume that Hypothesis \ref{hyp:2} hold and let $\delta\in (0, \, \me)$. There exist $g_\delta>0$, $\mathrm{c}_\delta>0$ and $\mathrm{C} \in \mathbb{R}$ such that, for all $|g|\leq g_\delta$, and for
$$
\Delta := [ \delta , \mz - \delta ] ,
$$
we have, in the sense of quadratic forms on $\mathcal{D}(A)\cap\mathcal{D}(H_0)$,
\begin{equation}\label{eq:main-mourre}
 [H, iA]  \geq \mathrm{c}_\delta \1 - \mathrm{C} \1_{\Delta}^\perp(H-E) \langle H \rangle  ,
\end{equation}
where we have set $E := \inf \mathrm{spec}(H)$, $\1_{\Delta}^\perp(H-E) := \1 - \1_{\Delta}(H-E)$ and $\langle H \rangle := ( \1 + H^2 )^{1/2}$.
\end{prop}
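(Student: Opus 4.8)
The plan is to exploit the special structure built into the conjugate operator. As recorded in Subsection~\ref{subsection:regularity}, in the form sense on $\mathcal{D}(A)\cap\mathcal{D}(H_0)$ one has
\[
[H_0,iA]=\Ne\otimes\1\otimes\1+\1\otimes\Np\otimes\1=:\widetilde N,
\]
and $\widetilde N$ commutes with $H_0$. The crucial point is that $\ker\widetilde N=\C\,\Omega_D\otimes\mathfrak{F}_{Z^0}$, on which $H_0$ acts as $\1\otimes\1\otimes\d\Gamma(H_{Z^0})$, with spectrum $\{0\}\cup[\mz,\infty)$, a set disjoint from $\Delta=[\delta,\mz-\delta]$ at distance $\delta$. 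Since $\widetilde N\ge\1-\1_{\{0\}}(\widetilde N)$, this yields at once a \emph{strict} Mourre estimate for $H_0$, namely $[H_0,iA]\ge\1-\1_{\{0\}}(\widetilde N)$, with constant $1$ and no degeneracy at the thresholds $\me,2\me,\dots$; this is exactly why $a$ was chosen so that $[\omega,ia]=\1$. It then remains to control the perturbation $g[H_I,iA]$ and to trade the spectral projection $\1_{\{0\}}(\widetilde N)$ of $H_0$ for a spectral projection of $H$.

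For the perturbation I would use that, by Proposition~\ref{thm:interpo} (via Lemma~\ref{lem:interpo}$(i)$), $[H_I,iA]\in\mathcal{B}(\mathcal{G};\mathcal{G}^*)$ with norm independent of $g$, so that $\pm g[H_I,iA]\le \mathrm{c}\,|g|(H_0+\1)$ as forms on $\mathcal{G}$. Since $gH_I$ is $H_0$-bounded with relative bound tending to $0$ (Theorem~\ref{thm:self-adj}), one also has $H_0+\1\le\mathrm{c}\,\langle H\rangle$, and by standard perturbation theory of the isolated ground state $E=\inf\mathrm{spec}(H)$ is simple, isolated, and satisfies $|E|=\mathcal{O}(|g|)$.

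The central technical step will be the estimate $\bigl\|\1_{\Delta}(H-E)\,\1_{\{0\}}(\widetilde N)\bigr\|=\mathcal{O}(|g|)$. To prove it, pick $\rho\in\mathrm{C}_0^\infty\bigl((\delta/2,\mz-\delta/2)\bigr)$ with $\rho\equiv1$ on $\Delta$. Then $\rho(H_0)\1_{\{0\}}(\widetilde N)=0$, because $\mathrm{supp}\,\rho$ avoids $\{0\}\cup[\mz,\infty)$, which is the spectrum of $H_0$ on $\ker\widetilde N$; and $\|\rho(H-E)-\rho(H_0)\|=\mathcal{O}(|g|)$, by the Helffer--Sj\"ostrand formula together with the resolvent identity $(H-z)^{-1}-(H_0-z)^{-1}=-g(H-z)^{-1}H_I(H_0-z)^{-1}$ and $|E|=\mathcal{O}(|g|)$. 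Since $\1_\Delta(H-E)=\1_\Delta(H-E)\rho(H-E)$, the estimate follows; a Cauchy--Schwarz splitting along $\1=\1_\Delta(H-E)+\1_\Delta^\perp(H-E)$ then upgrades it to the form bound $\1_{\{0\}}(\widetilde N)\le 2\,\1_\Delta^\perp(H-E)+\mathrm{c}\,g^2\,\1$.

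Putting everything together, for $\psi\in\mathcal{D}(A)\cap\mathcal{D}(H_0)$ I would write
\begin{align*}
[H,iA]&=\widetilde N+g[H_I,iA]\ \ge\ \1-\1_{\{0\}}(\widetilde N)-\mathrm{c}\,|g|(H_0+\1)\\
&\ge\ (1-\mathrm{c}\,|g|-\mathrm{c}\,g^2)\,\1-\mathrm{C}\,\1_\Delta^\perp(H-E)\langle H\rangle,
\end{align*}
using successively the form bound on $\1_{\{0\}}(\widetilde N)$, the estimate $H_0+\1\le\mathrm{c}\,\langle H\rangle\le \mathrm{c}_\Delta\1+\mathrm{c}\,\1_\Delta^\perp(H-E)\langle H\rangle$ (valid because $\langle H\rangle$ is bounded on $\mathrm{Ran}\,\1_\Delta(H-E)$), and $\1_\Delta^\perp(H-E)\le\1_\Delta^\perp(H-E)\langle H\rangle$. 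Choosing $g_\delta$ small enough (depending on $\delta$) that $1-\mathrm{c}\,|g|-\mathrm{c}\,g^2\ge1/2$ then gives \eqref{eq:main-mourre} with $\mathrm{c}_\delta=1/2$. I expect the only genuinely delicate point to be the comparison $\|\rho(H-E)-\rho(H_0)\|=\mathcal{O}(|g|)$, whose constant deteriorates as $\delta\to0$, forcing $g_\delta$ to shrink — and of course the regularity input $[H_I,iA]\in\mathcal{B}(\mathcal{G};\mathcal{G}^*)$, which however is already furnished by Proposition~\ref{thm:interpo}, so that no new analysis of the interaction is required here.
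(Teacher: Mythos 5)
Your proposal is correct and follows essentially the same route as the paper: you identify $\1_{\{0\}}(\widetilde N)$ with the projection $P_{\Omega_a\times\Omega_a}$ onto the electron/positron vacuum sector, use $\widetilde N\ge\1-P_{\Omega_a\times\Omega_a}$, control the vacuum projection by a Helffer--Sj\"ostrand comparison of $f_\Delta(H-E)$ with $f_\Delta(H_0)$ (the paper's $f_\Delta$ is your $\rho$), and handle the commutator $g[H_I,iA]$ by the same relative-boundedness argument followed by the same $\1_\Delta/\1_\Delta^\perp$ splitting. The only cosmetic differences are notational (e.g.\ the Cauchy--Schwarz splitting of $P_{\Omega_a\times\Omega_a}$ is written out as four cross terms in the paper), so no further comment is needed.
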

\begin{proof}[Proof of Proposition \ref{prop:Mourre}]
As in Subsection \ref{subsection:regularity}, we have, in the sense of quadratic forms on $\mathcal{D}(A)\cap\mathcal{D}(H_0)$,
\begin{equation}\label{eq:mourre-1}
\begin{split}
 [H_0,\, iA] = N_+\otimes \1\otimes \1 + \1\otimes N_- \otimes \1\, ,
\end{split}
\end{equation}
where $\Ne$ (respectively $\Np$) is the number operator for electrons (respectively po\-si\-trons) as defined in \eqref{eq:def-NpNe}. In the sequel, by abuse of notation, we shall omit the identity operators in $N_+\otimes \1\otimes \1$ and $\1\otimes N_- \otimes\1$ and denote them respectively again by $\Ne$ and $\Np$.

Let $\ael = a\otimes\1\otimes\1$ be the conjugate operator for electron acting on the $p_1$ variable in $\mathfrak{H}_c\otimes\mathfrak{H}_c\otimes L^2(\Sigma_{3})$ and $\ap=\1\otimes a \otimes\1$ be the conjugate operator for positron acting on the $p_2$ variable. As in \eqref{eq:comm-HI}, we have that
\begin{equation}\label{eq:mourre-2}
 [H_I,\, iA] = H_I(-i a_{1} F) + H_I(-ia_{2} F)\, ,
\end{equation}
in the sense of quadratic forms on $\mathcal{D}(A)\cap\mathcal{D}(H_0)$. Here we recall that $a_1 F$ and $a_2 F$ belong to $L^2 ( \mathrm{d} \xi_1 \mathrm{d} \xi_2 \mathrm{d} \xi_3 )$ as follows from the estimates of Appendix \ref{appendixA} and Hypothesis \ref{hyp:2} (see more precisely the proof of Lemma \ref{lem:interpo} $(i)$).

For $P_{\Omega_a\times\Omega_a}:= P_{\Omega_a}\otimes P_{\Omega_a}\otimes \1$ being the projection onto the electron/positron vacuum, we have that
\begin{equation}\label{eq:mourre-3}
 \Ne + \Np + P_{\Omega_a\times\Omega_a} \geq \1 .
\end{equation}
Since $H = H_0 + g H_I$, and for $E= \inf \mathrm{spec}(H)$, we obtain from \eqref{eq:mourre-1}-\eqref{eq:mourre-2} that
\begin{align}
  [H,\, iA]  &=  \left( \Ne + \Np + P_{\Omega_a\times\Omega_a} \right)  -  P_{\Omega_a\times\Omega_a} + g  \left( H_I (-i \ael F) + H_I(-i\ap F)\right)  \notag \\
 &\ge \1  -  P_{\Omega_a\times\Omega_a} + g  \left( H_I (-i \ael F) + H_I(-i\ap F)\right) , \label{eq:mourre-5}
\end{align}
where we used the operator inequality \eqref{eq:mourre-3} in the last inequality. We estimate separately the two remainder terms occuring in the right hand side of \eqref{eq:mourre-5}.

Let us define a function $f_\Delta \in \mathrm{C}_0^\infty(\R)$ such that $0\leq f_\Delta \leq 1$ and
\begin{equation}
 f_\Delta (\lambda) =
    \left\{
    \begin{array}{ll}
        1 & \mbox{ if } \lambda\in [ \delta , \mz - \delta ] , \\
        0 & \mbox{ if } \lambda< \delta / 2 \mbox{ or } \lambda > \mz - \delta/2 .
    \end{array}
    \right.
\end{equation}
We observe that
\begin{equation}\label{eq:mourre-4}
 P_{\Omega_a\times\Omega_a} \, f_\Delta(H_0) = 0\, .
\end{equation}
The last identity holds because $ P_{\Omega_a\times\Omega_a}$ is a projection commuting with $H_0$ and because $\mathrm{supp}(f_\Delta) \cap \mathrm{spec}( H_0  P_{\Omega_a\times\Omega_a} ) = \emptyset$.
As in the proof of Lemma \ref{lm:Hext}, let $\tilde f \in \mathrm{C}_0^\infty( \mathbb{C} )$ denote an almost analytic extension of $f_\Delta$ satisfying $\tilde f |_{ \mathbb{R} } = f_\Delta$ and $| \partial_{ \bar z } \tilde f( z ) | \le C_n | \mathrm{Im} \, z |^n$ for any $n \in \mathbb{N}$.
 Thus, for $\d\tilde{f}(z) := -\frac{1}{\pi} \frac{\partial \tilde{f}}{\partial \overline{z}} (z) \, \d \mathrm{Re} \, z \, \d \mathrm{Im} \, z$, using Helffer-Sj\"ostrand functional calculus and the second resolvent equation, we obtain
\begin{align}
   f_\Delta(H-E) - f_\Delta(H_0)&
   = \int (H-E-z)^{-1} (H-E-H_0) (H_0-z)^{-1} \, \d\tilde{f}(z) \notag \\
   & = \int (H-E-z)^{-1} g H_I(F) (H_0 - z)^{-1} \, \d\tilde{f}(z) \notag \\
   & \ \ \ - E\int  (H-E-z)^{-1} (H_0 - z)^{-1} \, \d\tilde{f}(z)\, .\label{eq:mourre-10}
\end{align}
From Corollary~\ref{cor-rel-bound}, since Hypothesis~\ref{hyp:1} holds, there exists a constant $C$ such that
\begin{equation}\label{eq:mourre-12}
 \| H_I(F) (H_0 + 1)^{-1} \| \leq C  K(G)\, ,
\end{equation}
where $h^{(\alpha)} G^{(\alpha)} =  F^{(\alpha)}$ (see \eqref{eq:interaction-12c}) and $K(G)=K_2(G)$ is given by \eqref{ed:def-K-G-alpha} and \eqref{eq:def-kig}.

Therefore, with the inequality
\begin{equation}\label{eq:mourre-11}
\big \| (H_0+1) (H_0 -z)^{-1} \big \| \leq 1 + \frac{1 + |z|}{|\mathrm{Im} \, z|}\, ,
\end{equation}
and the properties of $\tilde{f}$, we obtain that there exists a constant $C_1>0$ depending only on $f_\Delta$ and $K(G)$ such that
\begin{align}
  & \left\| \int (H-E-z)^{-1} g H_I(F) (H_0 - z)^{-1} \, \d\tilde{f}(z)
  \right\| \notag \\
  & \leq |g| \int (1 + \frac{1 + |z|}{|\mathrm{Im} \,z|})\,  \| (H-E-z)^{-1} \|\,
  \| H_I(G) (H_0 + 1)^{-1} \| \, \d\tilde{f}(z) \leq  C_1 \, |g| \, . \label{eq:mourre-13}
\end{align}

Moreover, using again \eqref{eq:mourre-12}, standard perturbation theory yields that there exists $g_1>0$ such that for all $|g|\leq g_1$, we have
\begin{equation}\label{eq:mourre-14}
 |E| \leq |g|\, \frac{K(G)B_{\beta\eta}}{1 - g_1 K(G) C_{\beta \eta}}\, ,
\end{equation}
where $B_{\beta\eta}$ and $C_{\beta\eta}$ are the positive constants defined in Subsection \ref{subsection:self-adj}. Thus, there exists a constant $C_2$ depending on $f_\Delta$ and $K(G)$ such that
\begin{equation}\label{eq:mourre-15}
\Big \| E \int  (H-E-z)^{-1} (H_0 - z)^{-1} \, \d\tilde{f}(z) \Big \| \leq C_2 |g|\, .
\end{equation}
Inequalities \eqref{eq:mourre-10}, \eqref{eq:mourre-13} and \eqref{eq:mourre-15} give
\begin{equation}
 \| f_\Delta(H-E) - f_\Delta(H_0) \| \leq (C_1 +C_2)\, |g| . \label{eq:Mourre-18}
\end{equation}

For shortness, let $\1_\Delta \equiv \1_\Delta( H - E )$ and $\1^\perp_\Delta \equiv \1_\Delta^\perp( H - E )$. We have that
\begin{align}
- P_{\Omega_a\times\Omega_a} &=  - \1_\Delta P_{\Omega_a\times\Omega_a} \1_\Delta - \1_\Delta P_{\Omega_a\times\Omega_a} \1_\Delta^\perp - \1_\Delta^\perp P_{\Omega_a\times\Omega_a} \1_\Delta - \1_\Delta^\perp P_{\Omega_a\times\Omega_a} \1_\Delta^\perp \notag \\
&\ge  - \1_\Delta P_{\Omega_a\times\Omega_a} \1_\Delta - \1_\Delta P_{\Omega_a\times\Omega_a} \1_\Delta^\perp - \1_\Delta^\perp P_{\Omega_a\times\Omega_a} \1_\Delta - \1_\Delta^\perp . \label{eq:Mourre-19}
\end{align}
Using \eqref{eq:mourre-4} and \eqref{eq:Mourre-18}, we obtain that
\begin{align*}
\big \|Ê\1_\Delta P_{\Omega_a\times\Omega_a} \big \|Ê&\le \big \|Êf_\Delta ( H - E ) P_{\Omega_a\times\Omega_a} \big \| \\
& = \big \|Ê\big ( f_\Delta ( H - E ) - f_\Delta( H_0 ) \big ) P_{\Omega_a\times\Omega_a} \big \| \le (C_1 +C_2)\, |g| ,
\end{align*}
from which we deduce that
\begin{align*}
- \1_\Delta P_{\Omega_a\times\Omega_a} \1_\Delta - \1_\Delta P_{\Omega_a\times\Omega_a} \1_\Delta^\perp - \1_\Delta^\perp P_{\Omega_a\times\Omega_a} \1_\Delta \ge - 3 (C_1 +C_2)\, |g| \, \1 .
\end{align*}
Together with \eqref{eq:Mourre-19}, this shows that
\begin{align}
- P_{\Omega_a\times\Omega_a} &\ge  - 3 (C_1 +C_2) |g| \1 - \1_\Delta^\perp . \label{eq:Mourre-20}
\end{align}

To bound the last term in the right hand side of \eqref{eq:mourre-5}, it suffices to use the relative bound in Corollary~\ref{cor-rel-bound} and the fact that Hypothesis \ref{hyp:2} holds (and hence also Hypothesis \ref{hyp:1}), to obtain that the operators $H_I(- i a_l F)$ ($l=1,2)$ are norm relatively bounded with respect to $H_0$ with relative bounds depending on $K(G)$ and $K(- i a_l G)$. Therefore, there exists $C_3$ depending on $K(G)$ and $K(- i a_l G)$ such that
\begin{align}
 g  \big ( H_I (-i \ael F) + H_I(-i\ap F) \big ) &\ge - C_3 |g| \langle H \rangle \notag \\
 & = - C_3 |g| \langle H \rangle \1_\Delta( H - E ) - C_3 |g| \langle H \rangle \1_\Delta^\perp( H - E ) \notag \\
& \ge - C_4 |g| \1_\Delta( H - E ) - C_3 |g| \langle H \rangle \1_\Delta^\perp( H - E ) \notag \\
& \ge - C_4 |g| \1_\Delta( H - E ) - C_3 |g| \langle H \rangle \1_\Delta^\perp( H - E )  \notag \\
& \ge - C_4 |g| \1 - C_5 |g| \langle H \rangle \1_\Delta^\perp( H - E ) , \label{eq:mourre-17}
\end{align}
for some constants $C_4, C_5 \in \mathbb{R}$.

The estimates \eqref{eq:mourre-5}, \eqref{eq:Mourre-20} and \eqref{eq:mourre-17}  yield \eqref{eq:main-mourre}, which concludes the proof.
\end{proof}

\subsection{Proofs of the main theorems}\label{subsection:proofs}

\begin{proof}[Proof of Theorem \ref{thm:loc-spec}]
As above, we use the notation $E = \inf \mathrm{spec}(H)$. The proof of Theorem \ref{thm:loc-spec} is divided into two main steps.

\vspace{0,2cm}

\noindent \textbf{Step 1.} Let $0 < \delta < \me$. There exists $g_\delta > 0$ such that, for all $0 \le |g | \le g_\delta$,
\begin{equation*}
\inf \big ( \mathrm{spec}( H ) \setminus \{ E \} ) \ge \delta .
\end{equation*}
To prove this, we use the min-max principle.
Let $\mu_2$ denote the second point above $E$ in the spectrum of $H$. The min-max principle implies that
\begin{align*}
\mu_2 \ge \underset{ \begin{small} \begin{array}{c} \psi \in \mathfrak{D}( H ) , \| \psi \|=1 , \\ \psi \in [ \Omega_D \otimes \Omega_s ]^\perp \end{array} \end{small} }{ \inf } \langle  \psi , H \psi \rangle = \underset{ \begin{small} \begin{array}{c} \psi \in \mathfrak{D}( H ) , \| \psi \|=1 , \\ \psi \in [ \Omega_D \otimes \Omega_s ]^\perp \end{array} \end{small} }{ \inf } \big (  \langle  \psi , H_0 \psi \rangle + g \langle  \psi , H_I \psi \rangle \big ),
\end{align*}
where $[ \Omega_D \otimes \Omega_s ]^\perp$ denotes the orthogonal complement of the subspace spanned by $\Omega_D \otimes \Omega_s$ in the total Hilbert space $\mathcal{H}$. Since $H_I$ is relatively bounded with respect to $H_0$, there exists a positive constant $\mathrm{C}$ such that $\langle \psi , H_I \psi \rangle \ge - \mathrm{C} \langle \psi , H_0 \psi \rangle$, and therefore
\begin{align*}
\mu_2 \ge \underset{ \begin{small} \begin{array}{c} \psi \in \mathfrak{D}( H ) , \| \psi \|=1 , \\ \psi \in [ \Omega_D \otimes \Omega_s ]^\perp \end{array} \end{small} }{ \inf } ( 1 - \mathrm{C} |g| )  \langle  \psi , H_0 \psi \rangle \ge ( 1 - \mathrm{C} |g| ) \me ,
\end{align*}
the last inequality being a consequence of \eqref{eq:spec_H0}. This proves Step 1.

\vspace{0,2cm}

\noindent \textbf{Step 2.} Let $0 < \delta < \me$. There exists $g_\delta > 0$ such that, for all $0 \le |g | \le g_\delta$,
\begin{equation*}
\mathrm{spec}( H ) \cap [ \delta , \me + E ) = \emptyset.
\end{equation*}
Observe that $E<0$ satisfies $E \ge - \mathrm{C} |g|$ with $\mathrm{C}$ a positive constant, as follows from standard perturbation theory (see \eqref{eq:mourre-14}), and therefore, for $g_\delta$ small enough and $|g|Ê\le g_\delta$, we have that $\delta < \me + E$. By Theorem \ref{thm:ess}, we know that $\inf \mathrm{spec}_{ \mathrm{ess} }( H ) = \me + E$. Thus we only have to show that $H$ do not have discrete eigenvalue in the interval $[ \delta , \me + E )$: This is a simple, usual consequence of the virial theorem (see Theorem \ref{prop:virial}) combined with the Mourre estimate of Proposition \ref{prop:Mourre}.
\end{proof}

We introduce the notation $\langle A \rangle = (1+ A^* A )^{1/2} = (1 + |A|^2 )^{1/2}$ for any closed operator $A$. As mentioned before, Theorem \ref{thm:abs-cont} is a consequence of the following stronger result, which itself follows from Propositions \ref{prop:semigroup}, \ref{thm:interpo}, \ref{prop:Mourre}, and the abstract results of Appendix \ref{appendix:Mourre}.
\begin{theo}[Limiting absorption principle]\label{thm:LAP}
Assume that Hypothesis \ref{hyp:2} holds with $\varepsilon > 0$ in Hypothesis \ref{hyp:2}(ii). For all $\delta > 0$, there exists $g_\delta > 0$ such that, for all $| g | \le g_\delta$ and $1/2<s\le 1$,
  \begin{equation*}
    \sup_{z\in \tilde \Delta } \| \langle A \rangle^{-s} (H-z)^{-1} \langle A \rangle^{-s}\|<\infty ,
  \end{equation*}
with $\Delta := [ \inf \mathrm{spec}( H ) + \me ,\, \inf \mathrm{spec}( H ) + \mz - \delta ]$ and
$
\tilde \Delta := \{ z \in \mathbb{C} , \mathrm{Re} \, z\in \Delta , 0<|\mathrm{Im} \, z|\leq 1\} , .
$
Moreover, the map $z\mapsto \langle A \rangle^{-s} (H-z)^{-1} \langle A \rangle^{-s} \in \mathcal B (\mathcal H)$ is uniformly H\"older continuous of order $s-1/2$ on $\tilde \Delta$ and the limits
\begin{equation*}
\langle A \rangle^{-s} ( H - \lambda - i 0^{\pm})^{-1} \langle A \rangle^{-s} := \lim_{\varepsilon \to 0^{\pm} } \langle A \rangle^{-s} (H-\lambda-i\varepsilon)^{-1} \langle A \rangle^{-s},
\end{equation*}
exist in the norm topology of $\mathcal{B}( \mathcal{H} )$, uniformly in $\lambda \in \Delta$. Finally, the map $\lambda\mapsto  \langle A \rangle^{-s} (H-\lambda- i0^\pm)^{-1} \langle A \rangle^{-s} \in \mathcal B (\mathcal H)$ is uniformly H\"older continuous of order $s-1/2$ on $\Delta$ and, for any $1/2 < s \le 1$, $H$ satisfies the local decay property
\begin{equation}
\big \| \langle A \rangle^{-s} e^{ - i t H } \1_\Delta ( H ) \langle A \rangle^{-s} \big \| \lesssim \langle t \rangle^{ - s + \frac12 }  , \label{eq:local-decay}
\end{equation}
for all $t \in \mathbb{R}$.
\end{theo}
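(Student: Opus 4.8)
The strategy is to deduce Theorem~\ref{thm:LAP} from the abstract limiting absorption principle for a maximal symmetric (non-self-adjoint) conjugate operator collected in Appendix~\ref{appendix:Mourre}, in the spirit of \cite{HuSp,GGM1,GGM2}; the bulk of the work has in fact already been carried out in the preceding subsections, so the proof will amount to checking that the three inputs of that abstract result are in place and then reading off the conclusion. First, Proposition~\ref{prop:semigroup} shows that the $\mathrm{C}_0$-semigroups $W_t$ and $W_t^*$ leave the form domain $\mathcal{G}=\D(H_0^{1/2})$ invariant with the required uniform bounds, i.e. Hypothesis~\ref{hyp:M1} holds. Second, Proposition~\ref{thm:interpo} shows that $H\in\mathrm{C}^{1,1}(A_{\mathcal{G}};A_{\mathcal{G}^*})$, i.e. Hypothesis~\ref{hyp:M3} holds; this is exactly the place where the strict inequality $\varepsilon>0$ in Hypothesis~\ref{hyp:2}(ii) is used, since for $\varepsilon=0$ the interpolation argument of that proposition only yields the weaker class $\mathrm{C}^{1}(A_{\mathcal{G}};A_{\mathcal{G}^*})$, which is not enough for a H\"older-continuous resolvent.

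It then remains to upgrade the Mourre estimate \eqref{eq:main-mourre} of Proposition~\ref{prop:Mourre} into a \emph{strict} Mourre estimate on a fixed compact energy window and to feed everything into the abstract theorem. Fix $\delta>0$ as in the statement, write $E:=\inf\mathrm{spec}(H)$ and $\Delta_0:=[\me,\,\mz-\delta]$, and apply Proposition~\ref{prop:Mourre} with a parameter $\delta'\in(0,\min(\me,\delta))$, so that $\Delta_0$ is contained in the interior of $[\delta',\,\mz-\delta']$. Multiplying \eqref{eq:main-mourre} on both sides by $\1_{\Delta_0}(H-E)$ and using that $\1_{\Delta_0}(H-E)\,\1_{[\delta',\mz-\delta']}^{\perp}(H-E)=0$, the localization error drops out and one gets
\begin{equation*}
\1_{\Delta_0}(H-E)\,[H,iA]\,\1_{\Delta_0}(H-E)\;\ge\;\mathrm{c}_{\delta'}\,\1_{\Delta_0}(H-E),
\end{equation*}
a strict Mourre estimate on $\Delta=\Delta_0+E$. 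The left endpoint $E+\me$ of $\Delta$ is a threshold and the bottom of $\mathrm{spec}_{\mathrm{ess}}(H)$, but it creates no difficulty: it sits in the \emph{interior} of the larger window $[\delta'+E,\,\mz-\delta'+E]$ on which Proposition~\ref{prop:Mourre} applies, the spectral subspace of $H$ strictly below $E+\me$ being trivial by Theorem~\ref{thm:loc-spec}. With Hypotheses~\ref{hyp:M1}, \ref{hyp:M3} and the strict Mourre estimate established, the abstract results of Appendix~\ref{appendix:Mourre} (combined with the virial theorem, Theorem~\ref{prop:virial}, which rules out eigenvalues of $H$ in $\Delta$) yield at once, for every $1/2<s\le1$: the uniform bound $\sup_{z\in\tilde\Delta}\|\langle A\rangle^{-s}(H-z)^{-1}\langle A\rangle^{-s}\|<\infty$; the uniform H\"older continuity of order $s-1/2$ of $z\mapsto\langle A\rangle^{-s}(H-z)^{-1}\langle A\rangle^{-s}$ on $\tilde\Delta$; the existence in the norm topology of $\mathcal{B}(\mathcal{H})$ of the boundary values $\langle A\rangle^{-s}(H-\lambda-i0^{\pm})^{-1}\langle A\rangle^{-s}$, uniformly in $\lambda\in\Delta$; and the H\"older continuity of order $s-1/2$ of $\lambda\mapsto\langle A\rangle^{-s}(H-\lambda-i0^{\pm})^{-1}\langle A\rangle^{-s}$ on $\Delta$.

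Finally, the local decay estimate \eqref{eq:local-decay} will be extracted from this H\"older regularity by the standard Fourier-transform argument. Writing $R_{\pm}(\lambda):=\langle A\rangle^{-s}(H-\lambda-i0^{\pm})^{-1}\langle A\rangle^{-s}$, Stone's formula gives
\begin{equation*}
\langle A\rangle^{-s}\,e^{-itH}\,\1_\Delta(H)\,\langle A\rangle^{-s}=\frac{1}{2\pi i}\int_{\R}e^{-it\lambda}\,\1_\Delta(\lambda)\,\big(R_+(\lambda)-R_-(\lambda)\big)\,\d\lambda ,
\end{equation*}
and the $\mathcal{B}(\mathcal{H})$-valued integrand is compactly supported, H\"older continuous of order $s-1/2$ on the interior of $\Delta$, and has at most bounded jumps at the two endpoints of $\Delta$; since $s-1/2<1$, such a function has Fourier transform of size $\mathcal{O}(\langle t\rangle^{-(s-1/2)})$ (the jump contributions being $\mathcal{O}(\langle t\rangle^{-1})$), which is \eqref{eq:local-decay}. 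I expect the only genuinely delicate point of the whole argument --- beyond bookkeeping --- to be the matching, already performed in Proposition~\ref{thm:interpo}, of the precise regularity class $\mathrm{C}^{1,1}$ produced by real interpolation to what the abstract theorem needs: one controls merely H\"older (not Lipschitz) continuity of the boundary values, the exponent $s-1/2$ is sharp within this scheme, and the threshold endpoint $E+\me$ must be covered uniformly; the latter, however, causes no real trouble since the strict Mourre estimate above holds on an open interval strictly containing $\Delta$. This completes the proof.
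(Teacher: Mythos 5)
Your proposal is correct and follows essentially the same route as the paper: Hypotheses \ref{hyp:M1} and \ref{hyp:M3} are supplied by Propositions \ref{prop:semigroup} and \ref{thm:interpo}, the Mourre input by Proposition \ref{prop:Mourre}, and the conclusions are read off from the abstract Theorem \ref{thm:abstractLAP} applied with $J=\Delta$; your explicit Stone-formula/Fourier argument for the sharp cutoff $\1_\Delta(H)$ in \eqref{eq:local-decay} is a useful elaboration of what the paper only records, in Appendix \ref{appendix:Mourre}, for smooth cutoffs. Two minor points: the step localizing \eqref{eq:main-mourre} to the strict estimate $\1_{\Delta}(H-E)[H,iA]\1_{\Delta}(H-E)\ge \mathrm{c}_{\delta'}\1_{\Delta}(H-E)$ is superfluous, since \eqref{eq:main-mourre} is already exactly Hypothesis \ref{hyp:M5} once one fixes an open interval $I$ containing $\Delta$ (the paper takes $I=(\inf\mathrm{spec}(H)+\me-\delta,\,\inf\mathrm{spec}(H)+\mz-\delta/2)$), and your aside that the spectral subspace of $H$ strictly below $\inf\mathrm{spec}(H)+\me$ is trivial is inaccurate---it contains the ground state---though nothing in your argument actually relies on it.
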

\begin{proof}
By Propositions \ref{prop:semigroup}, \ref{thm:interpo} and \ref{prop:Mourre}, we see that Hypotheses \ref{hyp:M1}, \ref{hyp:M3} and \ref{hyp:M5} of Appendix \ref{appendix:Mourre} are satisfied, the open interval $I$ of Hypothesis \ref{hyp:M5} being chosen, for instance, as $I = ( \inf \mathrm{spec}( H ) + \me - \delta ,\, \inf \mathrm{spec}( H ) + \mz - \delta /2 )$. Therefore we can apply Theorem \ref{thm:abstractLAP} with $J = \Delta$, which proves Theorem \ref{thm:LAP}.
\end{proof}

%%%%%%%%%%%%%%%%%%%%%%%%%%%%%%%%%%%%%%%%%%%%%%%%%%%%%%%%
%%%%%%%%%%%%%%%%%%%%  APPENDIX  %%%%%%%%%%%%%%%%%%%%%%%%
%%%%%%%%%%%%%%%%%%%%%%%%%%%%%%%%%%%%%%%%%%%%%%%%%%%%%%%%
\appendix

\section{Generalized eigenfunctions of the free Dirac operator}\label{appendixA}

In this section we describe the properties of the generalized eigenfunctions of the Dirac operator $H_D$ introduced in subsection~\ref{S2.1.1}. More details can be found in \cite[section 9.9, (44), (45), (63)]{Greiner}.

Recall that the generalized eigenfunctions of $H_D$ are labeled by the angular momentum
quantum numbers
 $$
    j \in\{ \frac{1}{2}, \frac{3}{2}, \frac{5}{2},
    \ldots\},\quad\ m_j \in\{ -j, -j+1, \ldots, j-1, j\},
 $$
 and by the quantum numbers
 $$
    \kappa_j \in \{\pm ( j+ \frac{1}{2})\} \, .
 $$
We define, for $\gamma_j :=|\kappa_j|$,
\begin{equation}\label{eq:def-g}
\begin{split}
  & g_{\kappa_j,\pm}(p,r) = \frac{C_1^\pm}{| \w(p) |^\frac12} \frac{(2 p r )^{\gamma_j}}{r} \frac{1}{2\sqrt{\pi}} \frac{\Gamma(\gamma_j)}{\Gamma(2\gamma_j+1)} \\
  & \times \left\{ \mathrm{e}^{-i p r} \mathrm{e}^{i\eta_j} \gamma_j F(\gamma_j+1,\, 2\gamma_j +1,\, 2 i p r)
  + \mathrm{e}^{i p r} \mathrm{e}^{-i\eta_j} \gamma_j F(\gamma_j +1,\, 2\gamma_j +1,\, -2 i p r)
  \right\}
\end{split}
\end{equation}
with
$C_1^+ = \sqrt{ \w(p)  + \me}$ when we consider a positive energy $ \w(p) >\me$ and $C_1^- = \sqrt{ \w(p)  -\me}$ when we consider a negative energy  $- \w(p) <-\me$.

We also define
\begin{equation}\label{eq:def-f}
\begin{split}
 & f_{\kappa_j,\pm}(p, r) = \frac{i C_2^\pm}{| \w(p) |^\frac12} \frac{1}{2\sqrt{\pi}} \frac{(2 p r)^{\gamma_j}}{r} \frac{\Gamma(\gamma_j)}{\Gamma(2\gamma_j+1)}\\
 & \times\left\{\mathrm{e}^{-i p r}\mathrm{e}^{i\eta_j} \gamma_j F(\gamma_j +1,\, 2\gamma_j +1,\, 2 i p r) - \mathrm{e}^{i p r}\mathrm{e}^{-i\eta_j} \gamma_j F(\gamma_j +1,\, 2\gamma_j +1,\, -2 i p r)  \right\}
\end{split}
\end{equation}
with
$C_2^+ = \sqrt{ \w(p)  - \me}$, for energies $ \w(p) >\me$ and $C_2^- = - \sqrt{\w(p) +\me}$ for energies $- \w(p) <-\me$.

The functions $F$ that occur in \eqref{eq:def-g} and \eqref{eq:def-f} are the confluent hypergeometric functions. Their integral representations for $\gamma_j>1/2$ are
\begin{equation}
\begin{split}
  F(\gamma_j+1,\, 2\gamma_j+1,\, \pm 2i pr)
 =  \frac{\Gamma(2\gamma_j +1)}{\Gamma(\gamma_j+1) \Gamma(\gamma_j)} \int_0^1 \mathrm{e}^{\pm 2 i p r u}
 u^{\gamma_j} (1-u)^{\gamma_j} \d u\, .
\end{split}
\end{equation}

The generalized eigenfunctions
 $$
   \pjpm (p, x) = \psi_{\pm,\gamma} (p, x) = \psipm (\xi, x)\, ,
 $$
 where $+$ refers to positive energies $ \w(p)  > \me$ and $-$ refers to negative energies $- \w(p)  < -\me$, fulfill
 $$
  H_D\
  \psipm((p,\gamma), x) = \pm\,  \w(p)\  \psipm((p,\gamma), x) \, ,
 $$
 and are defined by
\begin{equation}
\begin{split}
 \pjpm (p, x) :=
 \begin{pmatrix}
  i g_{\kappa_j,\pm}(p,r) \Phi^{(1)}_{m_j,\kappa_j} (\theta,\varphi) \\
  - f_{\kappa_j,\pm}(p,r) \Phi^{(2)}_{(m_j,\kappa_j)} (\theta,\varphi)
 \end{pmatrix}
\end{split}
\end{equation}
where the spinors $\Phi^{(1)}_{m_j,\kappa_j)}$ and $\Phi^{(2)}_{m_j,\kappa_j)}$ are orthogonal and defined by
\begin{equation}
\Phi^{(1)}_{(m_j,(j+\frac12))}(\theta,\varphi) :=
\begin{pmatrix}
  \sqrt{\frac{j-m_j+1}{2j+2}} Y_{j+\frac12, m_j-\frac12} (\theta,\varphi) \\
  -\sqrt{\frac{j+m_j+1}{2j+2}} Y_{j+\frac12, m_j-\frac12} (\theta,\varphi)
\end{pmatrix}
\end{equation}
\begin{equation}
\Phi^{(2)}_{(m_j,(j+\frac12))}(\theta,\varphi) :=
\begin{pmatrix}
  \sqrt{\frac{j+m_j}{2j}} Y_{j-\frac12, m_j-\frac12} (\theta,\varphi) \\
  \sqrt{\frac{j-m_j}{2j}} Y_{j-\frac12, m_j+\frac12} (\theta,\varphi)
\end{pmatrix}
\end{equation}
and
\begin{equation}
  \begin{split}
    \Phi^{(1)}_{(m_j,-(j+\frac12))}(\theta,\varphi) =&\Phi^{(2)}_{(m_j,(j+\frac12))}(\theta,\varphi)\\
    \Phi^{(2)}_{(m_j,-(j+\frac12))}(\theta,\varphi) =& -\Phi^{(1)}_{(m_j,(j+\frac12))}(\theta,\varphi).
  \end{split}
\end{equation}

It follows from \eqref{intro2} that
\begin{equation*}
  \begin{split}
 \widetilde{\psi}_{-,(j ,m_j,\kappa_j)} (p, x) :=
 \begin{pmatrix}
  i g_{-\kappa_j,-}(p,r) \Phi^{(1)}_{(-m_j,-\kappa_j)} (\theta,\varphi) \\
  - f_{-\kappa_j,-}(p,r) \Phi^{(2)}_{(-m_j,-\kappa_j)} (\theta,\varphi)
 \end{pmatrix}
\end{split}
\end{equation*}

For positive energies $ \w(p)  > \me$, we have the following estimates for the functions $g_{\kappa_j,\pm}$ and $f_{\kappa_j,\pm}$,
\begin{equation}
\begin{split}
 | g_{j+\frac12, +}(p,r)| & \leq \left(\frac{ \w(p) + \me }{ \w(p) }\right)^\frac12
 \frac{p}{\sqrt{\pi}} (2 p r)^{\gamma_j} \frac{1}{\Gamma(\gamma_j)} \, ,\\
 | f_{j+\frac12, +}(p,r)| & \leq \left(\frac{ \w(p) - \me }{ \w(p) }\right)^\frac12
 \frac{2p}{\sqrt{\pi}} (2 p r)^{\gamma_j -1} \frac{1}{\Gamma(\gamma_j)} \, ,\\
 | g_{-(j+\frac12), +}(p,r)| & \leq \left(\frac{ \w(p) + \me }{ \w(p) }\right)^\frac12
 \frac{2p}{\sqrt{\pi}} (2 p r)^{\gamma_j-1} \frac{1}{\Gamma(\gamma_j)} \, ,\\
 | f_{-(j+\frac12), +}(p,r)| & \leq \left(\frac{ \w(p) - \me }{ \w(p) }\right)^\frac12
 \frac{p}{\sqrt{\pi}} (2 p r)^{\gamma_j} \frac{1}{\Gamma(\gamma_j)} \, ,
\end{split}
\end{equation}
and for negative energies $- \w(p) <- \me$ , we have
\begin{equation}
\begin{split}
 | g_{j+\frac12, -}(p,r)| & \leq \left(\frac{ \w(p) - \me }{ \w(p) }\right)^\frac12
 \frac{p}{\sqrt{\pi}} (2 p r)^{\gamma_j} \frac{1}{\Gamma(\gamma_j)} \, ,\\
 | f_{j+\frac12, -}(p,r)| & \leq \left(\frac{ \w(p) + \me }{ \w(p) }\right)^\frac12
 \frac{2p}{\sqrt{\pi}} (2 p r)^{\gamma_j -1} \frac{1}{\Gamma(\gamma_j)} \, ,\\
 | g_{-(j+\frac12), -}(p,r)| & \leq \left(\frac{ \w(p) - \me }{ \w(p) }\right)^\frac12
 \frac{2p}{\sqrt{\pi}} (2 p r)^{\gamma_j-1} \frac{1}{\Gamma(\gamma_j)} \, ,\\
 | f_{-(j+\frac12), -}(p,r)| & \leq \left(\frac{ \w(p) + \me }{ \w(p) }\right)^\frac12 \frac{p}{\sqrt{\pi}} (2 p r)^{\gamma_j} \frac{1}{\Gamma(\gamma_j)}
 \, .\\
\end{split}
\end{equation}

We also can bound the first and second derivatives. Below, we give such bounds for $|p|\leq 1$. For $p$ larger than one, the functions are locally in $L^q$ for any value of $q$.

There exists a constant $C$ such that for $|p|\leq 1$, and for positive energies $\omega(p) > \me$ we have
\begin{equation}\label{eq:deriv-estimate-1}
\begin{split}
  \left| \frac{\partial }{\partial p}\, g_{j+\frac12, +}(p,r) \right|&
 \leq
 \frac{C}{\Gamma(\gamma_j)}\left[ (2pr)^{\gamma_j} + p r(\gamma_j-1)  (2pr)^{\gamma_j -1} + p r(2pr)^{\gamma_j-1} \right] \, ,
 \\
 \left|\frac{\partial }{\partial p}\, f_{j+\frac12, +}(p,r) \right| &
 \leq
 \frac{C}{\Gamma(\gamma_j)}\left[ p(2pr)^{\gamma_j-1} + p^2 r(\gamma_j-1)  (2pr)^{\gamma_j -2} + p^2 r (2pr)^{\gamma_j} \right] \, ,
 \\
 \left|\frac{\partial }{\partial p}\, g_{-(j+\frac12), +}(p,r) \right| &
 \leq
 \frac{C}{\Gamma(\gamma_j)}\left[ (2pr)^{\gamma_j-1} + p r (\gamma_j-1)  (2pr)^{\gamma_j -2} + p r (2pr)^{\gamma_j} \right]  \, ,
 \\
 \left|\frac{\partial }{\partial p}\, f_{-(j+\frac12), +}(p,r) \right| &
 \leq
 \frac{C}{\Gamma(\gamma_j)}\left[ p(2pr)^{\gamma_j} + p^2 r (\gamma_j-1) (2pr)^{\gamma_j -1} + p^2 r (2pr)^{\gamma_j-1} \right] \, ,
\end{split}
\end{equation}
and for $|p|\leq 1$ and negative energies $-\omega(p) < -\me$, we have
\begin{equation}\label{eq:deriv-estimate-2}
\begin{split}
  \left| \frac{\partial }{\partial p}\, g_{j+\frac12, -}(p,r) \right| &
 \leq
 \frac{C}{\Gamma(\gamma_j)}\left[ p(2pr)^{\gamma_j} + p^2 r (\gamma_j-1) (2pr)^{\gamma_j -1} + p^2 r (2pr)^{\gamma_j-1} \right]\, ,
 \\
 \left|\frac{\partial }{\partial p}\, f_{j+\frac12, -}(p,r) \right| &
 \leq
 \frac{C}{\Gamma(\gamma_j)}\left[ (2pr)^{\gamma_j-1} + p r (\gamma_j-1)  (2pr)^{\gamma_j -2} + p r (2pr)^{\gamma_j} \right]\, ,
 \\
 \left|\frac{\partial }{\partial p}\, g_{-(j+\frac12), -}(p,r) \right| &
 \leq
 \frac{C}{\Gamma(\gamma_j)}\left[ p(2pr)^{\gamma_j-1} + p^2 r(\gamma_j-1)  (2pr)^{\gamma_j -2} + p^2 r (2pr)^{\gamma_j} \right]\, ,
 \\
 \left|\frac{\partial }{\partial p}\, f_{-(j+\frac12), -}(p,r) \right| &
 \leq
 \frac{C}{\Gamma(\gamma_j)}\left[ (2pr)^{\gamma_j} + p r(\gamma_j-1)  (2pr)^{\gamma_j -1} + p r(2pr)^{\gamma_j-1} \right]\, .
\end{split}
\end{equation}

The estimates \eqref{eq:deriv-estimate-1} and \eqref{eq:deriv-estimate-2} yield, for $a$ being the operator defined by \eqref{eq:defa}, and for positive energies $\omega(p) > \me$,
\begin{equation}
\begin{split}
 | a \, g_{j+\frac12, +}(p,r) |
 \leq \frac{C}{\Gamma(\gamma_j)}&
 \Big[ \frac{\omega(p)}{p} \left((2pr)^{\gamma_j} + p r(\gamma_j-1)  (2pr)^{\gamma_j -1} + p r(2pr)^{\gamma_j-1} \right) \\
 & +
 \omega(p) (1+\frac{1}{p^2}) p (2pr)^{\gamma_j}) \Big] \, ,
 \\
 |a\, f_{j+\frac12, +}(p,r) |
 \leq \frac{C}{\Gamma(\gamma_j)}&
 \Big[ \frac{\omega(p)}{p} \left(p(2pr)^{\gamma_j-1} + p^2 r(\gamma_j-1)  (2pr)^{\gamma_j -2} + p^2 r(2pr)^{\gamma_j} \right) \\
 & +
 \omega(p) (1+\frac{1}{p^2}) p^2 (2pr)^{\gamma_j}) \Big] \, ,
 \\
 | a \, g_{-(j+\frac12), +}(p,r) |
 \leq \frac{C}{\Gamma(\gamma_j)}&
 \Big[ \frac{\omega(p)}{p} \left( (2pr)^{\gamma_j-1} + pr (\gamma_j-1)  (2pr)^{\gamma_j -2} + p r(2pr)^{\gamma_j} \right) \\
 & +
 \omega(p) (1+\frac{1}{p^2}) p (2pr)^{\gamma_j-1}) \Big] \, ,
 \\
 | a \, f_{-(j+\frac12), +}(p,r) |
 \leq \frac{C}{\Gamma(\gamma_j)}&
 \Big[ \frac{\omega(p)}{p} \left(p(2pr)^{\gamma_j} + p^2 (\gamma_j-1)  (2pr)^{\gamma_j -1} + p^2 r(2pr)^{\gamma_j-1} \right) \\
 & +
 \omega(p) (1+\frac{1}{p^2}) ¨p^2 (2pr)^{\gamma_j}) \Big] \, ,
 \\
\end{split}
\end{equation}
And for negatives energies $-\omega(p) < -\me$, we get the same estimates for $| a \, g_{j+\frac12, -}(p,r) |$, $|a\, f_{j+\frac12, -}(p,r) | $, $|a \, g_{-(j+\frac12), -}(p,r) |$ and $| a \, f_{-(j+\frac12), -}(p,r) |$, respectively for
$| a \, f_{-(j+\frac12), +}(p,r) |$, $|a \, g_{-(j+\frac12), +}(p,r) |$, $|a\, f_{j+\frac12, +}(p,r) | $ and $| a \, g_{j+\frac12, +}(p,r) |$.

Estimates for the second derivatives are given for $(p,r)$ near $(0,0)$ by
\begin{equation}
\begin{split}\label{eq:deriv2nd-estimate-1}
 \left| \frac{\partial ^2}{\partial p^2}\, g_{j+\frac12, +}(p,r) \right|
 & \le \frac{ C \gamma_j^2 }{ \Gamma( \gamma_j ) } p^{\gamma_j-1} r^{\gamma_j},
\end{split}
\end{equation}
\begin{equation}
\begin{split}
 \left| \frac{\partial ^2}{\partial p^2}\, f_{j+\frac12, +}(p,r) \right|
  & \le \frac{ C \gamma_j^2 }{ \Gamma( \gamma_j ) } p^{\gamma_j-1} r^{\gamma_j-1},
\end{split}
\end{equation}
\begin{equation}
\begin{split}
 \left| \frac{\partial ^2}{\partial p^2}\, g_{-(j+\frac12), +}(p,r) \right|
  & \le \frac{ C \gamma_j^2 }{ \Gamma( \gamma_j ) } p^{\gamma_j-2} r^{\gamma_j-1},
\end{split}
\end{equation}
\begin{equation}
\begin{split}\label{eq:deriv2nd-estimate-4}
 \left| \frac{\partial ^2}{\partial p^2}\, f_{-(j+\frac12), +}(p,r) \right|
  & \le \frac{ C \gamma_j^2 }{ \Gamma( \gamma_j ) } p^{\gamma_j-1} r^{\gamma_j-1},
\end{split}
\end{equation}
and the same estimates for negatives energies hold respectively for $\left| \frac{\partial ^2}{\partial p^2}\, f_{-(j+\frac12), -}(p,r) \right|$,
$\left| \frac{\partial ^2}{\partial p^2}\, g_{-(j+\frac12), -}(p,r) \right|$, $\left| \frac{\partial ^2}{\partial p^2}\, f_{j+\frac12, -}(p,r) \right|$ and $\left| \frac{\partial ^2}{\partial p^2}\, g_{j+\frac12, -}(p,r) \right|$\, .

\section{Mourre theory: abstract framework}\label{appendix:Mourre}

In this section, we recall some abstract results from Mourre's theory that were used in Section \ref{sec:LAP}. We work with an extension of the original Mourre theory \cite{Mo} that allows, in particular, the so-called \emph{conjugate operator} to be maximal symmetric (not necessarily self-adjoint). Such an extension was considered in \cite{HuSp} and further refined in \cite{GGM1,GGM2} (see also \cite{FMS,Go}). Here we mainly follow the presentation of \cite{FMS}.

Let $\mathcal{H}$ be a complex separable Hilbert space. Consider a self-adjoint operator $H$ on $\mathcal{H}$ and a symmetric operator $H'$ on $\mathcal{H}$ such that $\D(H) \subset \D( H' )$. Let
\begin{equation*}
\mathcal{G} := \D( | H |^{\frac{1}{2}} ) ,
\end{equation*}
equipped with the norm
\begin{equation*}
\| \varphi \|^2_{ \mathcal{G} } :=  \big \| |H|^{\frac{1}{2}} \varphi \big \|^2 + \| \varphi \|^2.
\end{equation*}
We set
\begin{equation*}
\| \varphi \|^2_{ \mathcal{G}^* } :=  \big \| ( |H| + \1 )^{-\frac{1}{2}} \varphi \big \|^2.
\end{equation*}
The dual space $\mathcal{G}^*$ of $\mathcal{G}$ identifies with the completion of $\mathcal{H}$ with respect to the norm $\| \cdot \|_{\mathcal{G}^*}$, and the operator $H$ identifies with an element of $\mathcal{B}( \mathcal{G} ; \mathcal{G}^* )$, the set of bounded operators from $\mathcal{G}$ to $\mathcal{G}^*$.

Let $A$ be a closed and maximal symmetric operator on $\mathcal{H}$. In particular, the deficiency indices $n_{\mp}=\dim \mathrm{Ker} (A^*\pm i)$ of $A$ obey either $n_+=0$ or $n_-=0$. We suppose that $n_-=0$ so that $-A$  generates a
$\mathrm{C}_0$-semigroup of isometries $\{ W_t \}_{t \ge 0}$ (see  e.g. \cite[Theorem 10.4.4]{Da}). Recall  that a $\mathrm{C}_0$-semigroup on $[ 0 , \infty )$ is, by definition, a map $t \mapsto W_t \in \mathcal{B}( \mathcal{ H } )$ such that $W_0 = \1$, $W_tW_s=W_{t+s}$ for $t,s\ge0$, and $\mathrm{w}\text{-}\lim_{t\to 0^+} W_t = \1$, where $\mathcal{B}( \mathcal{H})$ denotes the set of bounded operators on $\mathcal{H}$ and $\mathrm{w}\text{-}\lim$ stands for weak limit. The fact that $-A$ is the generator of the $\mathrm{C}_0$-semigroup $\{ W_t \}_{t \ge 0}$ means that
\begin{align*}
& \mathfrak{D} (A) = \big \{ u \in \mathcal{H} ,\lim_{t \to 0^+} ( i t )^{-1}
( W_t u - u ) \text{ exists} \big \} , \\
&-iAu = \lim_{t \to 0^+} t^{-1}
( W_t u - u ).
\end{align*}
We make the following hypotheses.
\begin{hypo}\label{hyp:M1}
For all $t > 0$, $W_t$ and $W_t^*$ preserve $\mathcal{G}$ and, for all $\varphi \in \mathcal{G}$,
\begin{equation*}
\sup_{0<t<1} \| W_t \varphi \|_\mathcal{G} <  \infty, \quad \sup_{0<t<1} \| W_t^* \varphi \|_\mathcal{G} < \infty.
\end{equation*}
In particular, $t \mapsto W_t |_\mathcal{G} \in \mathcal{B}( \mathcal{G} )$ is a $\mathrm{C}_0$-semigroup, and the extension of $W_t$ to $\mathcal{G}^*$ (which will be denoted by the same symbol) defines a $\mathrm{C}_0$-semigroup on $\mathcal{B}( \mathcal{G}^* )$ (see \cite[Remark 1.4.1)]{FMS}. Their generators are denoted by $A_\mathcal{G}$ and $A_{\mathcal{G}^*}$, respectively.
\end{hypo}
\begin{hypo}\label{hyp:M2}
The operator $H \in \mathcal{B}( \mathcal{G} ; \mathcal{G}^* )$ is of class $\mathrm{C}^1( A_{ \mathcal{G} } ; A_{\mathcal{G}^*} )$, meaning that there exists a positive constant $\mathrm{C}$ such that, for all $0 \le t \le 1$,
\begin{equation*}
\| W_{t} H - H W_{t} \|_{ \mathcal{B} ( \mathcal{G} ; \mathcal{G}^* ) } \le \mathrm{C} t .
\end{equation*}
Moreover, for all $\varphi \in \D( H )$,
\begin{equation*}
\lim_{ t \to 0^+ } \big ( \langle \varphi , W_t H \varphi \rangle - \langle H \varphi , W_t \varphi \rangle \big ) = \langle \varphi , H' \varphi \rangle.
\end{equation*}
\end{hypo}
\begin{prop}\label{prop:B3}
Suppose that Hypothesis \ref{hyp:M1} holds and that the sesquilinear form $[H,iA]$ defined on $\mathfrak{D}(A) \cap \mathcal{G}$ by
\begin{equation*}
\langle u , [ H , i A ] v \rangle := i \langle u , H A v \rangle - i \langle A^* u , H v \rangle ,
\end{equation*}
extends to a bounded quadratic form on $\mathcal{G}$. Then $H$ is of class $\mathrm{C}^1( A_{ \mathcal{G} } ; A_{\mathcal{G}^*} )$ in the sense Hypothesis \ref{hyp:M2}.
\end{prop}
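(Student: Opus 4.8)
The plan is to derive the $\mathrm{C}^1$ property from the boundedness of the commutator form by an explicit integration, following the standard argument from \cite{ABG} adapted to the semigroup setting of \cite{FMS}. First I would observe that, since $W_t$ and $W_t^*$ preserve $\mathcal{G}$ by Hypothesis \ref{hyp:M1}, the operator $W_t H - H W_t$ is a priori well-defined in $\mathcal{B}(\mathcal{G};\mathcal{G}^*)$ (interpreting $H\in\mathcal{B}(\mathcal{G};\mathcal{G}^*)$ and using that $W_t$ extends to $\mathcal{G}^*$), and that the goal is the bound $\|W_tH-HW_t\|_{\mathcal{B}(\mathcal{G};\mathcal{G}^*)}\le \mathrm{C}t$ together with the stated weak limit. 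The key identity is the fundamental-theorem-of-calculus representation
\begin{equation*}
\langle u, (W_t H - H W_t) v\rangle = \int_0^t \langle W_s^* u, i[H,iA] W_s v\rangle \,\mathrm{d}s ,
\end{equation*}
valid first for $u,v$ in a suitable dense core, e.g. $u,v\in\mathfrak{D}(A)\cap\mathcal{G}$ with $W_sv,W_s^*u$ staying in $\mathfrak{D}(A)\cap\mathcal{G}$; here one differentiates $s\mapsto\langle W_s^*u, H W_{t-s}v\rangle$ and uses that $\tfrac{\mathrm{d}}{\mathrm{d}s}W_sv = -iAW_sv$ and $\tfrac{\mathrm{d}}{\mathrm{d}s}W_s^*u=-iA^*W_s^*u$ in the appropriate topologies.

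Next I would estimate the right-hand side: since $[H,iA]$ extends to a bounded quadratic form on $\mathcal{G}$, say with norm $M$, and since by Hypothesis \ref{hyp:M1} the semigroups $W_s,W_s^*$ are uniformly bounded on $\mathcal{G}$ for $s\in(0,1)$, say by $M'$, we get
\begin{equation*}
\big|\langle u,(W_tH-HW_t)v\rangle\big| \le \int_0^t M\, \|W_s^*u\|_{\mathcal{G}}\,\|W_sv\|_{\mathcal{G}}\,\mathrm{d}s \le M (M')^2 t\,\|u\|_{\mathcal{G}}\|v\|_{\mathcal{G}} ,
\end{equation*}
for $0\le t\le 1$, which is exactly the $\mathrm{C}^1$ bound with $\mathrm{C}=M(M')^2$. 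A density argument then extends this to all $u,v\in\mathcal{G}$: one needs that $\mathfrak{D}(A)\cap\mathcal{G}$ (or the subset on which the differentiation is justified) is dense in $\mathcal{G}$, which follows because $\mathfrak{D}(A)$ is a core issue handled by regularizing with the semigroup — replacing $v$ by $\frac1\varepsilon\int_0^\varepsilon W_\sigma v\,\mathrm{d}\sigma$, which lies in $\mathfrak{D}(A_\mathcal{G})$ and converges to $v$ in $\mathcal{G}$ as $\varepsilon\to0$, and similarly for $u$ with $W_\sigma^*$.

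Finally, for the weak-limit statement, I would take $u=v=\varphi\in\mathfrak{D}(H)\subset\mathcal{G}$ and compute $\lim_{t\to0^+}\langle\varphi,W_tH\varphi\rangle - \langle H\varphi,W_t\varphi\rangle$. Using the integral representation again,
\begin{equation*}
\frac1t\big(\langle\varphi,W_tH\varphi\rangle-\langle H\varphi,W_t\varphi\rangle\big) = \frac1t\int_0^t \langle W_s^*\varphi, i[H,iA]W_s\varphi\rangle\,\mathrm{d}s ,
\end{equation*}
and since $s\mapsto\langle W_s^*\varphi,i[H,iA]W_s\varphi\rangle$ is continuous at $s=0$ with value $\langle\varphi,i[H,iA]\varphi\rangle=\langle\varphi,H'\varphi\rangle$ (continuity coming from strong continuity of $W_s$ on $\mathcal{G}$ and boundedness of the form), the average converges to $\langle\varphi,H'\varphi\rangle$. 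The main obstacle I anticipate is the careful justification of the differentiation identity and of the density step in the $\mathcal{G}$-topology rather than the $\mathcal{H}$-topology: one must check that $W_s v\in\mathfrak{D}(A)$ for $v$ in a dense subset, that the derivative $\tfrac{\mathrm{d}}{\mathrm{d}s}W_sv=-iAW_sv$ holds in $\mathcal{G}^*$ (or that $HW_sv$ is differentiable), and that all quantities remain in the right spaces; this is essentially \cite[Lemma 1.4 and around]{FMS} and is where the semigroup (rather than group) nature of $W_t$ requires some care, but no genuinely new difficulty arises.
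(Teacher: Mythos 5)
The paper states Proposition~\ref{prop:B3} without proof, deferring to the abstract Mourre framework of \cite{FMS} and \cite{GGM1}, so there is no argument in the paper itself to compare against; your sketch is the standard literature argument and is correct in outline. The three ingredients you identify — the integral representation obtained by differentiating $s\mapsto\langle W_s^*u,HW_{t-s}v\rangle$, the estimate via Hypothesis~\ref{hyp:M1} together with the boundedness of the form $[H,iA]$ on $\mathcal G$, and the density step via semigroup regularization — are exactly what is needed, and your averaging argument correctly produces the weak limit (you also correctly reinstate the $t^{-1}$ that is missing, apparently by typo, in the paper's statement of Hypothesis~\ref{hyp:M2}).

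Two intermediate formulas carry slips that should be repaired. With the paper's convention $-iAu=\lim_{t\to0^+}t^{-1}(W_tu-u)$, the generator of the adjoint semigroup $W_t^*$ is $+iA^*$, so $\frac{\d}{\d s}W_s^*u=+iA^*W_s^*u$, opposite to the sign you wrote. Also, differentiating the function $F(s)=\langle W_s^*u,HW_{t-s}v\rangle$ that you chose and using the paper's definition of the form $[H,iA]$ gives
\begin{equation*}
\langle u,(W_tH-HW_t)v\rangle=\int_0^t\big\langle W_s^*u,\,[H,iA]\,W_{t-s}v\big\rangle\,\d s ,
\end{equation*}
that is, with no additional factor of $i$ in front of $[H,iA]$ and with $W_{t-s}v$ rather than $W_sv$ in the second slot, whereas your displayed identity has $i[H,iA]$ and $W_sv$. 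Neither correction affects the conclusion: both $W_s$ and $W_{t-s}$ are uniformly $\mathcal G$-bounded for $0\le s\le t\le1$ by Hypothesis~\ref{hyp:M1}, so the $\mathcal O(t)$ bound stands, and in the averaged limit both arguments converge strongly in $\mathcal{G}$ to $\varphi$, so the integrand still tends to $\langle\varphi,[H,iA]\varphi\rangle=\langle\varphi,H'\varphi\rangle$. Thus the argument goes through as planned once the bookkeeping is straightened out.
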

Under Hypotheses \ref{hyp:M1} and \ref{hyp:M2}, we have the following version of the virial theorem.
\begin{theo}[Virial Theorem]\label{prop:virial}
Assume Hypotheses \ref{hyp:M1} and \ref{hyp:M2}. For any eigenstate $\varphi$ of $H$, we have that
\begin{equation*}
\langle \varphi , H' \varphi \rangle = 0.
\end{equation*}
\end{theo}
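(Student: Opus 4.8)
The plan is to exploit the $\mathrm{C}^1$-regularity of $H$ with respect to the $\mathrm{C}_0$-semigroup $\{W_t\}_{t\ge 0}$ to make sense of $\langle\varphi, H'\varphi\rangle$ as the limit of difference quotients, and then to kill that limit by a symmetry argument on an eigenstate. First I would fix an eigenstate $\varphi\in\D(H)$ with $H\varphi=\lambda\varphi$ for some $\lambda\in\mathbb{R}$. Since $\D(H)\subset\mathcal{G}$ and $W_t$ preserves $\mathcal{G}$ by Hypothesis~\ref{hyp:M1}, the quantity $\langle\varphi, W_t H\varphi\rangle - \langle H\varphi, W_t\varphi\rangle$ is well defined for every $t\in[0,1]$, and by Hypothesis~\ref{hyp:M2} it converges, as $t\to 0^+$, to $\langle\varphi, H'\varphi\rangle$. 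So it suffices to show that this difference is identically zero for every $t>0$ (or at least that its limit vanishes).

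The key step is the following computation: using $H\varphi=\lambda\varphi$ in both slots,
\begin{equation*}
\langle\varphi, W_t H\varphi\rangle - \langle H\varphi, W_t\varphi\rangle = \lambda\langle\varphi, W_t\varphi\rangle - \lambda\langle\varphi, W_t\varphi\rangle = 0,
\end{equation*}
where I have used that $\lambda$ is real so that $\overline{\lambda}=\lambda$ and that the pairing $\langle H\varphi, W_t\varphi\rangle$ means the $\mathcal{G}^*$--$\mathcal{G}$ duality applied to $H\varphi\in\mathcal{H}\subset\mathcal{G}^*$, which on the subspace $\mathcal{H}$ coincides with the scalar product. Passing to the limit $t\to 0^+$ and invoking Hypothesis~\ref{hyp:M2} then gives $\langle\varphi, H'\varphi\rangle=0$, which is the claim.

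The only genuine subtlety — and the point I would be most careful about — is the identification of the various pairings: one must check that $\langle\varphi, W_t H\varphi\rangle$ (a priori the $\mathcal{G}$--$\mathcal{G}^*$ duality, since $H\varphi$ is viewed in $\mathcal{G}^*$ and $W_t\varphi\in\mathcal{G}$, after moving $W_t$ appropriately) really does reduce to the Hilbert-space inner product $\lambda\langle\varphi, W_t\varphi\rangle_{\mathcal{H}}$ when $\varphi\in\D(H)$, so that the eigenvalue equation can be used. This is exactly where the compatibility of the embeddings $\mathcal{G}\hookrightarrow\mathcal{H}\hookrightarrow\mathcal{G}^*$ with the action of $H$ and of $W_t$ enters, and it is guaranteed by the construction of $\mathcal{G}^*$ as the completion of $\mathcal{H}$ together with the fact that $H\in\mathcal{B}(\mathcal{G};\mathcal{G}^*)$ extends the operator $H$ on $\D(H)$; once this bookkeeping is in place the argument is immediate. (This is the standard abstract virial argument, adapted from \cite{ABG,Mo} to the non-self-adjoint conjugate operator setting as in \cite{FMS,GGM1,GGM2}.)
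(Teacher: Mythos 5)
Your proof is correct and is the same argument used in the abstract framework the paper cites (\cite{FMS}, \cite{GGM1}) and implicitly relies on here: since $H\varphi=\lambda\varphi$ with $\lambda$ real and $H\varphi\in\mathcal{H}$, both pairings appearing in Hypothesis \ref{hyp:M2} reduce to the Hilbert-space inner product, so $\langle\varphi,W_tH\varphi\rangle-\langle H\varphi,W_t\varphi\rangle=\lambda\langle\varphi,W_t\varphi\rangle-\lambda\langle\varphi,W_t\varphi\rangle=0$ for every $t$, and Hypothesis \ref{hyp:M2} then forces $\langle\varphi,H'\varphi\rangle=0$. One remark: the second clause of Hypothesis \ref{hyp:M2} should presumably carry a factor $t^{-1}$ in front of the difference (as in \cite{FMS}; without it the first clause makes the limit trivially zero for every $\varphi\in\D(H)$), but your argument is insensitive to this, since a difference that vanishes identically in $t$ also has vanishing difference quotients.
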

The limiting absorption principle stated in Theorem \ref{thm:abstractLAP} below requires some more regularity of $H$ with respect to $A$:
\begin{hypo}\label{hyp:M3}
The operator $ H \in \mathcal{B}( \mathcal{G} ; \mathcal{G}^* )$ is of class $\mathrm{C}^{1,1}( A_{ \mathcal{G} } ; A_{\mathcal{G}^*} )$, i.e.
\begin{equation*}
\int_0^1 \big \| [ W_t  ,[ W_t , H ] ] \big \|_{ \mathcal{B} ( \mathcal{G} ; \mathcal{G}^* ) } \frac{ \d t }{ t^2 } < \infty.
\end{equation*}
\end{hypo}
We recall that $\langle A \rangle = (1+ A^* A )^{1/2} = (1 + |A|^2 )^{1/2}$ for any closed operator $A$.  Our last hypothesis is a version of a strict Mourre estimate.
\begin{hypo}\label{hyp:M5}
There exist an open interval $I \subset \mathbb{R}$ and constants $\mathrm{c}_0>0$, $\mathrm{C} \in \mathbb{R}$, such that, in the sense of quadratic forms on $\mathcal{D}( H )$,
\begin{equation}\label{eq:Mourre_estimate}
 H' \ge \mathrm{c}_0 \1 - \mathrm{C} \1_I^\perp(H) \langle H \rangle ,
\end{equation}
where $\1_I^\perp(H) := \1 - \1_I(H)$.
\end{hypo}
The following theorem shows that a limiting absorption principle holds for $H$ in any compact interval where a Mourre estimate is satisfied in the sense of Hypothesis \ref{hyp:M5}. The proof of Theorem \ref{thm:abstractLAP} can be found in \cite{GGM1} (see also \cite{HuSp} for a similar result under slightly stronger assumptions).
\begin{theo}[Limiting absorption principle]\label{thm:abstractLAP}
Assume that Hypotheses \ref{hyp:M1}, \ref{hyp:M3} and \ref{hyp:M5} hold. Let $J \subset I$ be a compact interval, where $I$ is given by Hypothesis \ref{hyp:M5}, and let
\begin{equation*}
\tilde J = \{ z \in \mathbb{C} , \mathrm{Re} \, z\in J , 0<|\mathrm{Im} \, z|\leq 1\}.
\end{equation*}
For any $1/2<s\le 1$, we have that
  \begin{equation*}
    \sup_{z\in \tilde J}\| \langle A \rangle^{-s} (H-z)^{-1} \langle A \rangle^{-s}\|<\infty ,
  \end{equation*}
and the map $z\mapsto \langle A \rangle^{-s} (H-z)^{-1} \langle A \rangle^{-s} \in \mathcal B (\mathcal H)$ is uniformly H\"older continuous of order $s-1/2$ on $\tilde J$. In particular, the limits
  \begin{equation*}
  \langle A \rangle^{-s} ( H - \lambda - i 0^{\pm})^{-1} \langle A \rangle^{-s} := \lim_{\varepsilon \to 0^{\pm} } \langle A \rangle^{-s} (H-\lambda-i\varepsilon)^{-1} \langle A \rangle^{-s},
  \end{equation*}
exist in the norm topology of $\mathcal{B}( \mathcal{H} )$, uniformly in $\lambda \in J$. This implies that the spectrum of $H$ in $J$ is purely absolutely continuous. Moreover, the map $\lambda\mapsto  \langle A \rangle^{-s} (H-\lambda- i0^\pm)^{-1} \langle A \rangle^{-s} \in \mathcal B (\mathcal H)$ is uniformly H\"older continuous of order $s-1/2$ on $J$.
\end{theo}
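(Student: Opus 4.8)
The plan is to run Mourre's differential-inequality argument in the $\mathrm{C}_0$-semigroup formulation of \cite{HuSp} and \cite{GGM1,GGM2}; in fact the statement is precisely what is established there, so I would invoke that argument, whose structure I sketch here. The first step is a reduction to a \emph{strict} local Mourre estimate. Fix a compact $J\subset I$ and choose $\theta\in\mathrm{C}_0^\infty(I)$ with $0\le\theta\le 1$ and $\theta\equiv 1$ on a neighbourhood of $J$. Multiplying \eqref{eq:Mourre_estimate} on both sides by $\theta(H)$ and commuting $\theta(H)$ through the form $H'$ — which is legitimate since $H\in\mathrm{C}^1(A_{\mathcal G};A_{\mathcal G^*})$, a consequence of Hypothesis \ref{hyp:M3}, the error being controlled by Helffer--Sj\"ostrand calculus exactly as in the proof of Lemma \ref{lm:Hext} — and using that $\mathrm{supp}\,\theta\subset I$ forces $\theta(H)\1_I^\perp(H)=0$, the remainder term in \eqref{eq:Mourre_estimate} disappears and one obtains the strict estimate $\theta(H)\,H'\,\theta(H)\ge \mathrm{c}_0\,\theta(H)^2$ as quadratic forms on $\D(H)$.

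Next comes the weighted resolvent bound. For $\lambda\in J$ and $0<\mu\le 1$ put $R_z=(H-z)^{-1}$ with $z=\lambda+i\mu$, and note that $\langle A\rangle^{-s}$, $1/2<s\le 1$, is well defined because $A$ is maximal symmetric. Following Mourre one introduces an auxiliary quantity $G(\mu)=\langle\varphi,R_z^{\,*}\,B\,R_z\,\varphi\rangle$, where $B$ is built from $\theta(H)$ and from the semigroup-regularised conjugate operator, differentiates it in $\mu$ using $\partial_\mu R_z=iR_z^2$, and uses the strict Mourre estimate of the first step together with the identity $[H,iA]=H'$ — interpreted as a bounded form $\mathcal G\to\mathcal G^*$ via Hypothesis \ref{hyp:M3} — to produce a differential inequality for $G$. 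Combined with the elementary bound $\|R_z\|\le \mu^{-1}$, and using that the restriction $s>1/2$ is exactly what makes the resulting inequality integrable down to $\mu=0$, one integrates to get $\sup_{0<\mu\le 1}|G(\mu)|<\infty$ uniformly in $\lambda\in J$, which is the asserted bound $\sup_{z\in\tilde J}\|\langle A\rangle^{-s}R_z\langle A\rangle^{-s}\|<\infty$ (the half-plane $\mathrm{Im}\,z<0$ being treated symmetrically, using $W_t^*$).

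Once the uniform bound is available, one controls differences $\langle A\rangle^{-s}(R_z-R_{z'})\langle A\rangle^{-s}$ by inserting resolvent identities and re-running the commutator estimates; this is the point where the full force of $\mathrm{C}^{1,1}$, as opposed to mere $\mathrm{C}^1$, is needed, and it is naturally organised through the real-interpolation characterisation \eqref{def:Cthetaq} of $\mathrm{C}^{1,1}$. This yields uniform H\"older continuity of order $s-1/2$ of $z\mapsto\langle A\rangle^{-s}R_z\langle A\rangle^{-s}$ up to the real axis, hence existence of the boundary values $\langle A\rangle^{-s}(H-\lambda-i0^\pm)^{-1}\langle A\rangle^{-s}$ uniformly in $\lambda\in J$ and their H\"older continuity. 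Absolute continuity of $\mathrm{spec}(H)\cap J$ then follows from the standard fact that $\langle\,\langle A\rangle^{-s}\varphi,\ \mathrm{Im}\,R_{\lambda+i0}\ \langle A\rangle^{-s}\varphi\,\rangle$ is a bounded density for the spectral measure of $H$ associated with $\langle A\rangle^{-s}\varphi$, and such vectors are dense in $\mathcal H$.

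The principal obstacle throughout is the non-self-adjointness of $A$: the bounded regularisations $A(1+i\varepsilon A)^{-1}$ and the unitary group $e^{itA}$ are unavailable, so the construction of the observable $B$ and every commutator manipulation in the differential-inequality step must be carried out inside the $\mathrm{C}_0$-semigroup calculus, keeping careful track of the fact that $W_t$ and $W_t^*$ preserve the form domain $\mathcal G$ (Hypothesis \ref{hyp:M1}) and reading all commutators as bounded maps $\mathcal G\to\mathcal G^*$. Together with the requirement of working at the minimal $\mathrm{C}^{1,1}$ regularity, this is exactly the refinement accomplished in \cite{GGM1} (see also \cite{HuSp,GGM2}), which I would cite for the full details.
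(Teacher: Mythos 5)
Your proposal matches the paper's treatment: the paper supplies no proof of Theorem~\ref{thm:abstractLAP} but simply refers to \cite{GGM1} (and \cite{HuSp}), noting in the remark that \cite{GGM1} works under the stronger $\mathrm{C}^2$ hypothesis yet the argument there extends to $\mathrm{C}^{1,1}$ — exactly as you observe. Your additional sketch of the Mourre differential-inequality mechanism in the $\mathrm{C}_0$-semigroup setting is a faithful outline of what those references do, so this is essentially the same approach as the paper, with more exposition.
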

\begin{rema}
\begin{itemize}
\item[1)] Theorem \ref{thm:abstractLAP} is established in \cite{GGM1} in the more general context of \emph{singular Mourre theory}. More precisely, as shown in \cite{GGM1}, the assumption that the commutator $H'$ is relatively bounded with respect to $H$ can be relaxed. This is of fundamental importance for the application to massless quantized fields considered in \cite{GGM2}, but is not needed for the model studied in the present paper. Therefore, we content ourselves with the simpler setting of \emph{regular Mourre theory} (i.e. we suppose that $H'$ is $H$-bounded).
\item[2)] The results in \cite{GGM1} are formulated under a stronger assumption than Hypothesis \ref{hyp:M3}, namely that $H \in \mathrm{C}^{2}( A_{ \mathcal{G} } ; A_{\mathcal{G}^*} )$. Nevertheless, as mentioned in \cite{GGM1}, one can verify that Hypothesis \ref{hyp:M3} is sufficient for Theorem \ref{thm:abstractLAP} to hold.
\item[3)] By Fourier transform, Theorem \ref{thm:abstractLAP} implies the local decay property
\begin{equation*}
\big \| \langle A \rangle^{-s} e^{ - i t H } \chi ( H ) \langle A \rangle^{-s} \big \| = \mathcal{O}( \langle t \rangle^{ - s + \frac12 } ) ,
\end{equation*}
for any $\chi \in \mathrm{C}_0^\infty( I ; \mathbb{R} )$ and $1/2 < s \le 1$.
\end{itemize}
\end{rema}

\section{Creation and annihilation operators in Fermi-Fock space}\label{appendixC}

Let ${\mathfrak G} $ be any separable Hilbert space. Let $\otimes_a^n
{\mathfrak G}$ denotes the antisymmetric $n$-th tensor power of ${\mathfrak G}$,
appropriate to Fermi-Dirac statistics. We define the Fermi-Fock space over
${\mathfrak G}$, denoted by ${\mathfrak F}_a({\mathfrak G})$, to be the direct
sum
 $$
  {\mathfrak F}_a({\mathfrak G}) = {
  \displaystyle\mathop \bigoplus_{n=0}^\infty} \, \otimes_a^n {\mathfrak G} ,
 $$
where, by definition, we have set $\otimes_a^0{\mathfrak G} := \C$. We shall denote by $\Omega_a$ the
vacuum vector in ${\mathfrak F}_a({\mathfrak G})$, i.e., the vector
 $(1,0,0,\cdots)$.

Let ${\mathfrak F}_{a}$ be the Fermi-Fock space over  $\mathfrak{
H}_{c}$,
$$
 {\mathfrak F}_{a} : = {\mathfrak F}_{a}(\gH_c)\, .
$$
The Fermi-Fock space for electrons and positrons, denoted
by ${\mathfrak F}_D$, is the following Hilbert space
\begin{eqnarray}
{\mathfrak F}_D := {\mathfrak F}_{a} \otimes
{\mathfrak F}_{a}\, .
\end{eqnarray}
We denote by $\Omega_D :=  \Omega_a \otimes \Omega_a$ the vacuum of electrons and positrons.
One has
 $$ {\mathfrak F}_D = \displaystyle \mathop \oplus_{r,s=0}^\infty
 {\mathfrak F}_a^{(r,s)} \, ,
 $$
where ${\mathfrak F}_a^{(r,s)} := (\otimes_a^r  {\mathfrak H}_{c}) \otimes
(\otimes_a^s  {\mathfrak H}_{c})$.

For every $\varphi \in \mathfrak{H}$ we define in ${\mathfrak F}_a(\mathfrak{H})$ the annihilation operator, denoted by $b(\varphi)$ as:
\begin{equation*}
b( \varphi ) \Omega = 0 ,
\end{equation*}
and, for any $n \in \mathbb{N}$,
\begin{eqnarray*}
&& b(\varphi)\, (A_{n+1}(\varphi_1 \otimes \ldots \otimes \varphi_{n+1}))\\
\noalign{\vskip 8pt} && \qquad = \frac{\sqrt{n+1}}{(n+1)!} \sum_\sigma
sgn(\sigma)\, (\varphi, \varphi_{\sigma(1)})\, \varphi_{\sigma(2)}\otimes
\ldots \otimes \varphi_{\sigma(n+1)}
\end{eqnarray*}
where $\varphi_i \in {\mathfrak H}$.  Note that the operator $b(\varphi)$ maps $\otimes_a^{n+1} {\mathfrak H}$ to $\otimes_a^n {\mathfrak H}$. It extends by linearity to a bounded operator on $\mathfrak{F}_a({\mathfrak H})$.

The creation  operator, denoted by $b^*(\varphi)$, is the adjoint of
$b(\varphi)$. The operators $b^*(\varphi)$ and $b(\varphi)$ satisfy $\|b(\varphi)\| = \|b^*(\varphi)\| =
\|\varphi\|$.

We now define the annihilation and creation operators in the Fermi-Fock space ${\mathfrak F}_D$ for electrons and positrons.

We first define the creation and annihilation operators for the electrons. For any $g\in \mathfrak{H}_{c}$, we define in ${\mathfrak F}_D = {\mathfrak F}_a \otimes \mathfrak{F}_a$ the
annihilation operator, denoted by $b_+(g)$, as
\begin{equation*}
b_+(g) := b(g) \otimes \1 .
\end{equation*}
Observe that $b_+(g)$ maps ${\mathfrak F}_a^{(r+1,s)}$ into ${\mathfrak F}_a^{(r,s)}$ as follows
\begin{equation}\nonumber
\begin{split}
 & b_+(g) \left(
 A_{r+1}(g_1 \otimes \ldots \otimes g_{r+1}) \otimes A_s(h_1 \otimes \ldots
 \otimes h_s) \right)\\
 & =   \left[ \, b(g) A_{r+1}
 (g_1 \otimes \ldots \otimes g_{r+1})
 \right] \otimes A_s(h_1 \otimes \ldots \otimes h_s)
\end{split}
\end{equation}
The creation operator $b_+^*(g) = b^*(g) \otimes \1$ is the adjoint of $b_+(g)$. The operators $b_+^*(g)$ and
$b_+(g)$ are bounded operators in ${\mathfrak F}_D$.

We set, for every $g \in \mathfrak H_{c}$,
\begin{eqnarray*}
b_{\gamma,+}(g)   &=& b_+  (P_\gamma^+ g)\\ \noalign{\vskip 8pt}
b_{\gamma,+}^*(g)&=& b_+^*(P_\gamma^+ g)
\end{eqnarray*}
where $P_\gamma^+$ is the projection of $\mathfrak H_{c}$ onto the
$\gamma$-th component defined according to \eqref{eq:def-unitary-gamma}.

We next define the creation and annihilation operators for the positrons. For every $h \in \mathfrak H_{c}$, we define in ${\mathfrak F}_D$ the
annihilation operator, denoted by $b_-(h)$, as
\begin{equation*}
b_-(h) := (-1)^{N_{\mathrm{e}}} \otimes b(h) ,
\end{equation*}
where $(-1)^{N_{\mathrm{e}}}$ denotes the bounded operator on $\mathfrak{F}_a$ defined by its restriction to $\otimes_a^r \mathfrak{h}_c$ as $(-1)^{N_{\mathrm{e}}} u = (-1)^r u$ for any $u \in \otimes_a^r \mathfrak{h}_c$.

In other words, $b_-(h)$ maps ${\mathfrak F}_a^{(r,s+1)}$ into ${\mathfrak F}_a^{(r,s)}$ as follows:
\begin{equation}\nonumber
\begin{split}
 & b_-(h) ( A_r(g_1 \otimes
 \ldots\otimes g_r) \otimes A_{s+1}(h_1 \otimes \ldots \otimes h_{s+1}))\\
 & =  A_r(g_1
 \otimes \ldots \otimes g_r) \otimes [(-1)^{r} b(h) A_{s+1}(h_1 \otimes \ldots
 \otimes h_{s+1})]
\end{split}
\end{equation}
The creation operator $b_-^*(h) = (-1)^{N_{\mathrm{e}}} \otimes b^*(h)$ is the adjoint of $b_-(h)$; $b_-^*(h)$ and
$b_-(h)$ are bounded operators in ${\mathfrak F}_D$.

As above, we set, for every $h \in \mathfrak H_{c}$,
\begin{eqnarray*}
b_{\gamma,-}(h)   &=& b_{-}  (P_\gamma^- h)\\ \noalign{\vskip 8pt}
b_{\gamma,-}^*(h) &=& b_-^*(P_\gamma^- h)
\end{eqnarray*}
where $P_\gamma^-$ is the projection of $\mathfrak H_{c}$ onto the
$\gamma$-th component.

A simple computation shows that the following anti-commutation relations hold
\begin{eqnarray*}
 && \{b_{\gamma,\pm}(g_1),  b_{\beta,\pm}^*(g_2)\} = \delta_{\gamma,\beta}(P_\gamma^\pm g_1, P_\gamma^\pm
 g_2)_{L^2(\R _+)}\, ,
\end{eqnarray*}
and
\begin{eqnarray*}
 \{b^{\sharp_1}_{\gamma,+}(g_1), b^{\sharp_2}_{\beta,-}(g_2)\} = 0\, ,
\end{eqnarray*}
where $g_1$, $g_2\in \mathfrak H_{c}$, and $\sharp_i$ ($i=1,2)$ stand either for $*$ or for no symbol.

As in \cite[chapter X]{RS}, we introduce
operator-valued distributions $b_{\gamma,\pm}(p)$ and $b_{\gamma,\pm}^*(p)$ that fulfills
\begin{equation}\nonumber
\begin{split}
 & b_{\gamma,\pm}(g)   = \int_{\R ^+}\
 b_{\gamma,\pm}(p)\,  \overline{(P_\gamma^\pm g)\, (p)} \, \d p\\
 & b_{\gamma,\pm}^*(g) = \int_{\R ^+}\ b_{\gamma,\pm}^*(p)\,
 (P_\gamma^\pm g)\, (p)\, \d p
\end{split}
\end{equation}
where $g\in \mathfrak H_{c}$.

We also define for $\xi=(p,\gamma)$,
$$
 b^\sharp_\pm(\xi) := b^\sharp_{\gamma,\pm}(p)\, .
$$
Note that with the notation of \eqref{eq:notation-int-dxi}, we have
\begin{equation}\nonumber
\begin{split}
 & b_{\pm}^\sharp(g)   = \int  b_{\pm}^\sharp(\xi)   \overline{ g (\xi)} \, \d \xi \, .
\end{split}
\end{equation}

We now give a representation of $b_{\gamma,\pm}(p)$ and $b_{\gamma,\pm}^*(p)$.
Recall that $\D_D$ denote the set of smooth vectors $\Phi \in {\mathfrak F}_D$
 for which $\Phi^{(r,s)}$ has a compact  support and $\Phi^{(r,s)} =
0$ for all but finitely
 many $(r,s)$.

For every $\xi_1=(p,\gamma)$, $b_{+}(\xi_1)$ maps ${\mathfrak F}_a^{(r+1,s)}
\cap \D_D$ into
 ${\mathfrak F}_a^{(r,s)} \cap \D_D$ and we have
\begin{eqnarray*}
 (b_{+}(\xi_1)\Phi)^{(r,s)}( p_1,\gamma_1,\ldots, p_r,\gamma_r ; p'_1,\gamma'_1,\ldots, p'_s,\gamma'_s)&&\\
  = \sqrt{r+1} \Phi^{(r+1,s)} (p,\gamma, p_1,\gamma_1,\ldots, p_r,\gamma_r ;p'_1,\gamma'_1,\ldots,
 p'_s,\gamma'_s)&&
\end{eqnarray*}
$b_{+}^*(\xi_1)$ is then given by:
\begin{eqnarray*}
 &&(b_{+}^*(\xi_1)\Phi)^{(r+1,s)}(
 p_1,\gamma_1,\ldots, p_{r+1},\gamma_{r+1} ; p'_1,\gamma'_1,\ldots,
 p'_s,\gamma'_s)=\\
 &&\frac{1}{\sqrt{r+1}}\sum_{i=1}^{r+1}(-1)^{i+1}
\delta_{\gamma_i \gamma}\delta(p-p_i)\\
 &&\Phi^{(r,s)}(p_1,\gamma_1,\ldots, \widehat{p_i,\gamma_i},\ldots,
 p_{r+1},\gamma_{r+1} ; p'_1,\gamma'_1,\ldots, p'_s,\gamma'_s)
\end{eqnarray*}
where $\widehat\cdot$ denotes that the i-th variable has to be
omitted.

Similarly, for $\xi_2=(p',\gamma')$, $b_{-}(\xi_2)$ maps ${\mathfrak F}_a^{(r,s+1)} \cap
\D_D$ into ${\mathfrak F}_a^{(r,s)} \cap \D_D$ such that
\begin{equation}\nonumber
\begin{split}
 \lefteqn{(b_{-}(\xi_2)\Phi)^{(r,s)}(
 p_1,\gamma_1,\ldots, p_r,\gamma_r ; p'_1,\gamma'_1,\ldots,
 p'_s,\gamma'_s)=} &\\
 \lefteqn{((-1)^{\Ne}\otimes b(\xi_2)\Phi)^{(r,s)}(
 p_1,\gamma_1,\ldots, p_r,\gamma_r ; p'_1,\gamma'_1,\ldots,
 p'_s,\gamma'_s)=} &\\
 & \sqrt{s+1}(-1)^{r} \Phi^{(r,s+1)}
 (p_1,\gamma_1,\ldots,
 p_r,\gamma_r ; p', \gamma', p'_1,\gamma'_1,\ldots, p'_s,\gamma'_s)
\end{split}
\end{equation}
$b_{-}^*(\xi_2)$ is then given by
\begin{eqnarray*}
 && (b_{-}^*(\xi_2)\Phi)^{(r,s+1)} (
 p_1,\gamma_1,\ldots, p_r,\gamma_r ; p'_1,\gamma'_1,\ldots, p'_{s+1},\gamma'_{s+1})=
 \\
 &&\frac{1}{\sqrt{s+1}}(-1)^{r} \sum_{i=1}^{s+1}(-1)^{i+1}
 \delta_{\gamma',\gamma'_i}\delta(p'-p'_i) \\
 &&\Phi^{(r,s)}(p_1,\gamma_1,\ldots, p_r,\gamma_r ; p'_1,\gamma'_1,\ldots,
\widehat{p'_i,\gamma'_i},\ldots, p'_{s+1},\gamma'_{s+1})
\end{eqnarray*}
Let us recall that $\Phi^{(r,s)}$ is antisymmetric in the
electron and the positron variables separately. We have
\begin{equation}\nonumber
\{b_{\gamma,+}(p), b_{\gamma',+}^*(p')\} = \{b_{\gamma,-}(p),
b_{\gamma',-}^*(p')\} = \delta_{\gamma,\gamma'}\delta(p-p')\, .
\end{equation}
Any other anti-commutators equal zero.

\section*{Acknowledgements}
The research of J.-M. B. and J. F. is supported by ANR grant ANR-12-JS0-0008-01. We thank Jean-Fran{\c{c}}ois Bony and Thierry Jecko for helpful discussions.

%%%%%%%%%%%%%%%%%%%%%%%%%%%%%%%%%%%%%%%%%%%%%%%%%%%%%%%%
%%%%%%%%%%%%%%%%%%%%  BIBLIOGRAPHY %%%%%%%%%%%%%%%%%%%%%
%%%%%%%%%%%%%%%%%%%%%%%%%%%%%%%%%%%%%%%%%%%%%%%%%%%%%%%%
\bibliographystyle{amsalpha}

\end{document}